\newcommand{\eqnum}{\refstepcounter{equation}\textup{\tagform@{\theequation}}}
\newcommand{\cat}[1]{\ensuremath{\mathbf{#1}}}
\newcommand{\set}{\cat{Set}}
\newcommand{\op}[1]{{#1}^{\text{op}}}
\newcommand{\ccat}{\cat{Cat}}
\newcommand{\elements}[1]{\mathbb{E}(#1)}
\newcommand{\contrapower}{\hat{\mathcal P}}
\newcommand{\power}{\mathcal P}
\newcommand{\fpower}{\mathcal {P}_\omega}
\newcommand{\forget}[1]{|#1|}
\newcommand{\klcat}[1]{\cat{Kl}(#1)}
\newcommand{\act}{\mathit{Act}}
\newcommand{\setrel}{\cat{Rel}}
\newcommand{\inverse}[1]{{#1}^{-1}}
\newcommand{\coalg}[2]{\mathbf{Coalg}_{#2}(#1)}
\newcommand{\step}[1]{\xrightarrow {#1}}
\newcommand{\eseq}{\varepsilon}
\newcommand{\id}[1]{\text{id}_{#1}}
\newcommand{\1}{\mathbbm {1}}
\newcommand{\Eq}{\mathrm{Eq}}
\newcommand{\Qu}{\mathbb {Q}}
\newcommand{\sfunctor}{\mathcal S}
\newcommand{\tfunctor}{\mathcal T}
\newcommand{\semantics}[1]{\llbracket #1 \rrbracket}
\newcommand{\theory}[1]{\llparenthesis #1 \rrparenthesis}
\newcommand{\reflector}{\mathfrak {r}}
\newcommand{\parrows}[2]{\pile{\rTo^{#1}\\ \rTo_{#2}}}
\newcommand{\condset}{\mathbb K}
\newcommand{\coklcat}[1]{\cat{coKl}(#1)}
\newcommand{\bisim}{\leftrightarroweq}
\newcommand{\ba}{\cat{BA}}
\newcommand{\emcat}[1]{\cat{EM}(#1)}
\newcommand{\congset}[1]{\mathrm{Cong}(#1)}
\newcommand{\term}{\downarrow}
\newcommand{\fseq}[1]{{#1}^{\star}}
\providecommand*{\twoheadrightarrowfill@}{%
  \arrowfill@\relbar\relbar\twoheadrightarrow
}
\providecommand*{\xtwoheadrightarrow}[2][]{%
  \ext@arrow 0359\twoheadrightarrowfill@{#1}{#2}%
}
\newcommand{\nda}[1]{\act\times{#1} + 1}
\newcommand{\mats}{\mathcal {M}_\mathbb F}
\newcommand{\trace}{\mathbf{tr}}
\newcommand{\veccat}{\mathbf{Vect}}
\begin{document}
\title{Predicate and relation liftings for coalgebras with side effects: an application in coalgebraic modal logic}
%\title{Predicate and relation liftings and modal logics for coalgebras with side effects}
%\title{Predicate and relation liftings for coalgebraic modal logic in the presence of side effects}
%
\titlerunning{Predicate and relation liftings}
% If the paper title is too long for the running head, you can set
% an abbreviated paper title here
%
\author{Harsh Beohar \inst{1} \and
Barbara K\"{o}nig\inst{2} \and
Sebastian K\"{u}pper\inst{3} \and
Christina Mika-Michalski\inst{4}}
\authorrunning{H. Beohar et al.}
% First names are abbreviated in the running head.
% If there are more than two authors, 'et al.' is used.
%
\institute{The University of Sheffield, U.K.\\
\email{h.beohar@sheffield.ac.uk} \and
Universit\"{a}t Duisburg-Essen, Germany \\
\email{barbara\_koenig@uni-due.de} \and
FernUniversit\"{a}t Hagen, Germany \\
\email{sebastian.kuepper@fernuni-hagen.de} \and
Hochschule Rhein-Waal, Germany \\
\email{christina.mika-michalski@hochschule-rhein-waal.de} }
\maketitle              % typeset the header of the contribution
\begin{abstract}
%The abstract should briefly summarize the contents of the paper in 150--250 words.
  We study coalgebraic modal logic to characterise
  behavioural equivalence in the presence of side effects, i.e., when
  coalgebras live in a (co)Kleisli or an Eilenberg-Moore category. Our
  aim is to develop a general framework based on indexed
  categories/fibrations that is common to the
  aforementioned categories. In particular, we show how the
  coalgebraic notion of behavioural equivalence arises from a relation
  lifting (a special kind of indexed morphism) and we give a general
  recipe to construct such liftings in the above three cases. Lastly,
  we apply this framework to derive %generic
  logical
  characterisations for (weighted) language equivalence and
  conditional bisimilarity.
\keywords{(co)Kleisli categories \and Indexed morphisms \and Indexed categories/fibrations.}
\end{abstract}

\section{Introduction}\label{sec:intro}

% Coalgebras \cite{Rut03:universal} are mathematical structures that
% provide an abstract and uniform representation of various state-based
% models of dynamical systems (such as Kripke structure,
% (non)deterministic automata, labelled transition systems etcetera).

Coalgebra \cite{Rut03:universal} offers a categorical framework for
specifying and reasoning about state-based transition systems in a
generic way. In particular, new types of transition systems,
behavioural equivalences (or distances), modal logics and games can be
obtained by suitably instantiating the theory of coalgebras. While
many types of transition systems can already be studied in the
category $\set$, systems with side effects -- leading to a notion of
trace equivalence or conditional bisimilarity -- usually require to
move to a setting beyond $\set$, using Kleisli, coKleisli or
Eilenberg-Moore categories, where the (co)monad specifies the
side-effects.%, a form of implicit branching.

Behavioural equivalences for such scenarios have already been studied
extensively (see e.g. \cite{ABHKMS12,Hasuo2007_GenericTraceSemantics,jacobs:trace_semantics,Jacobs_etal:trace_EM_KL}). Modal logics, on the other hand, have been considered to a lesser extent with side effects \cite{kk:generic-trace-logics,forgetfullogics,dorsch_et_al:LIPIcs:2019:10938}; the emphasis (see the survey \cite{CoalgebraicLogicSurvey_KupkePattinson2011} and the recent articles \cite{GorinSchroeder2013:Lambdabisim,kk:generic-trace-logics,Katsumata2021_QuantitativeModLogic,KupkeRot_ExprLog_2020}) has been on logical characterisation of various notions of bisimulation relations and metrics.
%, %\todo{B: cite anything else?},
%while we are not aware of any work describing coalgebraic behavioural
%equivalence games for trace equivalences.
%
The aim of the present paper is to close this gap %work in an indexed category framework -- strongly related to the notion of fibrations \cite{Hermida_Jacobs:str_coind_fibration} --- that provides the meansto lift predicates and relations and to give a fixpoint characterisation of behavioural equivalence. Concerning modal logics,
%we apply %\todo{H: I think we do not extend, rather we apply it and ensure their sufficient conditions for both adequacy and expressivity are satisfied.}
by applying the dual adjunction setup for fibrations developed by Kupke and Rot
\cite{KupkeRot_ExprLog_2020} to derive logical characterisations for %and instantiate it to
coalgebras with side effects. Recently, another powerful approach
\cite{dorsch_et_al:LIPIcs:2019:10938} based on graded monads has been
developed to handle equivalences from the van Glabbeek spectrum and
beyond. There the characterising notion is not coalgebraic behavioural
equivalence, but a refinement of it, called finite depth behavioural
equivalence. In addition, Kupke and Rot \cite{KupkeRot_ExprLog_2020}
when comparing their work with \cite{dorsch_et_al:LIPIcs:2019:10938}
noted that ``trace equivalences of various kinds covered in
\cite{dorsch_et_al:LIPIcs:2019:10938} cannot be captured directly in
their setup''. Hence we will show how to capture linear notions such as trace, language, and failure equivalences in their setup.

\begin{wrapfigure}[5]{r}{0.45\textwidth}
\vspace{-0.9cm}
\begin{tikzpicture}
\node (e) at (-1,2.5) {$\mathbf E$};
\node (c) at (-1,1) {$\mathbf {Set}$};
\node (l) at (-2.25,1.75) {\eqnum\label{eq:idea}};
\node (ee) at (-4,2.5) {$\mathbf{Coalg}_{\mathbf E} (F_\lambda)$};
\node (cc) at (-4,1) {$\mathbf{Coalg}_{\mathbf {Set}}(F)$};
\path[->,thick]
(ee) edge (e)
(ee) edge (cc)
(e) edge (c)
(cc) edge (c)
(c) edge[loop right] node {$F$} (c)
(e) edge[loop right] node {$F_\lambda$} (e);
\end{tikzpicture}
\end{wrapfigure}
Here, in order to treat these linear notions of equivalence uniformly, we follow the approach of Hermida and Jacobs \cite{Hermida_Jacobs:str_coind_fibration}
to capture bisimulation relations using the language of fibrations \cite{jacobs-fibrations};
%provided a systematic account (based on the theory of fibrations \cite{jacobs-fibrations}) of how to capture bisimulation relations at the level of coalgebras. Since then, fibrations
they have increasingly appeared in the coalgebraic literature
\cite{Bonchi_et_al:UpToFibrations,BonchiEtal_CoinductionFibration,hasuo_kataoka_cho_2018,Jacobs_indexedcat,Klin2005:LeastFibredLifting,CodensityGames,KupkeRot_ExprLog_2020,SprungerEtal_FibBisimQuantitative}.
%
%\begin{diagram}[width=4.5em,height=2em]
%    \coalg{F_\lambda}{\cat E} &\rTo & \cat E &  \rTo_{F_\lambda}& \cat E\\
%    \dTo && \dTo && \dTo & \qquad \eqnum\label{eq:idea}\\
%    \coalg{F}{\set} & \rTo & \set & \rTo^F & \set
%\end{diagram}
%
The general idea is as follows and is illustrated above in
\eqref{eq:idea}. First, a system is modelled as a coalgebra $X
\rTo^\alpha FX$ for some endofunctor $\set \rTo^F \set$. Second a
fibration $\cat E$ of binary relations on the working category of sets
is realised, whose fibres are all the relations on the underlying
state space. %This approach can also be used to capture bisimulation metrics
%, just that $\cat E$ will have fibres as the set of (pseudo)metrics
%on the underlying state space
%\cite{Bonchi_et_al:UpToFibrations,CodensityGames,KupkeRot_ExprLog_2020}.
Third,
a mechanism $\power(X\times X) \rTo^\lambda \power(FX \times FX)$ (aka
relation lifting) %to lift a relation on $X$ to a relation on $FX$
is
defined, which amounts to the lifting of $F$ to an endofunctor
$F_\lambda$ on $\cat E$. Now one can study the coalgebras induced by
$F_\lambda$ and, more importantly, this category
$\coalg{F_\lambda}{\cat E}$ can be again arranged (see
\eqref{eq:idea}) as a fibration on $\coalg{F}{\set}$. Lastly, the
applicability is shown by characterising bisimulation relations
% (or metrics depending on the context)
on $X$ as coalgebras of a certain endofunctor living in the fibre above $(X,\alpha)$.

One of the objectives of this paper is to extend this `categorical' picture \eqref{eq:idea} w.r.t.\thinspace coalgebraic notion of behavioural equivalence for dynamical systems having side effects, i.e., those systems that can be modelled as coalgebras living in a (co)Kleisli or Eilenberg-Moore category for some (co)monad on $\set$. Typical examples, in the context of this paper, are the following: nondeterministic (linear weighted) automata modelled as coalgebras in the Kleisli category for the powerset (multiset) monad (see Sections~\ref{sec:KleisliEx:NDA} and \ref{sec:KleisliEx:LWA}); conditional transition systems (which facilitate formal modelling of software product lines) modelled as coalgebas in the coKleisli category for the writer comonad $\condset\times\_$ (see Section~\ref{sec:CTS}). We left out case studies in Eilenberg-Moore categories in this paper; however, they are shown to satisfy the assumptions of this paper (Section~\ref{sec:EMcats}).

%To address this objective,
We explore the general conditions on relation liftings (technically
they are called indexed morphisms in the paper like the ones indicated
by $\lambda$ in \eqref{eq:idea}) to ensure that behavioural
equivalences
% (just like bisimulation relations)
can be viewed as coalgebras living in the fibre above a given
coalgebra $(X,\alpha)$ with side effects. Another contribution is a
recipe for obtaining %these
relation %liftings from
and predicate liftings (a special type of indexed morphisms) whose
definition and correctness proof are otherwise (at least in the
Kleisli case) quite cumbersome to establish. %Furthermore, these
Predicate liftings are instrumental in providing interpretation to
various modalities for coalgebras in (co)Kleisli or Eilenberg-Moore
categories, just like in the case of $\set$ (cf.\thinspace
\cite{Jacobs_indexedcat,PATTINSON2003177,SCHRODER2008230}).
Technically, our study focuses on lifting an indexed morphism for a given endofunctor $F$ on $\set$ to an indexed morphism for a (co)Kleisli extension/Eilenberg-Moore lifting $\bar F$ of
$F$. %\todo{B: I rewrote the previous sentence. Please check.}.
And to the best of our knowledge, this question is open at least for
coalgebras with side effects.  %As soon as we have established predicate liftings in Kleisli categories, we have assembled the necessary requirements for defining a spoiler-duplicator game that characterises behavioural equivalence for coalgebras living in Kleisli and Eilenberg-Moore categories, while the issue of games is still open for coKleisli categories.

%Another objective is to establish the logical characterisation (in spirit of Hennessy-Milner theorem on bisimilarity) of behavioural equivalence. To this end, there is a considerable amount of research develop in terms of
Once we have captured behavioural equivalence in a fibration, we can
then apply the Kupke-Rot setup \cite{KupkeRot_ExprLog_2020} based on
dual adjunctions (see the survey
\cite{CoalgebraicLogicSurvey_KupkePattinson2011} on coalgebraic modal
logic) to establish the logical characterisation of behavioural
equivalence. In particular, we first construct the Kupke-Rot setup for
behavioural equivalences %(see \eqref{eq:KupkeRot_behequivSetup})
and show that the sufficient conditions for adequacy (i.e.,
behavioural equivalence is contained in logical equivalence) and
expressivity (i.e., converse of adequacy) given in
\cite{KupkeRot_ExprLog_2020} are satisfied. This setup is later used
to derive the logical characterisation for (weighted) language
equivalence and conditional bisimilarity; note that these notions were
not studied in \cite{KupkeRot_ExprLog_2020}.

While several ingredients (especially encompassing fibrations) used in this paper are already known, our
paper contains the following original contributions:
\begin{itemize}
  \item We capture behavioural equivalences on coalgebras beyond $\set$ as a fibred notion by characterising them %and we study behavioural equivalences
      as special types of coalgebras.% beyond {\set}.
  \item We give concrete recipes for defining predicate
     and relation liftings (which is both tedious and error-prone) in (co)Kleisli and Eilenberg-Moore categories.
  \item We extend the dual adjunction framework for fibrations by Kupke and Rot to side effects, in particular to Kleisli categories. Here we need a mechanism to factor %\todo{H: Actually we do not perform factor/quotient in the Kleisli cases, rather we equalise. So can we simply state that reflective categories are helpful in providing a general recipe for satisfying the sufficiency conditions for adequacy and expressivity identified in \cite{KupkeRot_ExprLog_2020} or something similar.}
      the state space of a coalgebra by behavioural equivalence, which is difficult if the category has no coequalisers. We provide a technique based on reflective subcategories to circumvent this issue.
\end{itemize}
This paper is organised as follows. Section~\ref{sec:prelim} sets the
relevant categorical preliminaries required for this paper. It is
assumed that the reader is already familiar with basic category
theory, particularly, how a Kleisli or an Eilenberg-Moore category is
induced by a monad. Section~\ref{sec:rel-lift} introduces the
assumptions that ensure behavioural equivalence is a fibred notion (in
the sense of \eqref{eq:idea}). Section~\ref{sec:CoalgModalLog} is
devoted to coalgebraic modal logic where general adequacy and
expressivity results for behavioural equivalence are derived from
\cite{KupkeRot_ExprLog_2020}. Section~\ref{sec:indMorKleisli} gives
the recipe to construct relation/predicate liftings for coalgebras with side effects. %in Kleisli categories and introduces the game. %from predicate liftings.
In the next sections, the results of this paper are applied in the
context of nondeterministic automata %(see Sections~\ref{sec:EMcats} and \ref{sec:KleisliEx:LWA} for generalised Moore machines and weighted automata respectively)
%\todo{B: for weighted automata refer to appendix.}
and conditional transition systems. Section~\ref{sec:conclusion}
concludes this paper with some discussions on future research. Note
that proofs as well as additional material
on linear weighted and generalised Moore automata in Kleisli and Eilenberg-Moore categories can be found in the appendix. %the technical report \cite{technicalreport}.

\section{Preliminaries}\label{sec:prelim}
\paragraph*{Coalgebraic preliminaries \cite{jacobs2016:coalg:book,Rut03:universal}}
%We begin by recalling some well known preliminaries from the theory of coalgebra \cite{jacobs2016:coalg:book,Rut03:universal}.
Let $\cat C$ be a category and let $\cat C \rTo^F \cat C$ be an endofunctor modelling the branching type of the system of interest. Then the behaviour of a state-based system will be modelled as an $F$-coalgebra (or, simply coalgebra), i.e., as a morphism $X \rTo^\alpha FX$ in the category $\cat C$.

\begin{definition}
  A \emph{coalgebra homomorphism} $f$ between $(X,\alpha)$ and $(Y ,\beta)$ is a morphism $X \rTo^f Y\in\cat C$ satisfying $Ff \circ \alpha = \beta \circ f$. The collection of coalgebras and their homomorphisms forms a category denoted $\coalg F {\cat C}$.
\end{definition}
Moreover, one can define behavioural equivalence on the (concrete) states of a coalgebra under the assumption that there is a functor $\cat C \rTo^{\forget{\_}} \set$. By the concrete state-space of a coalgebra $(X,\alpha)$, we mean the set $\forget X$. Typically, in our case studies, the functor $\forget{\_}$ will be a forgetful functor and will have a left/right adjoint $\iota$. For instance, $\iota$ is left adjoint to $\forget{\_}$ when $\cat C=\klcat{T}$ or $\cat C=\emcat T$ (for some monad $\set \rTo^T \set$), while it is right adjoint when $\cat C=\coklcat G$ for some comonad $\set \rTo^G \set$. %Note that there is a functor $\iota$ left adjoint to $\forget{\_}$ when $\cat C=\klcat{T}$ or $\cat C=\emcat T$. % the underlying category $\cat C$ is concrete over $\set$ (i.e., there is a faithful functor $\cat C \rTo^{\forget{\_}} \set$). In particular,
\begin{definition}\label{def:behequiv}
  Two states $x,x'\in\forget{X}$ of a coalgebra $(X,\alpha)$ are \emph{behaviourally equivalent} iff there is a coalgebra homomorphism $f$ such that $\forget{f} x = \forget{f} x'$.
\end{definition}
\begin{example}\label{ex:nda}
  %Deterministic automata are modelled as $2 \times \_^\act$-coalgebras living in $\set$.
  An interesting example of coalgebras living in Kleisli categories is nondeterministic automata (NDA).
  Following \cite{Hasuo2007_GenericTraceSemantics} an NDA is a coalgebra living in
  $\cat C = \klcat{\power}$, which is isomorphic to the category $\setrel$ of sets as objects
  and relations as maps. Recall that a Kleisli extension $\setrel \rTo^{\bar F} \setrel$ of $\set \rTo^F \set$ (i.e. $\bar F\circ \iota = \iota \circ F$) is in correspondence \cite[Theorem~2.2]{m:lifting-kleisli}  with a distributive law $FT \rTo^\vartheta TF$ such that the following diagrams commute in $\set$.

  \vspace{-0.6cm}
  \begin{diagram}[width=2em,height=2.5em]
    FX & &\rEMap && FX &&& FTTX & \rTo^{\vartheta_{TX}} & TFTX & \rTo^{T\vartheta_X} & TTFX\\
    \dTo^{F\eta_{X}} &&&& \dTo_{\eta_{FX}} &&& \dTo^{F\mu_X} & & & & \dTo_{\mu_{FX}} && \eqnum\label{eq:KLlaw}\\
    FTX & & \rTo^{\vartheta_X}& & TFX &&& FTX & & \rTo^{\vartheta_X} && TFX
  \end{diagram}
  Consider $F\_ = A\times\_ +1$ (where $1 = \{\bullet\}$) with the following distributive law \cite{jacobs:trace_semantics}:

  \vspace{-0.4cm}
  \begin{equation}\label{eq:thetaNDA}
    \act\times \power X + 1 \rTo^{\vartheta_X} \power (\act \times X +
    1) \qquad\quad (a,U) \mapsto \{a\}\times U, \ \bullet \mapsto
    \{\bullet\}.
  \end{equation}
  %This provides us with an extended
  This induces a functor $\setrel\rTo^{\bar{F}}\setrel$
  which acts on a relation $X\rTo^f Y$, seen as a Kleisli arrow
  $X\rTo^{f'}\power{Y}$, as follows:
  $\bar{F}f = \vartheta_Y\circ Ff'$. Notice that
  $\bar{F}$-coalgebras model implicit nondeterminism (i.e. this side-effect is hidden to an outside observer) \cite{Hasuo2007_GenericTraceSemantics}, thus behavioural equivalence typically coincides with \emph{language equivalence} (instead of bisimilarity) in this case.
\end{example}

\vspace{-0.5cm}
\paragraph*{Predicate liftings as indexed morphisms}
%Predicates and their liftings are quite common within the
%literature on (coalgebraic) games and (coalgebraic) modal logic. In particular, a predicate is used as the semantics of a
%logical formula \cite{PATTINSON2003177}, or as a relation on the state space of a coalgebra \cite{Hermida_Jacobs:str_coind_fibration}, or to
%select from states to describe a move of a spoiler/duplicator
%\cite{km:bisim-games-logics-metric}.
Predicates and their liftings are quite common within the literature on (coalgebraic) modal logic. In particular, a predicate is used as the semantics of a logical formula \cite{PATTINSON2003177}, or as a relation on the state space of a coalgebra \cite{Hermida_Jacobs:str_coind_fibration}.
In the basic setting, when $\cat C=\set$, the predicates on a set $X$ are given by the subsets of $X$. Now, given a function $X \rTo^f Y$, a predicate $V$ on $Y$ (i.e.\thinspace $V \subseteq Y$) can be transformed into a predicate on $X$ by the pullback operation $\inverse{f} V\subseteq X$ in $\set$. Note that this operation is functorial in nature; thus this `logical' structure can be organised as a functor $\op{\set} \rTo^\contrapower \ccat$ \cite{Jacobs_indexedcat}, where $\contrapower X$ is the poset $(\power X,\subseteq)$ viewed as a category. As noted by Jacobs in \cite{Jacobs_indexedcat},
\emph{predicate logic on a category is given by an indexed category}
and \emph{predicate liftings are (endo)morphisms of indexed categories}.
%\vspace{-0.2cm}
\begin{definition}
  \label{def:indexed_cat}
  An indexed category is a $\ccat$-valued presheaf, i.e., a
  contravariant functor $\Phi$ from $\cat C$ to $\ccat$. In addition,
  a morphism between two indexed categories
  $\op{\cat C} \rTo^\Phi \ccat$ and $\op{\cat D} \rTo^\Psi \ccat$ is a  pair of a functor $\cat C \rTo^G \cat D$ and a natural
  transformation $\Phi \rTo^\lambda \Psi \circ \op G$.
\end{definition}
\begin{note}
  Often %We use the notation of a reindexing functor to denote
  the application of $\Phi$ on $f\in\cat C$ is denoted as $f^*$. We also omit the use of superscript `$\text{op}$'
  on functors %by writing them as contravariant functors
  and
  use the phrases `indexed morphism' and `predicate lifting' interchangeably. %\todo{B: do this from the very beginning.}
\end{note}
\begin{remark}
  Another, equivalent, way to organise logic is by specifying the fibration of predicates over a category \cite{jacobs-fibrations}. The transformation of a fibration over $\cat C$ into a contravariant pseudofunctor $\cat C \rTo\ccat$ is given by taking the fibres at each object in $\cat C$. Conversely one has to invoke the so-called \emph{Grothendieck construction} to get a fibration, which glues all the fibres $(\Phi X)_{X\in\cat C}$ to form a total category of predicates $\elements \Phi$ defined as follows.

  $
  \infer={(X,U) \in \elements\Phi}{X\in \cat C \land U \in \Phi X}
  \qquad
  \infer={(X,U) \rTo^{f,\bar f} (Y,V) \in \elements{\Phi}}{X \rTo^f Y \in\cat C \land U \rTo^{\bar f} f^*V \in \Phi X}
  $

  \noindent
  Moreover, there is an obvious `forgetful' functor $\elements{\Phi} \rTo^p \cat C$ given by $(X,U) \mapsto X$ that induces a (split)   fibration on $\cat C$ \cite{jacobs-fibrations,Jacobs_indexedcat}. In the parlance of concrete categories, the functor $p$ is topological \cite[Definition~21.1]{JoyOfCats} when $\Phi$ has fibred limits. Often, in applications, the fibres $(\Phi X,\preceq)$ (at each $X\in\cat C$) form a poset (rather than a full-fledged category); we label such an indexed category/fibration as \emph{thin}. We restrict ourselves to thin fibrations in this paper. %and remove the subscript from $\preceq$ whenever it is clear from the context.
\end{remark}

\begin{example}\label{ex:standard_indcat}
  The contravariant powerset functor $\set \rTo^{\contrapower}
  \ccat$ is an example of an indexed category such that $\elements{\contrapower} \rTo \set$ is a bifibration \cite{jacobs-fibrations}.
  This is because the reindexing functor $\inverse{f}$ (for any function $f$) has a left adjoint given by the direct image functor $f_!$. %: $f_!U =\{fx \mid x\in U\}$, for any $U\subseteq \dom f$.
  Moreover, as an example of a predicate lifting, consider
  $F=\power$ over $\cat C =\set$ (which describes the branching type
  of unlabelled transition systems) with
  $\contrapower X \rTo^{\lambda_X} \contrapower \power X$ given by $U \mapsto \power U$.
  It is well known that the above predicate lifting encodes the box
  modality $\Box$ from logic \cite{Jacobs_indexedcat}.
\end{example}

%Note that relation liftings (which again are special indexed morphisms) are used (see \cite{Hermida_Jacobs:str_coind_fibration}) to give a general definition of bisimulation relations on coalgebras living in $\set$.
\section{Behavioural equivalence through indexed morphisms}\label{sec:rel-lift}
Indexed morphisms not only induce modalities of interest in Computer Science; but they can also be used to characterise behavioural equivalence. The original idea \cite{Hermida_Jacobs:str_coind_fibration} is to work with an indexed category $\op\set\rTo^\Psi\ccat$ of binary relations, i.e., $\Psi$ is the composition %$\contrapower\circ (\_\times\_)$. 
$\op{\set} \rTo^{\_ \times \_} \op{\set} \rTo^{\contrapower} \ccat$.
In particular, $\Psi X$ is the set of all relations on $X$. Then, for a relation lifting $\Psi X \rTo^{\lambda_X} \Psi FX$ and a coalgebra $X \rTo^\alpha FX\in\set$, bisimilarity is the largest fixpoint of the functional:
\begin{equation}\label{eq:idea:functional}
  \Psi X \rTo^{\lambda_X} \Psi FX \rTo^{\alpha^*} \Psi X.
\end{equation}
Unfortunately, this idea of working with relations on the concrete
state space immediately does not generalise to coalgebras with side
effects; e.g., in the case of conditional transition systems (CTSs)
viewed as coalgebras living in the coKleisli category of the writer
comonad $\condset \times \_$ (see
Section~\ref{sec:CTS}). The problem essentially lies in %the definition of fibres by requiring 
associating a fibre to be the set of all binary relations on the state space. There are situations (as in CTSs) where the fibres will only be some subset of all the relations on the state space.
As a result, we impose the following restriction:
\begin{enumerate}[label=\textbf{A\arabic*}]
  \item \label{assum:product} our working category $\cat C$ has binary products $\otimes$.
%  \item \label{assum:Efunctor} there is a functor $\cat C \rTo^E \elements{\Psi}$.
\end{enumerate}
Thus we can define an indexed category $\Psi$ of relations as the composition:
\begin{equation}\label{eq:indexedcatRelations}
  \op{\cat C} \rTo^{\_ \otimes \_} \op{\cat C} \rTo^{\forget{\_}} \op{\set} \rTo^{\contrapower} \ccat.
\end{equation}
We view the elements of $\Psi X$ (for some object $X\in C$) as `abstract' relations on $X$. Furthermore, \ref{assum:product} also ensures that for any object $X\in\cat C$ there is a function
\[
\forget{X \otimes X}  \rTo^{\langle\forget {\pi_1^X},\forget{\pi_2^X}\rangle} \forget{X} \times  \forget{X},
\]
where $X\otimes X \rTo^{\pi_1^X,\pi_2^X} X$ are the two projection arrows in $\cat C$. And thanks to these functions, we can define abstract equality in the fibre $\Psi X$. In particular,
\[
\equiv_X \ =\  \inverse{\langle\forget {\pi_1^X},\forget{\pi_2^X}\rangle} =_{\forget{X}}\enspace.
\]
Notice that in some cases (like when $\cat C$ is a Kleisli or Eilenberg-Moore category) the abstract equality $\equiv_X$ coincides with the equality on the concrete state space $=_{\forget{X}}$ because the forgetful functor is product preserving. However, in the context of CTSs, we will see that the two notions of equality differ.
\begin{restatable}{proposition}{EqFuncExistence}\label{prop:eqfunexists}
  Under Assumption~\ref{assum:product} the square drawn in \eqref{eq:squarepbk} commutes for any arrow $X \rTo^f Y$ in $\cat C$. As a result, there is a functor $\cat C \rTo^{\Eq} \elements{\Psi}$ (henceforth called \emph{equality} functor) that maps an object $X$ to the abstract equality $\equiv_X$.
\end{restatable}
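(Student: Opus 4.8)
The plan is to prove the two assertions in turn: first that the square in \eqref{eq:squarepbk} commutes, and then to use this to define the action of $\Eq$ on morphisms. The square has the pairing maps $\langle\forget{\pi_1^X},\forget{\pi_2^X}\rangle$ and $\langle\forget{\pi_1^Y},\forget{\pi_2^Y}\rangle$ as its horizontal legs and the functions $\forget{f\otimes f}$ and $\forget{f}\times\forget{f}$ as its vertical legs. To verify commutativity I would reduce it to the defining property of $f\otimes f=\langle f\circ\pi_1^X, f\circ\pi_2^X\rangle$ as a product morphism in $\cat C$, namely $\pi_i^Y\circ(f\otimes f)=f\circ\pi_i^X$ for $i=1,2$. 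Applying the functor $\forget{\_}$ (which preserves composition) turns this into $\forget{\pi_i^Y}\circ\forget{f\otimes f}=\forget{f}\circ\forget{\pi_i^X}$ in $\set$, which is precisely the componentwise statement that the square commutes.

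On objects, $\Eq$ sends $X$ to $(X,\equiv_X)\in\elements{\Psi}$. For a morphism $X\rTo^f Y$, recall that a morphism in $\elements{\Psi}$ over $f$ is a pair $(f,\bar f)$ where $\bar f$ witnesses $\equiv_X\preceq f^*(\equiv_Y)$ in the fibre $\Psi X$; since $f^*=\Psi(f)=\inverse{(\forget{f\otimes f})}$ and the fibres are powersets, this amounts to the inclusion $\equiv_X\subseteq f^*(\equiv_Y)$. I would obtain this inclusion by computing $f^*(\equiv_Y)$ through the commuting square:
\[
f^*(\equiv_Y)=\inverse{(\forget{f\otimes f})}\bigl(\inverse{\langle\forget{\pi_1^Y},\forget{\pi_2^Y}\rangle}(=_{\forget{Y}})\bigr)=\inverse{\bigl(\langle\forget{\pi_1^Y},\forget{\pi_2^Y}\rangle\circ\forget{f\otimes f}\bigr)}(=_{\forget{Y}}),
\]
and then rewriting the composite via \eqref{eq:squarepbk} to get
\[
f^*(\equiv_Y)=\inverse{\bigl((\forget{f}\times\forget{f})\circ\langle\forget{\pi_1^X},\forget{\pi_2^X}\rangle\bigr)}(=_{\forget{Y}})=\inverse{\langle\forget{\pi_1^X},\forget{\pi_2^X}\rangle}\bigl(\inverse{(\forget{f}\times\forget{f})}(=_{\forget{Y}})\bigr).
\]
Since $\inverse{(\forget{f}\times\forget{f})}(=_{\forget{Y}})$ is the kernel $\{(x,x')\mid\forget{f}x=\forget{f}x'\}$ of $\forget{f}$, it contains the diagonal $=_{\forget{X}}$; applying the monotone reindexing $\inverse{\langle\forget{\pi_1^X},\forget{\pi_2^X}\rangle}$ to this inclusion then yields $\equiv_X\subseteq f^*(\equiv_Y)$, which is the required witness $\bar f$. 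Functoriality of $\Eq$ is then immediate: because the fibration is thin, any two vertical morphisms with the same base and endpoints coincide, so preservation of identities and of composition follows automatically once each $\Eq(f)$ is well-defined, and nothing beyond the inclusions above needs to be checked.

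The main obstacle, and the reason the argument cannot be shortcut, is that $\equiv_X$ lives on $\forget{X\otimes X}$ rather than on $\forget{X}\times\forget{X}$, and the pairing map $\langle\forget{\pi_1^X},\forget{\pi_2^X}\rangle$ need not be invertible (as the CTS example shows). Consequently the reasoning must be carried out entirely through preimages along this pairing, rather than by silently identifying abstract equality $\equiv_X$ with concrete equality $=_{\forget{X}}$. The commutativity of \eqref{eq:squarepbk} is exactly what licenses commuting the reindexing $f^*$ past the pairing map, so establishing that square carefully is the linchpin of the whole proof.
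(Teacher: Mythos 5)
Your proposal is correct and follows essentially the same route as the paper: commutativity of \eqref{eq:squarepbk} is reduced (via the jointly monic product projections in $\set$) to applying $\forget{\_}$ to the defining equations $\pi_i^Y\circ(f\otimes f)=f\circ\pi_i^X$, and the inclusion $\equiv_X\subseteq f^*(\equiv_Y)$ is obtained exactly as in the paper by pulling $=_{\forget X}\subseteq\inverse{(\forget f\times\forget f)}(=_{\forget Y})$ back along $\langle\forget{\pi_1^X},\forget{\pi_2^X}\rangle$ and commuting the reindexing past the pairing via the square. Your explicit remark that thinness of the fibration makes preservation of identities and composition automatic is a point the paper leaves implicit, but it is the same argument.
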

\vspace{-0.7cm}
\begin{minipage}[t]{0.45\textwidth}
\begin{diagram}[width=3em,height=2.5em]
  \forget {X \otimes X} & \rTo{\langle\forget{\pi_1^X},\forget{\pi_2^X}\rangle} & \forget X \times \forget X\\
  \dTo^{\forget{f\otimes f}} & \eqnum\label{eq:squarepbk} & \dTo_{(\forget f \times \forget f)}\\
  \forget {Y \otimes Y} & \rTo{\langle\forget{\pi_1^Y},\forget{\pi_2^Y}\rangle} & \forget Y \times \forget Y
\end{diagram}
\end{minipage}
\begin{minipage}[t]{0.5\textwidth}
\vspace{0.2cm}
    \begin{diagram}[width=3em,height=2.5em]
    \coalg{F_\lambda}{\elements{\Phi}} = \elements{\Phi_\lambda^F} & \rTo & \elements{\Phi}\\
    \dTo^{p^F_\lambda} &\eqnum \label{eq:genresult} & \dTo_p\\
    \coalg{F}{\cat C} &\rTo & \cat C
  \end{diagram}
\end{minipage}

\vspace{0.3cm}
The next proposition (originally from \cite{Jacobs_indexedcat}) is a general result on indexed categories useful in lifting an endofunctor on $\cat C$ to an endofunctor on the given fibration $\elements{\Phi}$. Moreover the category of coalgebras of the lifted endofunctor can be structured again as a fibration on the given category of coalgebras in which our original system of interest is modelled.
\begin{proposition}\label{prop:technicalresult}
  Consider the diagram in \eqref{eq:genresult}, then the following statements hold for a given functor $\cat C \rTo^F \cat C$ and an indexed morphism $\Phi \rTo^\lambda \Phi F$.
  \begin{itemize}[noitemsep]
    \item The map $\lambda$ induces a map  $\elements{\Phi} \rTo^{F_\lambda} \elements{\Phi}$ of fibrations given by $(X,U) \mapsto (FX,\lambda_X U)$.
    \item The category of coalgebras induced by $F_\lambda$ forms a fibration on $\coalg{F}{\cat C}$, where $\op{\coalg{F}{\cat C}} \rTo^{\Phi_\lambda^F} \ccat$ is the mapping: $(X,\alpha) \mapsto \coalg{\alpha^*\circ \lambda_X}{\Phi X}$.
  \end{itemize}
\end{proposition}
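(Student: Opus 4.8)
The plan is to prove the two bullets in turn, with both resting on a single naturality identity for $\lambda$. Since $\lambda$ is an indexed morphism $\Phi \to \Phi F$ in the sense of Definition~\ref{def:indexed_cat}, naturality of the transformation $\Phi \to \Phi\circ\op F$ applied to an arrow $f:X\to Y$ of $\cat C$ says that $\lambda_X \circ f^* = (Ff)^* \circ \lambda_Y$ as maps $\Phi Y \to \Phi FX$; equivalently $\lambda_X(f^* V) = (Ff)^*(\lambda_Y V)$ for every $V\in\Phi Y$. I would record this equation first, as it is the workhorse for both parts, and note that the base fibration $p\colon\elements{\Phi}\to\cat C$ is already known (from the preceding remark) to be a split fibration, so for the first bullet it suffices to exhibit $F_\lambda$ as a fibred functor.

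For the first bullet, I define $F_\lambda$ on objects by $(X,U)\mapsto(FX,\lambda_X U)$ as prescribed, and on a morphism $(f,\bar f)\colon(X,U)\to(Y,V)$ — where $\bar f\colon U\to f^*V$ in $\Phi X$ — I take the base component $Ff$ together with the fibre component $\lambda_X U \xrightarrow{\lambda_X\bar f} \lambda_X(f^*V) = (Ff)^*(\lambda_Y V)$, the equality being exactly the naturality identity. This is a legitimate arrow $(FX,\lambda_X U)\to(FY,\lambda_Y V)$ of $\elements{\Phi}$. Functoriality of $F_\lambda$ then follows from functoriality of $F$, of each $\lambda_X$, and of $\Phi$, together with naturality; and fibredness is immediate since $p\circ F_\lambda = F\circ p$ on the nose, while $F_\lambda$ carries the cartesian lift $(f,\mathrm{id})\colon(X,f^*V)\to(Y,V)$ to $(Ff,\mathrm{id})\colon(FX,(Ff)^*(\lambda_Y V))\to(FY,\lambda_Y V)$, which, again by naturality, is precisely the cartesian lift of $\lambda_Y V$ along $Ff$. (In the thin case the fibre component is just the monotonicity step $U\preceq f^*V \Rightarrow \lambda_X U\preceq(Ff)^*(\lambda_Y V)$.)

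For the second bullet, the crux is to unfold an $F_\lambda$-coalgebra. An object of $\coalg{F_\lambda}{\elements{\Phi}}$ is an arrow $(X,U)\to F_\lambda(X,U)=(FX,\lambda_X U)$ in $\elements{\Phi}$, i.e. a pair $(\alpha,\bar\alpha)$ with $\alpha\colon X\to FX$ in $\cat C$ and $\bar\alpha\colon U\to\alpha^*(\lambda_X U)$ in $\Phi X$; the first datum is an $F$-coalgebra $(X,\alpha)$, and the second is exactly an $(\alpha^*\circ\lambda_X)$-coalgebra on $U$ in the fibre $\Phi X$, which justifies the assignment $\Phi_\lambda^F(X,\alpha)=\coalg{\alpha^*\circ\lambda_X}{\Phi X}$. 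Morphisms unfold analogously. I would then show that reindexing lifts to the fibres, making $\Phi_\lambda^F$ into a contravariant (pseudo)functor on $\coalg{F}{\cat C}$: given a coalgebra homomorphism $f\colon(X,\alpha)\to(Y,\beta)$ and a $(\beta^*\circ\lambda_Y)$-coalgebra $c\colon V\to\beta^*(\lambda_Y V)$, applying $f^*$ and chaining the homomorphism law $Ff\circ\alpha=\beta\circ f$, functoriality of $\Phi$ (so $f^*\beta^* = (\beta f)^* = (Ff\,\alpha)^* = \alpha^*(Ff)^*$), and naturality $((Ff)^*\lambda_Y = \lambda_X f^*)$ yields an arrow $f^*V\to\alpha^*(\lambda_X(f^*V))$, i.e. an $(\alpha^*\circ\lambda_X)$-coalgebra on $f^*V$. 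This defines $\Phi_\lambda^F(f)$, and the Grothendieck construction recalled in the preceding remark then packages $\Phi_\lambda^F$ into a split fibration $p^F_\lambda$ whose total category is $\elements{\Phi_\lambda^F}\cong\coalg{F_\lambda}{\elements{\Phi}}$, giving diagram~\eqref{eq:genresult}.

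I expect the main obstacle to be precisely this last lifting step — that reindexing along a coalgebra homomorphism sends fibrewise coalgebras to fibrewise coalgebras. It is the only place where all three ingredients (the homomorphism equation, functoriality of $\Phi$, and naturality of $\lambda$) must be combined at once, and lining up the coherences there is where the care is needed, especially if $\Phi$ is only a pseudofunctor rather than split; everything else is routine bookkeeping that the Grothendieck construction automates.
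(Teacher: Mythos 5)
Your proof is correct and is precisely the standard argument that the paper's own text stands in for: the paper states this proposition without proof, recalling it from \cite{Jacobs_indexedcat}, and your two-step verification---the naturality identity $\lambda_X\circ f^*=(Ff)^*\circ\lambda_Y$ driving both the definition of $F_\lambda$ on morphisms (and preservation of cartesian lifts) and the reindexing of fibrewise $(\beta^*\circ\lambda_Y)$-coalgebras along coalgebra homomorphisms via $f^*\beta^*=\alpha^*(Ff)^*$---is the intended one. Your closing worry about pseudofunctor coherence is moot in this setting, since Definition~\ref{def:indexed_cat} takes $\Phi$ to be a strict functor into $\ccat$ and the paper restricts to thin fibrations, so the composite reindexing identities hold on the nose and all fibre-level diagrams commute automatically.
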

Now recall \eqref{eq:idea:functional} and $\Psi$ as indexed category of relations on $\set$ (i.e. substitute $\otimes$ by $\times$ and $\forget{\_}$ by the identity functor in \eqref{eq:indexedcatRelations}), an arbitrary bisimulation relation $R$ on a coalgebra $X \rTo^\alpha FX\in\set$ is the relation $R\in\Psi X$ satisfying $R \subseteq \alpha^* \lambda_X R$. In other words, bisimulation relations on the state space $X$ are again coalgebras of the functor $\alpha^*\circ \lambda_X$ living in the fibre $\Psi X$. %Before we show the same holds for behavioural equivalence in general, we
Next we show that the same holds for behavioural equivalence in general, however, %prove that, just like a bisimulation relation, behavioural equivalence can also be seen a coalgebra; but, in the fibre of $\Psi$
under the following assumptions:
\begin{enumerate}[label=\textbf{A\arabic*},start=2]
  \item\label{assum:presEq} the given morphism $\Psi \rTo^\lambda \Psi F$ preserves $\Eq$, i.e., $F_\lambda \circ \Eq = \Eq \circ F_\lambda$. Equivalently, this means that $\lambda_X (\equiv_X) =\ \equiv_{FX}$ for every $X\in\cat C$. %preserves equalities, i.e., $\lambda_X (=_{\forget X}) = \ =_{\forget{FX}}$, for $X\in\cat C$. In other words, $F_\lambda$
  \item\label{assum:quotient} the functor $\Eq$ has a left adjoint $\Qu$.
\end{enumerate}
\begin{remark}\label{rem:quotientSets}
  Assumption~\ref{assum:quotient} already %\todo{HB: Unsure whether to mention the paper by Ghani et al. since the parametric approach is diluted somewhat in Section~\ref{sec:rel-lift}. In addition, I think the result by Ghani et al. is different than Theorem 1.}
  appeared in \cite{Hermida_Jacobs:str_coind_fibration}
  to model quotient types in the context of type theory. However, our usage is in the unit $\kappa$ of $\Qu\dashv\Eq$ to construct a witnessing coalgebra homomorphism in Theorem~\ref{thm:fibredcoalg->behequiv}. This idea is already known in type theory; for instance, see \cite[Theorem~3.7]{lmcs:738} where a similar result was proven albeit under the stronger assumption that the final coalgebra for $F$ exists. So when $\cat C=\set$, $\Qu$ maps an relation $R$ on $X$ to the quotient generated by the smallest equivalence containing $R$; the unit $\kappa_X$ (for any set $X$) is the usual quotient function mapping an element to its equivalence class.
\end{remark}
\begin{restatable}{theorem}{BehEquivIsCoalg}\label{thm:behequiv->fibredcoalg}
  Given an indexed morphism $\Psi \rTo^\lambda \Psi \circ F$,
  then under Assumptions %\ref{assum:Psi},\ref{assum:equality},
  \ref{assum:product} and \ref{assum:presEq}, the behavioural equivalence induced by a coalgebra homomorphism $f\in\coalg{F}{\cat C}$ on a coalgebra $(X,\alpha)\in\coalg{F}{\cat C}$ is a $\alpha^*\circ\lambda$-coalgebra living in the fibre $\Psi X$, i.e., $ f^*(\equiv_Y) \subseteq \alpha^* \lambda_X (f^*\equiv_Y). $
\end{restatable}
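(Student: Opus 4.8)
The plan is to reduce the desired inclusion $f^*(\equiv_Y) \subseteq \alpha^*\lambda_X(f^*\equiv_Y)$ to a single chain in the poset $\Psi X$ in which every step but one is an equality, the sole genuine inclusion being supplied by Proposition~\ref{prop:eqfunexists}. Writing $R = f^*(\equiv_Y)$ and letting $(Y,\beta)$ denote the codomain of $f$, so that $Ff\circ\alpha = \beta\circ f$, I would first rewrite the right-hand side using the naturality of the indexed morphism $\lambda$. Since $\lambda$ is a natural transformation $\Psi \Rightarrow \Psi\circ\op F$, its naturality square at the arrow $f$ reads $\lambda_X \circ f^* = (Ff)^*\circ\lambda_Y$, whence $\lambda_X(f^*\equiv_Y) = (Ff)^*\big(\lambda_Y(\equiv_Y)\big)$. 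Assumption~\ref{assum:presEq} then collapses $\lambda_Y(\equiv_Y)$ to $\equiv_{FY}$, so that $\lambda_X(f^*\equiv_Y) = (Ff)^*(\equiv_{FY})$.

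Next I would reindex along $\alpha$ and exploit the homomorphism law. Functoriality of reindexing (which is contravariant) turns $\alpha^*\circ(Ff)^*$ into $(Ff\circ\alpha)^*$, and substituting $Ff\circ\alpha = \beta\circ f$ lets me replace this by $(\beta\circ f)^* = f^*\circ\beta^*$. Collecting these equalities yields
\[
\alpha^*\lambda_X(f^*\equiv_Y) = f^*\big(\beta^*(\equiv_{FY})\big).
\]

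The only remaining, genuinely inequational, step is to compare $\equiv_Y$ with $\beta^*(\equiv_{FY})$ in $\Psi Y$. Here I invoke Proposition~\ref{prop:eqfunexists}: because $\Eq$ is a functor $\cat C \to \elements{\Psi}$, its value on the arrow $\beta\colon Y\to FY$ is a morphism $\Eq(Y)\to\Eq(FY)$ lying over $\beta$, which by the definition of morphisms in $\elements{\Psi}$ amounts precisely to the inclusion $\equiv_Y \subseteq \beta^*(\equiv_{FY})$. Applying the order-preserving reindexing functor $f^*$ (reindexing is monotone in a thin fibration) gives $f^*(\equiv_Y)\subseteq f^*\big(\beta^*(\equiv_{FY})\big)$, and chaining with the equalities above closes the argument: $R = f^*(\equiv_Y)\subseteq f^*\big(\beta^*(\equiv_{FY})\big)=\alpha^*\lambda_X(R)$, which is exactly the statement that $R$ is an $\alpha^*\circ\lambda_X$-coalgebra in $\Psi X$.

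I expect the only delicate point to be the bookkeeping of variance: since reindexing is contravariant one must track that $(g\circ h)^* = h^*\circ g^*$, and one must verify that the inclusion extracted from $\Eq$ runs $\equiv_Y\subseteq\beta^*(\equiv_{FY})$ rather than the reverse. Beyond this, the proof requires no new construction, drawing only on naturality of $\lambda$, Assumption~\ref{assum:presEq}, the coalgebra-homomorphism law, and the functoriality of $\Eq$ already established in Proposition~\ref{prop:eqfunexists}.
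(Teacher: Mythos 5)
Your proof is correct and is essentially the paper's own argument: the paper establishes $\equiv_Y \subseteq \beta^*(\equiv_{FY}) = \beta^*\lambda_Y(\equiv_Y)$ via Proposition~\ref{prop:eqfunexists} and Assumption~\ref{assum:presEq}, then pushes this through a commutative diagram whose squares are exactly your two equalities (naturality of $\lambda$ at $f$, and $\alpha^*\circ (Ff)^* = f^*\circ\beta^*$ from the homomorphism law). Your equational chain is just that diagram chase written out linearly, with the same single genuine inclusion in the same place.
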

\begin{restatable}{theorem}{FibCoalgIndBehEquiv}\label{thm:fibredcoalg->behequiv}
Under Assumptions~%\ref{assum:Psi}, \ref{assum:equality}, \ref{assum:presEq}, and \ref{assum:quotient},
\ref{assum:product}, \ref{assum:presEq}, and \ref{assum:quotient} for
    every $\alpha^*\circ\lambda$-coalgebra $R$ there is a coalgebra homomorphism $f\in\coalg{F}{\cat C}$ such that $R\subseteq f^*(\equiv_{\text{cod}(f)})$, where $\text{cod}(f)$ denotes the codomain of $f$. Moreover, $R=f^*(\equiv_{\text{cod}(f)})$ when the unit of $\Qu\dashv \Eq$ is Cartesian.
\end{restatable}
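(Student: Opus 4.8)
The plan is to read the required coalgebra homomorphism off the unit $\kappa$ of the adjunction $\Qu\dashv\Eq$ granted by Assumption~\ref{assum:quotient}, and to transport a coalgebra structure to its codomain via the universal property of $\Qu$. Fix an $\alpha^*\circ\lambda$-coalgebra $R\in\Psi X$, so that $R\subseteq\alpha^*\lambda_X R$, and set $Y=\Qu(X,R)$. The component $\kappa_{(X,R)}\colon (X,R)\to\Eq(Y)=(Y,\equiv_Y)$ is a morphism of $\elements{\Psi}$, and by the Grothendieck construction its underlying $\cat C$-arrow $f\colon X\to Y$ already witnesses $R\subseteq f^*(\equiv_Y)$ — the containment we are after. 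It then remains only to equip $Y$ with a structure $\beta\colon Y\to FY$ turning $f$ into a homomorphism.

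First I would show that the arrow $Ff\circ\alpha\colon X\to FY$ lifts to a morphism $(X,R)\to\Eq(FY)$ in $\elements{\Psi}$, i.e.\ that $R\subseteq (Ff\circ\alpha)^*(\equiv_{FY})$. Naturality of $\lambda$ gives $(Ff)^*\circ\lambda_Y=\lambda_X\circ f^*$, and evaluating at $\equiv_Y$ while invoking Assumption~\ref{assum:presEq} (so that $\lambda_Y(\equiv_Y)=\equiv_{FY}$) yields $\lambda_X(f^*(\equiv_Y))=(Ff)^*(\equiv_{FY})$. Chaining the coalgebra condition, monotonicity of $\lambda_X$ applied to $R\subseteq f^*(\equiv_Y)$, and this identity gives
\[
R\ \subseteq\ \alpha^*\lambda_X R\ \subseteq\ \alpha^*\lambda_X(f^*(\equiv_Y))\ =\ \alpha^*(Ff)^*(\equiv_{FY})\ =\ (Ff\circ\alpha)^*(\equiv_{FY}),
\]
exactly as needed.

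Transposing this lifted morphism along $\Qu\dashv\Eq$ produces the sought $\beta\colon Y=\Qu(X,R)\to FY$, determined by $\Eq(\beta)\circ\kappa_{(X,R)}$ being the lift over $Ff\circ\alpha$. Applying the fibration functor $p$ — which sends $\Eq(\beta)$ to $\beta$ and $\kappa_{(X,R)}$ to $f$ — turns this equation into $\beta\circ f=Ff\circ\alpha$, so $f\in\coalg F{\cat C}$ is a homomorphism with $R\subseteq f^*(\equiv_Y)$. For the final clause, recall that in the thin fibration $p\colon\elements{\Psi}\to\cat C$ a lift over $f$ is Cartesian precisely when its source fibre object equals $f^*$ of the target; hence if $\kappa_{(X,R)}$ is Cartesian, its inequality $R\subseteq f^*(\equiv_Y)$ strengthens to the equality $R=f^*(\equiv_Y)$.

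\emph{The main obstacle} is the construction and verification of $\beta$: one must check both that the $\cat C$-arrow obtained by transposing through $\Qu\dashv\Eq$ is a legitimate coalgebra structure and that its transpose reproduces the homomorphism square, which hinges on correctly tracking how $\Eq$ and the unit $\kappa$ act on underlying morphisms in the Grothendieck construction. By contrast, the relation-lifting computation above is routine once naturality of $\lambda$ and Assumption~\ref{assum:presEq} are in hand; the genuine care lies in the adjunction bookkeeping.
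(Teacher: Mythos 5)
Your proposal is correct and follows essentially the same route as the paper's proof: both take $f$ to be (the image under $p$ of) the unit $\kappa_{(X,R)}$ of $\Qu\dashv\Eq$, show via the coalgebra condition, naturality of $\lambda$, and Assumption~\ref{assum:presEq} that $Ff\circ\alpha$ lifts to a morphism $(X,R)\to\Eq(F\Qu(X,R))$ in $\elements{\Psi}$, and then transpose along the adjunction to obtain the coalgebra structure $\beta=\alpha_R$ on $\Qu(X,R)$ making $f$ a homomorphism. Your handling of the Cartesian-unit clause via the posetal characterisation of Cartesian lifts in a thin fibration is also the intended argument, which the paper leaves implicit.
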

\begin{remark}
  An application of Theorem~\ref{thm:fibredcoalg->behequiv}
  could be in establishing the completeness of coalgebraic games (as
  in the spirit of \cite{km:bisim-games-logics-metric}). %for  behavioural equivalence.
  For instance, if the winning positions of
  Duplicator viewed as a relation $R$ is a coalgebra in the fibre of
  $\Psi$, then Theorem~\ref{thm:fibredcoalg->behequiv} can be used to
  show that winning positions of Duplicator are behaviourally
  equivalent. In the future, we would like to test this application by
  working out a notion of 2-player games for coalgebra with side effects.
\end{remark}
\section{Coalgebraic modal logic}\label{sec:CoalgModalLog}
%The aim of this section to generalise the well-known Hennessy-Milner logical characterisation for behavioural equivalence.
\begin{wrapfigure}[8]{r}{0.4\textwidth}
\vspace{-1.9cm}
\begin{diagram}[width=3.5em,height=2.15em]
  \elements{\Psi} & \lTo^{\bar \tfunctor} & \op{\cat A}\\
   \dTo^p & & \dEMap\\
  \cat C & \pile{\rTo~\sfunctor \\ {\scriptscriptstyle\perp} \\ \lTo~\tfunctor} & \op{\cat A} & \eqnum\label{eq:KupkeRot_Setup} \\
  \dTo^F & \Downarrow \delta & \dTo_{L}\\
  \cat C & \pile{\rTo~\sfunctor \\ {\scriptscriptstyle\perp} \\ \lTo~\tfunctor} & \op{\cat A}
\end{diagram}
\end{wrapfigure}
The `partial' characterisation of behavioural equivalence as a fibred notion (cf. Theorems~\ref{thm:behequiv->fibredcoalg} and \ref{thm:fibredcoalg->behequiv}) enables us to use the dual  adjunction framework %\footnote{Please note that some of the notations appearing in \eqref{eq:KupkeRot_Setup} are explained on the next page.}
of Kupke and Rot \cite{KupkeRot_ExprLog_2020} %(depicted\footnote{The meaning of various functors in this diagram will be clarified later.}
in \eqref{eq:KupkeRot_Setup} to develop a logical characterisation of behavioural equivalence. It should be noted that, although this framework can handle behavioural preorders and distances, we prove our results only for behavioural equivalence, i.e. in the context of Assumptions~\ref{assum:product} and \ref{assum:presEq}. Below we explain the role of various functors drawn in \eqref{eq:KupkeRot_Setup} in an incremental manner; subsequently, we will establish our general adequacy and expressivity results (Theorems~\ref{thm:fibredbehequiv->logequiv} and \ref{thm:KleisliQu}) for behavioural equivalences. %(i.e. in the context of Assumptions~\ref{assum:Psi} and \ref{assum:equality}).

%\begin{minipage}[c]{0.7\textwidth}
%\scalebox{0.8}{
%  }
%\end{minipage}
%\hspace{-3cm}
%\begin{minipage}[c]{0.3\textwidth}
%\scalebox{0.85}{
%  }
%\end{minipage}

The fibration $\elements{\Psi} \rTo^p \cat C$ will be used to define (internally) a behavioural equivalence of interest. Often it is defined as a colimit of a diagram resembling the final sequence in a fibre (cf.\thinspace \cite{hasuo_kataoka_cho_2018}). More abstractly, we assume that the indexed category $\Psi^F_{\lambda}$ (recall this notation from Proposition~\ref{prop:technicalresult}) has indexed final objects, for some indexed morphism $\Psi \rTo^\lambda \Psi F$.

\begin{restatable}{lemma}{IndexFinal}\label{lemma:indexfinal}
  Suppose $\op{\cat C} \rTo^\Phi \ccat$ has indexed final objects (i.e., the final object exists in each fibre $\Phi X$) and the reindexing functor $f^*$ preserves these final objects. Then there is a functor $\cat C \rTo^{\1} \elements{\Phi}$ that is right adjoint to $p$.
  %Moreover, if $1\in\cat C$ is the final object then $!_X^*1$ is the final object in the fibre $\Phi X$, i.e., $\1 X = (X,!_X^*1)$, where $X \rTo^{!_X} 1\in\cat C$ is the unique map onto the final object.
\end{restatable}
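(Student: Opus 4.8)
The plan is to define the candidate right adjoint $\1$ explicitly and then verify $p \dashv \1$ by exhibiting the defining natural bijection on hom-sets. On objects I set $\1 X = (X, \1_X)$, where $\1_X$ denotes the chosen final object of the fibre $\Phi X$. On a morphism $f \colon X \to Y$ of $\cat C$, the image $\1 f \colon (X, \1_X) \to (Y, \1_Y)$ must, by the shape of $\elements{\Phi}$, be a pair $(f, \overline{\1 f})$ with $\overline{\1 f} \colon \1_X \to f^* \1_Y$ in $\Phi X$; since the reindexing functor $f^*$ preserves final objects, $f^* \1_Y$ is again final in $\Phi X$, so there is a unique such arrow $\overline{\1 f}$, which I take as the definition.

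Functoriality of $\1$ is then immediate from the uniqueness of arrows into a final object. Recalling that composition in $\elements{\Phi}$ is given by $(g, \bar g) \circ (f, \bar f) = (g \circ f,\, f^*(\bar g) \circ \bar f)$, both $\overline{\1(g \circ f)}$ and $f^*(\overline{\1 g}) \circ \overline{\1 f}$ are arrows from $\1_X$ into the final object $(g \circ f)^* \1_Z = f^*(g^* \1_Z)$, hence equal; preservation of identities is the same argument. Thus $\1 \colon \cat C \to \elements{\Phi}$ is a functor, and by construction $p \circ \1 = \id{\cat C}$.

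To establish the adjunction I would exhibit, for each $(X, U) \in \elements{\Phi}$ and $Z \in \cat C$, the bijection
\[
\elements{\Phi}\big((X, U), \1 Z\big) \;\cong\; \cat C(X, Z) \;=\; \cat C\big(p(X, U), Z\big).
\]
A morphism $(X, U) \to \1 Z = (Z, \1_Z)$ is a pair $(f, \bar f)$ with $f \colon X \to Z$ and $\bar f \colon U \to f^* \1_Z$; since $f^*$ preserves final objects, $f^* \1_Z$ is final in $\Phi X$, so $\bar f$ is uniquely determined by $f$. The assignment $(f, \bar f) \mapsto f = p(f, \bar f)$ is therefore a bijection, with inverse sending $f$ to the unique pair lying over it. Naturality in $(X, U)$ and in $Z$ is a routine check using the composition formula above, and yields $p \dashv \1$ in hom-set form.

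The proof is essentially bookkeeping; the one point that carries the content is that the hypothesis must say that $f^*$ sends the final object to a \emph{final} object (equivalently, in the thin fibrations to which we restrict, $f^* \1_Y = \1_X$ on the nose), since this is exactly what forces the fibre-component $\bar f$ of every morphism into $\1 Z$ to be unique. Were reindexing only to preserve finality up to a non-canonical comparison, the uniqueness underpinning both functoriality and the hom-set bijection could fail; in the thin setting this worry evaporates and $\overline{\1 f}$ is simply the unique arrow $\1_X \to \1_X$.
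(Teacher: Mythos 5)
Your proof is correct and takes essentially the same route as the paper's: both define $\1 X = (X,1_X)$ with the morphism part given by the unique fibre arrow into the (preserved) final object, and both verify $p \dashv \1$ via the hom-set bijection in which the fibre component of any map into $\1 Z$ is uniquely determined by its base map. Your closing remark about thinness making $f^*1_Y = 1_X$ on the nose matches exactly how the paper uses the hypothesis (it takes $\1 f = (f,\id{1_X})$), so there is nothing to add.
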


\begin{wrapfigure}[4]{r}{0.45\textwidth}
\vspace{-1.6cm}
\begin{diagram}[width=2.5em,height=2.5em]
    \coalg{F_\lambda}{\elements{\Psi}} = \elements{\Psi_\lambda^F} & \rTo & \elements{\Psi}\\
    \dTo^{p^F_\lambda} \dashv \uTo_{\1^\lambda} &\eqnum\label{eq:largestbehequiv}& \dTo_p\\
    \coalg{F}{\cat C} &\rTo & \cat C
  \end{diagram}
\end{wrapfigure}
Usually, $\1$ is used (called the \emph{truth functor} \cite{jacobs-fibrations} in the context of logic) when the underlying fibration $\elements{\Phi} \rTo^p \cat C$ has indexed final objects. However when $\Psi^F_\lambda$ satisfies the conditions of the previous lemma, it results in a functor $\coalg{F}{\cat C} \rTo^{\1^\lambda} \coalg{F_{\lambda}}{\elements{\Psi}}$ (which we call \emph{the behavioural conformance} functor) that maps a coalgebra $X \rTo^\alpha FX$ to the terminal element in $\coalg{\alpha^*\circ\lambda_X}{\Psi X}$ denoted as $1^\lambda_X$. Note that $1^\lambda_X$ %is a coalgebra of type $1^\lambda_X \rTo \alpha^* \lambda_X (1^\lambda_X) \in \Psi X$ and,
in our applications will correspond to the largest behavioural equivalence on a given system. Moreover, it is not hard to arrive at the adjoint situation as indicated in \eqref{eq:largestbehequiv}. So, in other words, the behavioural conformance functor is right adjoint to the forgetful functor that witnesses the fibration of behavioural conformance %(which can be a binary relation or distance %\footnote{The fibrational persective on obtaining a largest behavioural pseudometric on a given set-based coalgebra given in \cite[Appendix]{Henning2014BehavioralMetrics} can be easily adopted here, we left it out due to space. Note that $\sigma$ in this setting are given by Kantorovich/Wasserstein liftings; the two are equivalent under some side restrictions \cite{Henning2014BehavioralMetrics,BBKK_CoalgBehMetrics}.}
%depending on the context)
on coalgebras.
%Thus, Lemma~\ref{lemma:indexfinal} gives a functor $\coalg{F}{\cat C} \rTo^{\1'} \coalg{F_{\bar\Lambda}}{\elements{\Phi}}$ which for every coalgebra (in particular for $(X,\alpha)$) returns a coinductive behavioural equivalence of type $1_X' \preceq \alpha^* \circ \lambda_X (1_X') \in \Phi X$. Note that $1_X'$ will \emph{not} be the terminal object in $\Phi X$.%, rather, it will be computed by iterating over the terminal object in $\Phi X$ (for instance, see Example~\ref{ex:LTS}).

\begin{example}\label{ex:LTS}
  Consider the indexed category $\Psi$ induced by binary relations on sets and a labelled transition system modelled as a coalgebra $X \rTo^\alpha (\power X)^\act$, i.e., our $\cat C=\set,F=(\power\_)^\act$. Consider the function $\Psi X \rTo^{\lambda_X} \Psi F X$ that maps a relation $R\in\Psi X$ to a relation $\lambda_X R\in \Psi FX$
  (below $q,q'\in(\power X)^\act$):
\[  q \mathrel{\lambda_X R} q' \iff \forall_{a,x}\exists_{x'}\ (x \in qa \implies x'\in q'a) \land \forall_{a,x'}\exists_{x}\ (x' \in q'a \implies x\in qa).
\]
It is well known (as first noted in \cite{Hermida_Jacobs:str_coind_fibration}) that a bisimulation relation is a $\alpha^*\circ \lambda$-coalgebra. Moreover, bisimilarity $\bisim_X$ (the largest bisimulation relation on $X$) corresponds to the final object in $\Psi_\lambda (X,\alpha)$, i.e., $\1^\lambda(X,\alpha)=\ (X,\alpha,\bisim_X)$.
\end{example}
%Then, as first observed in \cite{Hermida_Jacobs:str_coind_fibration}, a bisimulation relation $R$ on $X$ is a coalgebra $R \rTo \alpha^* \lambda_X R\in\Psi X$. Moreover, bisimilarity $\bisim_X$ (the largest bisimulation relation) on $(X,\alpha)$ then corresponds to the terminal object living in the fibre $\Psi^F_\lambda(X,\alpha)$, i.e., in particular $\bisim_X \subseteq \inverse{(\alpha \times\alpha)} \lambda_X \bisim_X$.%it satisfies the following properties:
%%\end{example}

As for the dual adjunction $\sfunctor\dashv \tfunctor$ in
\eqref{eq:KupkeRot_Setup}, it provides a connection (cf.\thinspace
\cite{DAdj:origin}) between states and theories (the formulae
satisfied by a state).
The syntax of the logic is given by a functor $\cat A \rTo^L \cat A$ and it is assumed that the initial algebra $L\mathcal A \rTo^h \mathcal A \in\cat A$ exists for $L$, which models the typical Lindenbaum algebra induced by the term algebra. Lastly, the natural transformation $\delta$ gives the one-step interpretation to the formulae which can be % Sometimes it is easier to define $\delta$
given its mate $\theta$ as described below (cf.\thinspace \cite[Proposition~2]{JacobsSokolova_expressivity_modal_Logic}).
%\vspace{-0.3cm}
\begin{proposition}\label{prop:mate}
 Given $\cat C \rTo^F \cat C, \cat A \rTo^L \cat A,\cat C \pile{\rTo^\sfunctor \\ \lTo_\tfunctor} \op{\cat A}$ with $\sfunctor\dashv \tfunctor$, there is a correspondence between  $F\tfunctor \rTo^{\delta} \tfunctor L$ and $\sfunctor F \rTo^{\theta} L\sfunctor$. %:% of left and right type:
%    \begin{diagram}[height=1.5em]
%        \cat C & \lTo^\tfunctor & \op{\cat A} &&&& \cat C & \rTo^\sfunctor & \op{\cat A}\\
%        \dTo^{F} & \Downarrow\delta & \dTo_{L} && \iff && \dTo^{F} & \Uparrow\theta & \dTo_{L}\\
%        \cat C & \lTo_\tfunctor & \op{\cat A} &&&& \cat C & \rTo_\sfunctor & \op{\cat A}
%    \end{diagram}
\end{proposition}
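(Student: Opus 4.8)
The statement is the standard \emph{mate correspondence} attached to the dual adjunction $\sfunctor\dashv\tfunctor$, so the plan is to write down the two transpose operations explicitly and verify that they are mutually inverse. First I would name the unit $\eta\colon\id{\cat C}\Rightarrow\tfunctor\sfunctor$ and counit $\epsilon\colon\sfunctor\tfunctor\Rightarrow\id{\op{\cat A}}$ of the adjunction, reading $\sfunctor$ and $\tfunctor$ as ordinary (covariant) functors into and out of $\op{\cat A}$; this way every pasting below takes place in genuine categories, and the only thing to remember is that $L$ abbreviates $\op L$ and that $\epsilon$ will point the `opposite' way once it is read back in $\cat A$. Given $\theta\colon\sfunctor F\Rightarrow L\sfunctor$ I define its mate as the pasting
\[
  \delta \ =\ (\tfunctor L\epsilon)\circ(\tfunctor\theta\tfunctor)\circ(\eta F\tfunctor),
\]
and given $\delta\colon F\tfunctor\Rightarrow\tfunctor L$ I define
\[
  \theta \ =\ (\epsilon L\sfunctor)\circ(\sfunctor\delta\sfunctor)\circ(\sfunctor F\eta).
\]
A type-check confirms these are natural transformations of the advertised shapes: $\delta$ between the functors $F\tfunctor,\tfunctor L\colon\op{\cat A}\to\cat C$, and $\theta$ between $\sfunctor F, L\sfunctor\colon\cat C\to\op{\cat A}$.

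That both assignments are well defined, i.e.\ that the displayed composites really are natural, needs no separate work: it is immediate from the naturality of $\eta$, $\epsilon$, and of $\theta$ (resp.\ $\delta$), together with the functoriality of $\sfunctor,\tfunctor,F,L$, which is exactly the content of horizontal and vertical composition of natural transformations.

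The substance of the proof is that the two operations are inverse to each other. Starting from $\theta$, forming $\delta$, and then transposing back, I would substitute the definition of $\delta$ into that of its mate, use the naturality of $\eta$ and $\epsilon$ to slide the whiskered copy of $\theta$ past the inserted (co)unit components, and then apply the two triangle identities $(\tfunctor\epsilon)\circ(\eta\tfunctor)=\id{\tfunctor}$ and $(\epsilon\sfunctor)\circ(\sfunctor\eta)=\id{\sfunctor}$. The four inserted (co)unit factors cancel in pairs, returning exactly $\theta$; the symmetric calculation recovers $\delta$ from its mate, establishing the claimed bijection.

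The only genuine obstacle is bookkeeping rather than mathematics: keeping the opposite-category conventions consistent. Because the adjunction is \emph{dual}, $\epsilon$ is a $2$-cell of $\op{\cat A}$, so unwinding any of the pastings back into $\cat A$ reverses the orientation of its components, and the whiskering $\tfunctor\theta\tfunctor$ must be read with $L=\op L$. To avoid sign errors I would commit to working entirely inside $\op{\cat A}$ (never translating the arrows back), or equivalently derive the correspondence componentwise from the hom-set isomorphism $\cat A(A,\sfunctor C)\cong\cat C(C,\tfunctor A)$ and invoke Yoneda; the first route keeps the triangle-identity computation clean and entirely routine.
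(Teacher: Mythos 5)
Your proposal is correct: the paper gives no proof of Proposition~\ref{prop:mate} at all, deferring to the standard mate correspondence of \cite[Proposition~2]{JacobsSokolova_expressivity_modal_Logic}, and your argument --- transposing via $\delta = (\tfunctor L\epsilon)\circ(\tfunctor\theta\tfunctor)\circ(\eta F\tfunctor)$ and $\theta = (\epsilon L\sfunctor)\circ(\sfunctor\delta\sfunctor)\circ(\sfunctor F\eta)$, then cancelling via naturality and the triangle identities --- is exactly the standard proof underlying that citation, with the opposite-category bookkeeping handled correctly. Nothing is missing; the round-trip computation you sketch does close (e.g.\ starting from $\theta$, two naturality slides of $\epsilon$ followed by the triangle identity $(\epsilon\sfunctor)\circ(\sfunctor\eta)=\id{\sfunctor}$ and naturality of $\theta$ at $\eta_X$ reduce the composite to $L(\epsilon_{\sfunctor X}\circ\sfunctor\eta_X)\circ\theta_X=\theta_X$).
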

\noindent
Now given a coalgebra $X \rTo^\alpha FX \in\cat C$, the semantics $\mathcal A \rTo^{\semantics{\_}_X} \sfunctor X$ of the logic $(L,\delta)$ is given by the universal property of the initial algebra $L\mathcal A \rTo^h \mathcal A$. In particular, it is the unique arrow in $\cat A$ that makes the following diagram (drawn on the left) commutative. And the transpose of the semantics map $\semantics{\_}_X$ under $\sfunctor\dashv \tfunctor$ gives a `theory' map $X \rTo^{\theory{\_}_X} Q\mathcal A$; it is the unique arrow in $\cat C$ that makes the following diagram on the right commutative.
%Note that $\theta$ is the mate of $\delta$ (cf.\thinspace Proposition~\ref{prop:mate}).
\vspace{-0.2cm}
      \begin{diagram}[width=3em,height=2em]
        L\mathcal A & \rTo^{L \semantics{\_}_X} & L\sfunctor X & \rTo^{\theta_X} & \sfunctor FX && X && \rTo~{\theory{\_}_X} && \tfunctor \mathcal A\\
        \dTo^{h} & & \eqnum\label{eq:semantics} & & \dTo_{\sfunctor\alpha} && \dTo^{\alpha} && && \dTo_{\tfunctor h}\\
        \mathcal A & & \rTo~{\semantics{\_}_X} & & \sfunctor X && FX & \rTo^{F\theory{\_}_X} & F\tfunctor \mathcal A & \rTo^{\delta_{\mathcal A}} & \tfunctor L \mathcal A
      \end{diagram}
\noindent
Once these niceties are set up, one can argue when a logic ($L,\delta$) is adequate and expressive. Intuitively, a logic ($L,\delta$) is adequate if behaviourally equivalent states satisfy the same logical formulae; while an adequate logic is expressive if logically equivalent states are also behaviourally equivalent. The formulation below is a straightforward formulation of adequacy and expressivity given in \cite{KupkeRot_ExprLog_2020} using the language of indexed categories.

\begin{definition}
  Suppose the behavioural conformance functor $\1^\lambda$ exists (for some $\lambda$) with $\op{\cat A} \rTo^{\bar \tfunctor} \elements{\Psi}$ such that $p \circ \bar \tfunctor = \tfunctor$. Then a logic $(L,\delta)$ is \emph{adequate} (resp.\thinspace \emph{expressive}) w.r.t.\thinspace $\bar\tfunctor$ if $(X,1_X^\lambda) \rTo^{\theory{\_}_X} \bar \tfunctor \mathcal A$ is a (resp.\thinspace Cartesian) map in $\elements{\Psi}$, for every coalgebra $X \rTo^\alpha FX \in \cat C$.
\end{definition}

The role of $\bar\tfunctor$ is to encode a relationship between the theories of any two states (cf.\thinspace \cite{KupkeRot_ExprLog_2020}); so we let $\bar\tfunctor = \Eq \circ \tfunctor$ in the context of behavioural equivalence. Next we state the main result of this section, which is a refinement of adequacy and expressivity results given in \cite{KupkeRot_ExprLog_2020}.
\begin{restatable}{theorem}{FibBehEquivImpLogEquiv}\label{thm:fibredbehequiv->logequiv}
Under the assumptions of Theorem~\ref{thm:behequiv->fibredcoalg}, if $\bar\tfunctor$ has a left adjoint $\bar\sfunctor$, the logic $(L,\delta)$ is adequate.
Moreover it is expressive if $\forget{\delta_{\mathcal A}}$ is injective.
\end{restatable}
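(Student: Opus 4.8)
The plan is to unfold both notions against the thin fibration $\elements{\Psi} \rTo^p \cat C$. Since each fibre is a poset, a map $(X,1_X^\lambda) \rTo^{\theory{\_}_X} \bar\tfunctor\mathcal A$ in $\elements{\Psi}$ is exactly the inclusion $1_X^\lambda \subseteq (\theory{\_}_X)^*(\equiv_{\tfunctor\mathcal A})$, and it is Cartesian precisely when this is an equality. Writing $W := (\theory{\_}_X)^*(\equiv_{\tfunctor\mathcal A})$ for the logical-equivalence relation, adequacy amounts to $1_X^\lambda \subseteq W$ and expressivity additionally to $W \subseteq 1_X^\lambda$. I expect the two inclusions to require genuinely different arguments: the first an adjoint/lifting argument, the second a reindexing computation.

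For adequacy, the idea is to transport the whole setup of \eqref{eq:KupkeRot_Setup} into $\elements{\Psi}$ along $p$. First I build a lifted one-step interpretation $\bar\delta := \Eq\,\delta$. By Assumption~\ref{assum:presEq} we have $F_\lambda\circ\Eq = \Eq\circ F$, hence $F_\lambda\bar\tfunctor = F_\lambda\,\Eq\,\tfunctor = \Eq\,F\tfunctor$; since $\Eq$ is a functor (Proposition~\ref{prop:eqfunexists}), each $\Eq\,\delta_A$ is a morphism $F_\lambda\bar\tfunctor A \rTo \bar\tfunctor L A$ in $\elements{\Psi}$ lying over $\delta_A$, and naturality is inherited from $\delta$. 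Together with the hypothesis $\bar\sfunctor\dashv\bar\tfunctor$ and the (unchanged) initial $L$-algebra $(\mathcal A,h)$ in $\cat A$, this makes $(\elements{\Psi},\,\bar\sfunctor\dashv\bar\tfunctor,\,F_\lambda,\,L,\,\bar\delta)$ an instance of the same dual-adjunction setup living over the base one. Applying Proposition~\ref{prop:mate} and initiality inside $\elements{\Psi}$ to the behavioural-conformance coalgebra $(X,1_X^\lambda)$ with its structure map $\bar\alpha\in\coalg{F_\lambda}{\elements{\Psi}}$ yields a theory map $\ell\colon (X,1_X^\lambda) \rTo \bar\tfunctor\mathcal A$ in $\elements{\Psi}$. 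The mere existence of this morphism gives $1_X^\lambda \subseteq (p\ell)^*(\equiv_{\tfunctor\mathcal A})$, so adequacy follows once $p\ell = \theory{\_}_X$. This last coherence is where I expect the main obstacle to lie: applying $p$ to the diagram characterising $\ell$ and using $pF_\lambda = Fp$, $p\bar\tfunctor = \tfunctor$, $p\bar\delta = \delta$ and $p\bar\alpha = \alpha$ should turn it into exactly the right-hand square of \eqref{eq:semantics}, so that uniqueness of the theory map (from initiality of $\mathcal A$) forces $p\ell = \theory{\_}_X$; checking that all these functors and the lifting $\bar\delta$ cohere with $p$ is the delicate part.

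For expressivity it remains to prove $W \subseteq 1_X^\lambda$ under the assumption that $\forget{\delta_{\mathcal A}}$ is injective. By finality of $1_X^\lambda$ in $\coalg{\alpha^*\circ\lambda_X}{\Psi X}$ it suffices to exhibit $W$ as a coalgebra, i.e.\ $W \subseteq \alpha^*\lambda_X W$; I will in fact get equality. Lambek's lemma makes $\tfunctor h$ an isomorphism, so the right square of \eqref{eq:semantics} rewrites $\theory{\_}_X = (\tfunctor h)^{-1}\circ\delta_{\mathcal A}\circ F\theory{\_}_X\circ\alpha$, and functoriality of reindexing gives $W = \alpha^*\,(F\theory{\_}_X)^*\,\delta_{\mathcal A}^*\,((\tfunctor h)^{-1})^*(\equiv_{\tfunctor\mathcal A})$. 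Reindexing abstract equality along the isomorphism $(\tfunctor h)^{-1}$ returns abstract equality ($\equiv_{\tfunctor L\mathcal A}$), and chasing \eqref{eq:squarepbk} for $\delta_{\mathcal A}$ shows that $\delta_{\mathcal A}^*(\equiv_{\tfunctor L\mathcal A}) = \equiv_{F\tfunctor\mathcal A}$ exactly when $\forget{\delta_{\mathcal A}}$ is injective, since injectivity collapses $(\forget{\delta_{\mathcal A}}\times\forget{\delta_{\mathcal A}})^{-1}$ of the diagonal back to the diagonal (without injectivity only one inclusion survives, which is why adequacy needs the separate argument above). Assumption~\ref{assum:presEq} then rewrites $\equiv_{F\tfunctor\mathcal A}$ as $\lambda_{\tfunctor\mathcal A}(\equiv_{\tfunctor\mathcal A})$, and naturality of the indexed morphism $\lambda$ yields $(F\theory{\_}_X)^*\circ\lambda_{\tfunctor\mathcal A} = \lambda_X\circ(\theory{\_}_X)^*$. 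Combining these identities gives $W = \alpha^*\lambda_X W$, hence $W \subseteq 1_X^\lambda$; with adequacy this upgrades to $W = 1_X^\lambda$, i.e.\ the theory map is Cartesian, as required.
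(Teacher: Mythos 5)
Your proposal is correct, but it takes a more self-contained route than the paper, which proves the theorem essentially by citation: the paper's proof consists only of verifying the two side conditions of \cite[Theorems~18 and 19]{KupkeRot_ExprLog_2020} and delegating the rest to those results. Concretely, for adequacy the paper checks just that $\delta$ lifts to $\bar\delta_A\colon F_\lambda\bar\tfunctor A \to \bar\tfunctor LA$, which is exactly your observation that $F_\lambda\bar\tfunctor A = (F\tfunctor A,\equiv_{F\tfunctor A})$ by Assumption~\ref{assum:presEq} together with functoriality of $\Eq$; for expressivity it checks only that $\delta_{\mathcal A}$ is Cartesian, by the same chase through \eqref{eq:squarepbk} that you perform. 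You then inline what the citations supply: for adequacy, the lifted dual adjunction on $\elements{\Psi}$, the lifted theory map $\ell$ built from the unchanged initial $L$-algebra via the mate under $\bar\sfunctor\dashv\bar\tfunctor$, and the identification $p\ell = \theory{\_}_X$ by uniqueness of the base theory map --- this is precisely the mechanism behind Theorem~18, and your version makes visible that the hypothesis $\bar\sfunctor\dashv\bar\tfunctor$ enters only here, to form the transpose; for expressivity, the fixpoint computation $W = \alpha^*\lambda_X W$ via Lambek's lemma, reindexing of abstract equality along the isomorphism $(\tfunctor h)^{-1}$, Assumption~\ref{assum:presEq}, naturality of $\lambda$, and finality of $1_X^\lambda$ in the thin fibre --- the content of Theorem~19, which the paper never reproduces. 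The trade-off: the paper's proof is short and modular, while yours is free-standing and tracks exactly where each hypothesis is consumed. Two small caveats. First, your claim that $\delta_{\mathcal A}^*(\equiv_{\tfunctor L\mathcal A}) = \equiv_{F\tfunctor\mathcal A}$ holds ``exactly when'' $\forget{\delta_{\mathcal A}}$ is injective overstates both what is needed and what you prove: the inclusion $\equiv_{F\tfunctor\mathcal A} \subseteq \delta_{\mathcal A}^*(\equiv_{\tfunctor L\mathcal A})$ is automatic since $\Eq$ is a functor, injectivity yields the converse, and the reverse implication (Cartesianness forcing injectivity) is neither established nor used. Second, your type-correct reading $F_\lambda\circ\Eq = \Eq\circ F$ of Assumption~\ref{assum:presEq} is indeed the intended one (the paper's displayed equation contains a typo), and your use of it is sound.
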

\noindent
In short, the Kupke-Rot logical setup for behavioural equivalence can be summarised as drawn left on the next page. Now if our indexed category $\Psi$ satisfies \ref{assum:quotient} (like in the case of coKleisli and Eilenberg-Moore categories), then $\bar \sfunctor=\sfunctor\circ \Qu$ as indicated in \eqref{eq:KupkeRot_behequivSetup}. However, in the case of Kleisli categories we will construct $\bar\sfunctor$ under some restrictions (cf.\thinspace Theorem~\ref{thm:KleisliQu}).
%\eqref{eq:KupkeRot_behequivSetup}.
\vspace{-0.3cm}
\begin{equation}\label{eq:KupkeRot_behequivSetup}
  \begin{diagram}[width=4em,height=2.5em]
  \elements{\Psi} & \pile{\rTo^{\bar\sfunctor} \\ {\scriptscriptstyle \perp} \\ \lTo_{\Eq \circ\tfunctor} } & \op{\cat A} && \elements{\Psi} & \pile{\rTo^{\sfunctor\circ \Qu} \\ {\scriptscriptstyle \perp} \\ \lTo_{\Eq \circ\tfunctor} } & \op{\cat A}\\
  \uTo^{\Eq} \dTo_p & & \dEMap && \dTo^{\Qu} \dashv \uTo~{\Eq} \phantom{\dashv}\dTo_p & & \dEMap\\
  \cat C & \pile{\rTo^{\sfunctor} \\ {\scriptscriptstyle \perp} \\ \lTo_{\tfunctor} } & \op{\cat A} && \cat C & \pile{\rTo^{\sfunctor} \\ {\scriptscriptstyle \perp} \\ \lTo_{\tfunctor} } & \op{\cat A}
\end{diagram}
\end{equation}
\vspace{-0.7cm}
\subsection*{Construction of $\bar \sfunctor$ for Kleisli categories}
Unfortunately, arbitrary (co)limits in general do not exists in a Kleisli category. For instance, one of our working categories $\klcat{\power}\cong\setrel$ (the category of sets and relations) does not have all coequalisers, but $\setrel$ has a reflective subcategory $\op{\set}$ that does. The presence of these coequalisers in the reflective subcategory will then be used to construct $\bar\sfunctor$.
\begin{definition}
  A subcategory $\cat B \rMono^{\jmath} \cat C$ is \emph{reflective} when the inclusion functor $\jmath$ has a left adjoint $\reflector$ (often called as reflector).
\end{definition}
\begin{theorem}[\cite{ABHKMS12}]\label{thm:reflection_coalg}
  If $\cat B \rMono^\jmath \cat C$ is a reflective subcategory of $\cat C$ and $\cat C \rTo^F \cat C$ preserves $\cat B$, i.e.,
  $\forall_{B,f\in \cat B}\ (FB \in \cat B \land Ff\in \cat B)$ and $F \circ \jmath = \jmath \circ F,$
  then %we have a diagram
  $
  \coalg {F}{B} \pile{\lTo~{\bar\reflector} \\ \rTo~{\bar \jmath} } \coalg {F}{C}
  $
  with $\bar\reflector \dashv \bar\jmath $.
  Here, $\bar\jmath$ is the obvious inclusion.
  %and the reflection $\bar\reflector$ maps a coalgebra $\alpha$ to the composition $ \reflector C \rTo^{\reflector \alpha} \reflector FC \rTo^{\vartheta_C} F \reflector C$,
%  where $\vartheta_C$ is the transpose of the arrow $FC \rTo^{F\eta_C} F\jmath \reflector C$ (note that $F \circ \jmath = \jmath \circ F$) and $\eta$ is the unit of $\reflector \dashv \jmath$.
\end{theorem}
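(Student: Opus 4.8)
The plan is to lift the reflection $\reflector \dashv \jmath$ objectwise to coalgebras and then invoke the universal-arrow characterisation of adjoints. Write $\eta\colon \mathrm{Id}_{\cat C} \Rightarrow \jmath\reflector$ and $\epsilon\colon \reflector\jmath \Rightarrow \mathrm{Id}_{\cat B}$ for the unit and counit of $\reflector \dashv \jmath$. First I would check that $\bar\jmath$ is well defined: for $(B,\beta)\in\coalg{F}{B}$ the arrow $\jmath\beta\colon \jmath B \to \jmath FB = F\jmath B$ (using $F\jmath=\jmath F$) is a legitimate $F$-coalgebra structure on $\jmath B$, so $\bar\jmath(B,\beta)=(\jmath B,\jmath\beta)$, and on morphisms $\bar\jmath$ acts as $\jmath$; fullness and faithfulness of the inclusion $\jmath$ then make $\bar\jmath$ full and faithful, so $\coalg{F}{B}$ will sit inside $\coalg{F}{C}$ as a subcategory.

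The substance is the construction of $\bar\reflector$ on objects. Given $(C,\gamma)\in\coalg{F}{C}$, the composite $F\eta_C\circ\gamma\colon C \to F\jmath\reflector C$ has codomain $F\jmath\reflector C = \jmath F\reflector C$, which is of the form $\jmath(-)$ precisely because $\reflector C\in\cat B$ and $F$ preserves $\cat B$. Hence I can transpose it across $\reflector\dashv\jmath$ to obtain a unique arrow $\bar\gamma\colon \reflector C \to F\reflector C$ in $\cat B$ characterised by
\[ \jmath\bar\gamma \circ \eta_C = F\eta_C\circ\gamma, \]
call this identity $(\ast)$; concretely $\bar\gamma=\epsilon_{F\reflector C}\circ\reflector(F\eta_C\circ\gamma)$, the mate of the identity $\jmath F = F\jmath$. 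I set $\bar\reflector(C,\gamma)=(\reflector C,\bar\gamma)$.

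Next I would show that the adjunction bijection $\cat B(\reflector C,B)\cong\cat C(C,\jmath B)$, sending $g$ to $g^\sharp=\jmath g\circ\eta_C$, restricts to a bijection between coalgebra homomorphisms $\bar\reflector(C,\gamma)\to(B,\beta)$ and $(C,\gamma)\to\bar\jmath(B,\beta)$. For $g\colon\reflector C\to B$, a direct computation using $F\jmath=\jmath F$, the identity $(\ast)$, and functoriality gives
\[ F(g^\sharp)\circ\gamma = \jmath(Fg\circ\bar\gamma)\circ\eta_C, \qquad \jmath\beta\circ g^\sharp = \jmath(\beta\circ g)\circ\eta_C. \]
Thus if $g$ is a coalgebra homomorphism (i.e.\ $Fg\circ\bar\gamma=\beta\circ g$) then $g^\sharp$ is one as well. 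Conversely, if $g^\sharp$ is a homomorphism the two right-hand sides agree, and since $w\mapsto\jmath w\circ\eta_C$ is the bijective transpose $\cat B(\reflector C,FB)\cong\cat C(C,\jmath FB)$, I may cancel $\eta_C$ to recover $Fg\circ\bar\gamma=\beta\circ g$. Naturality of the restricted bijection in $(B,\beta)$ is inherited from naturality of the underlying transpose, since pre- and post-composition with coalgebra homomorphisms are instances of the same operations on the underlying arrows.

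Finally, having exhibited for each $(C,\gamma)$ an object $\bar\reflector(C,\gamma)$ together with a bijection $\coalg{F}{B}(\bar\reflector(C,\gamma),-)\cong\coalg{F}{C}((C,\gamma),\bar\jmath-)$ natural in its argument, Mac Lane's universal-arrow theorem promotes $\bar\reflector$ to a functor left adjoint to $\bar\jmath$; moreover $(\ast)$ says exactly that $\eta_C$ is itself a coalgebra homomorphism $(C,\gamma)\to\bar\jmath\bar\reflector(C,\gamma)$, so the unit of $\bar\reflector\dashv\bar\jmath$ is carried by $\eta$. I expect the only delicate points to be the well-definedness of $\bar\gamma$ — it is here that the hypotheses \emph{$F$ preserves $\cat B$} together with $F\jmath=\jmath F$ are essential, as they alone make the transpose available — and the converse half of the homomorphism correspondence, where cancelling the unit $\eta_C$ relies on injectivity of the transpose rather than on any feature of the coalgebra data.
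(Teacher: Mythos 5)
Your proof is correct. Note that the paper itself contains no proof of this statement: it is imported verbatim from \cite{ABHKMS12}, and no argument for it appears in the appendix. Your argument is the standard one for lifting a reflection to coalgebra categories and matches the one in the cited source in substance: transpose $F\eta_C\circ\gamma$ across $\reflector\dashv\jmath$ (legitimate precisely because $F$ preserves $\cat B$ and $F\jmath=\jmath F$, so the codomain is of the form $\jmath(-)$), verify that the adjunction bijection restricts to coalgebra homomorphisms, and conclude by the universal-arrow theorem, with the unit carried by $\eta$. You also correctly isolate the only delicate step, namely that cancelling $\eta_C$ in the converse direction uses injectivity of the transpose $\cat{B}(\reflector C,FB)\cong\cat{C}(C,\jmath FB)$, which again requires $FB\in\cat B$; nothing is missing.
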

The reflector $\bar\reflector$ typically results in %an effect that is akin to
a form of (on-the-fly) determinisation (cf. Example~\ref{ex:NDAQuotient}).
Moreover, in our case studies, these reflective subcategories will also take the place of algebras in \eqref{eq:KupkeRot_behequivSetup}, and if these reflective subcategories have coequalisers, then we can construct $\bar \sfunctor$ in general.

So let $\cat B=\op{\cat A},\sfunctor=\reflector,\tfunctor=\jmath$, and $(X,R) \in \elements{\Psi}$. Then the idea is to use the following series of transformations (depicted below on the left) to construct $\bar\sfunctor$ as the equaliser of the parallel arrows $\sfunctor p_1',\sfunctor p_2'\in\cat A$. %Below we fix $\epsilon,\eta$ and $\epsilon',\eta'$ for the (co)unit of $\reflector\dashv\jmath$ and $\iota \dashv \forget{\_}$, respectively.
Below $p_i$ (for $i\in\{1,2\}$) are the obvious projection functions and each $p_i'$ is the transpose of $p_i$ under the free-forgetful adjunction $\iota\dashv\forget{\_}$.

\vspace{-0.35cm}
\begin{minipage}[t]{0.4\textwidth}
  \[
%\infer{\reflector X \parrows{\reflector p_1'}{\reflector p_2'} \reflector R \in \set}{
\infer{
\sfunctor \iota R \parrows{\sfunctor p_1'}{\sfunctor p_2'} \sfunctor X \in \op{\cat A}
}{
\infer{
\iota R \parrows{p_1'}{p_2'} X \in \klcat{T}
}{R \parrows{p_1}{p_2} \forget X\in \set}
}
%}
\]
\end{minipage}
\begin{minipage}[t]{0.5\textwidth}
\vspace{-0.4cm}
  \begin{diagram}[width=3.5em,height=2.65em]
  \bar\sfunctor (X,R) & \rTo^e & \sfunctor X  & \parrows{\sfunctor p_1'}{\sfunctor p_2'}  & \sfunctor (\iota R)\\
  \uExistsMap_{\bar\sfunctor f}&& \uTo^{\sfunctor f}& \eqnum\label{eq:coequaliser}& \uTo^{\sfunctor \iota g} \\
  \bar\sfunctor(Y,S) &\rTo_{e'}& \sfunctor Y  & \parrows{\sfunctor q_1'}{\sfunctor q_2'} & \sfunctor (\iota S)
  \end{diagram}
\end{minipage}

\noindent
Let $(\bar\sfunctor(X,R),e)$ be the equaliser of $\reflector p_i'$ in $\cat A$. Now $(X,R) \rTo^f (Y,S)\in\elements{\Psi}$ means that $X\rTo^f Y\in\klcat T$ and $R \subseteq \inverse{(\forget f \times \forget f)} S$. So there is a function $R \rTo^g S$ such that $\forget{f} \circ p_i = q_i \circ g$ with $p_i,q_i$ being the obvious projections when the relations $R,S$ are viewed as spans in $\set$. Moreover $f\circ p_i' = q_i' \circ \iota g$ due to the naturality of the counit of $\iota\dashv\forget{\_}$.
%\begin{diagram}[width=4em,height=2.5em]
%  \iota R & \parrows{\iota p_1}{\iota p_2} & \iota \forget{X} & \rTo^{\epsilon_Y'} & X \\
%  \dTo^{\iota g} && \dTo_{\iota \forget{f}} && \dTo_f\\
%  \iota S & \parrows{\iota q_1}{\iota q_2 } & \iota \forget{Y} &\rTo_{\epsilon'_Y} & Y
%\end{diagram}
So the two squares in \eqref{eq:coequaliser} commute and the universal property of equalisers gives the unique $\bar\sfunctor f$.
\begin{restatable}{theorem}{KleisliQuotients}\label{thm:KleisliQu}
  Under \ref{assum:product} and $\op{\cat A}$ being a reflective subcategory of $\klcat{T}$ having coequalisers, the above defined $\bar\sfunctor$ is a functor and left adjoint to $\bar \tfunctor=\Eq \circ \tfunctor$.
\end{restatable}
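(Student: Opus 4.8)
The plan is to treat the two assertions separately: first that the object assignment $(X,R)\mapsto\bar\sfunctor(X,R)$ extends to a functor $\bar\sfunctor\colon\elements{\Psi}\to\op{\cat A}$, and then that the equaliser maps $e$ assemble into a hom-set bijection, natural in both arguments, witnessing $\bar\sfunctor\dashv\bar\tfunctor$ with $\bar\tfunctor=\Eq\circ\tfunctor$.

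For \emph{functoriality}, the object part is well defined since $\op{\cat A}$ has coequalisers, so the coequaliser in $\op{\cat A}$ of $\sfunctor p_1',\sfunctor p_2'$---equivalently the equaliser $e$ in $\cat A$ drawn in \eqref{eq:coequaliser}---exists. On a morphism $f\colon(X,R)\to(Y,S)$ I would argue exactly as sketched below \eqref{eq:coequaliser}: from the fibre condition $R\subseteq f^*S$ extract the mediating $g\colon R\to S$, apply $\sfunctor$ to $f\circ p_i'=q_i'\circ\iota g$ to obtain the commuting right-hand square, and note that $\sfunctor f\circ e'$ equalises $\sfunctor p_1',\sfunctor p_2'$ because $e'$ equalises $\sfunctor q_1',\sfunctor q_2'$; the universal property of $e$ then yields the unique $\bar\sfunctor f$ with $e\circ\bar\sfunctor f=\sfunctor f\circ e'$. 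Preservation of identities and composites is immediate from the uniqueness clause.

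For the \emph{adjunction} I would present $\op{\cat A}(\bar\sfunctor(X,R),A)\cong\elements{\Psi}\big((X,R),\bar\tfunctor A\big)$ as a composite of three bijections. (i) By the universal property of the coequaliser $e$ the left-hand set is identified with those $k\colon\sfunctor X\to A$ in $\op{\cat A}$ that satisfy $k\circ\sfunctor p_1'=k\circ\sfunctor p_2'$. (ii) Transposing along $\sfunctor\dashv\tfunctor$ and invoking naturality of the transpose carries this to the set of $h\colon X\to\tfunctor A$ in $\klcat{T}$ with $h\circ p_1'=h\circ p_2'$. (iii) It then remains to match this condition with the fibre condition $R\subseteq h^*(\equiv_{\tfunctor A})$ defining a morphism $(X,R)\to\bar\tfunctor A=(\tfunctor A,\equiv_{\tfunctor A})$ in $\elements{\Psi}$: transposing $h\circ p_i'$ back along the Kleisli adjunction $\iota\dashv\forget{\_}$---under which $p_i'$ corresponds to the span leg $p_i\colon R\to\forget{X}$ induced by $\forget{\pi_i^X}$---produces $\forget{h}\circ p_i$, so $h\circ p_1'=h\circ p_2'$ says $\forget{h}\circ p_1=\forget{h}\circ p_2$, while unfolding $\equiv_{\tfunctor A}=\langle\forget{\pi_1^{\tfunctor A}},\forget{\pi_2^{\tfunctor A}}\rangle^{-1}(=_{\forget{\tfunctor A}})$ and $h^*=\forget{h\otimes h}^{-1}$ through the naturality square of Proposition~\ref{prop:eqfunexists} shows $R\subseteq h^*(\equiv_{\tfunctor A})$ to be the very same statement.

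The main obstacle is step (iii): simultaneously threading the two adjunctions (the reflector $\sfunctor\dashv\tfunctor$ and the Kleisli free--forgetful $\iota\dashv\forget{\_}$) and the $(-)^{\mathrm{op}}$ bookkeeping---a coequaliser in $\op{\cat A}$ being the equaliser $e$ in $\cat A$---while reconciling the abstract relation $R\in\Psi X=\power(\forget{X\otimes X})$ with its span presentation so that Proposition~\ref{prop:eqfunexists} can be applied. Here it helps that in a Kleisli category $\forget{\_}$ preserves products, so $\equiv_{\tfunctor A}$ collapses to genuine equality and the comparison simplifies. Once the three bijections are established, naturality of their composite in $(X,R)$ and in $A$ is routine, which completes $\bar\sfunctor\dashv\bar\tfunctor$.
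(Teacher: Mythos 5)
Your proposal is correct and takes essentially the same route as the paper's own proof: both reduce the abstract equality $\equiv_{\tfunctor A}$ to concrete equality using product preservation of $\forget{\_}$ (via Proposition~\ref{prop:eqfunexists}) and establish the hom-set correspondence by transposing along $\sfunctor\dashv\tfunctor$ and along $\iota\dashv\forget{\_}$ (which exchanges $p_i'$ with the span legs $p_i$), combined with the universal property of the equaliser $e$. Packaging this as a composite of three bijections rather than the paper's two directional constructions is only a presentational difference, and your explicit functoriality check merely fills in what the paper delegates to the text preceding the theorem.
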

%\subsection*{Construction of $\bar S$ in Eilenberg-Moore categories}

\section{Lifting of predicate and relation liftings}\label{sec:indMorKleisli}
In this paper, the indexed categories corresponding to predicates (or relations) on our working category (like (co)Kleisli or Eilenberg-Moore categories) are always induced by lifting an indexed category on the underlying base category $\set$ (for instance, recall $\Psi$ from \eqref{eq:indexedcatRelations}). In a similar spirit, our aim is to construct indexed morphisms on our working category by lifting an indexed morphism on $\set$. So consider an indexed category $\Phi$ of predicates given by $\Phi=\contrapower\circ\forget{\_}$ and an endofunctor $\cat C \rTo^{\bar F} \cat C$ modelling the branching type of behaviour of interest. %Another aim of this section is to prove the existence of $\bar \sfunctor$ (which is a sufficent condition for both adequacy and expressivity in Kupke-Rot setup) when $\cat C$ is a Kleisli category.

%So given a monad $\set \rTo^T \set$, a Kleisli extension $\klcat{T} \rTo^{\bar F} \klcat{T}$ of $\set \rTo^F \set$ (i.e., $\bar F \circ \iota = \iota \circ F$) induced by a distributive law $FT \rTo^\vartheta TF$, then our indexed categories modelling predicates and relations, denoted by $\Phi$ and $\Psi$, are given as follows: $\Phi=\contrapower\circ\forget{\_}$ and $\Psi=\contrapower\circ (\_\times\_) \circ \forget{\_}$. % is the composition $\op{\klcat{T}} \rTo^{\forget{\_}} \op{\set} \rTo^{\_ \times \_} \op\set \rTo^{\contrapower} \ccat.$
%Clearly $\Psi$ satisfies Assumptions~\ref{assum:Psi} and \ref{assum:equality}; thus there is an equality functor $\klcat{T} \rTo^{\Eq} \elements{\Psi}$.
%

\paragraph*{Lifting of predicate liftings}%\label{subsec:plift}
Next we give a recipe to construct a predicate lifting, i.e., an indexed morphism of type $\Phi \rTo^\lambda \Phi \bar F$. In particular, we need an endofunctor $\set \rTo^G \set$, a predicate lifting $\contrapower \rTo^{\sigma} \contrapower G$, and a natural transformation $\gamma$ as indicated below on the left.

\vspace{-0.5cm}
\begin{minipage}[t]{0.4\textwidth}
  \begin{diagram}[width=5em,height=2.5em]
  \cat C & \rTo_{\bar F} & \cat C\\
  \dTo^{\forget{\_}} & \Downarrow\gamma & \dTo_{\forget{\_}}\\
  \set & \rTo^G & \set
\end{diagram}
\end{minipage}
\begin{minipage}[t]{0.6\textwidth}
  \begin{diagram}[nohug,width=5em,height=1.25em]
    TFTX & \rTo_{\gamma_{TX}} & GTTX\\
    \dTo^{T\vartheta_X } & & \\
    TTFX & \eqnum\label{eq:thetamu_compatible}& \dTo_{G\mu_X} \\
    \dTo^{\mu_{FX}} &&\\
    TFX & \rTo^{\gamma_X }& GTX
  \end{diagram}
\end{minipage}

\vspace{0.3cm}
\noindent
As a result, we can define $\lambda$ by the composition:
\begin{equation}\label{eq:liftOfPlift}
  \contrapower \forget{X} \rTo^{\sigma_{\forget{X}}} \contrapower G\forget{X} \rTo^{\gamma_X^*} \contrapower\forget{F X}.
\end{equation}
%And using the naturality of $\sigma,\gamma$ we get
\begin{theorem}
  \label{th:lambda-indexed-morphism}
  The above mapping $\lambda$ is an indexed morphism.
\end{theorem}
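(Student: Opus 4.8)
The plan is to peel off the definition of an indexed morphism and then reduce the single naturality requirement that remains to the two naturality statements already available, namely that $\sigma$ is an indexed morphism on $\set$ and that $\gamma$ is a natural transformation. By Definition~\ref{def:indexed_cat}, an indexed morphism from $\Phi$ to $\Phi\bar F=\Phi\circ\op{\bar F}$ whose base component is the identity functor on $\cat C$ is nothing but a natural transformation between these two $\ccat$-valued presheaves. So I would isolate exactly two obligations: first, that each component $\lambda_X\colon\Phi X\to\Phi\bar F X$ is a morphism of $\ccat$; and second, that the family $(\lambda_X)_X$ is natural in $X$.

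The first obligation is immediate from the defining composite \eqref{eq:liftOfPlift}: the factor $\sigma_{\forget X}$ is the $\forget X$-component of the indexed morphism $\sigma$ and hence a $\ccat$-morphism, while $\gamma_X^{*}=\inverse{\gamma_X}$ is a reindexing functor of $\contrapower$ and therefore also a $\ccat$-morphism, so their composite $\lambda_X=\inverse{\gamma_X}\circ\sigma_{\forget X}$ lives in $\ccat$. For the second obligation I would fix an arrow $X\rTo^{f}Y$ in $\cat C$, write $f^{*}=\inverse{(\forget f)}$ and $(\bar F f)^{*}=\inverse{(\forget{\bar F f})}$ for the reindexing functors of $\Phi$ and $\Phi\bar F$, and verify the square $\lambda_X\circ f^{*}=(\bar F f)^{*}\circ\lambda_Y$. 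Substituting the definition of $\lambda$ and pushing $f^{*}$ across $\sigma$ via naturality of $\sigma$ at the map $\forget f$ (which gives $\sigma_{\forget X}\circ\inverse{(\forget f)}=\inverse{(G\forget f)}\circ\sigma_{\forget Y}$), then collapsing consecutive preimages by the contravariant functoriality of $\contrapower$ (that is, $\inverse{g}\circ\inverse{h}=\inverse{(h\circ g)}$), turns the goal into
\[
  \inverse{(G\forget f\circ\gamma_X)}\circ\sigma_{\forget Y}
  \;=\;
  \inverse{(\gamma_Y\circ\forget{\bar F f})}\circ\sigma_{\forget Y}.
\]
This in turn follows at once from $G\forget f\circ\gamma_X=\gamma_Y\circ\forget{\bar F f}$, which is precisely the naturality square of $\gamma$ at $f$.

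I do not expect a deep obstacle here; the conceptual content is simply that the naturality square for $\lambda$ factors cleanly into a $\sigma$-half (naturality of $\sigma$) and a $\gamma$-half (naturality of $\gamma$), and once this factorisation is seen the verification is routine. The only genuine care lies in bookkeeping the variance, since $\contrapower$, $\Phi$ and $\Phi\bar F$ are all contravariant and each reindexing reverses direction, so composites must be read in the correct order. The real work hidden behind the hypotheses is checking that $\gamma$ is natural in the first place: in the concrete (co)Kleisli instantiation this is not automatic and is exactly what the compatibility diagram \eqref{eq:thetamu_compatible} secures, but for the abstract statement naturality of $\gamma$ is assumed and so plays no further role in proving the present theorem.
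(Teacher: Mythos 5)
Your proof is correct and is exactly the routine verification intended here: the paper gives no explicit proof of Theorem~\ref{th:lambda-indexed-morphism}, treating it as immediate from the definition of $\lambda$ as the composite \eqref{eq:liftOfPlift}, and your decomposition of the naturality square into a $\sigma$-half (naturality of the predicate lifting $\sigma$ at $\forget{f}$) and a $\gamma$-half (naturality of $\gamma$ at $f$), glued together by the contravariant functoriality of $\contrapower$ collapsing $\inverse{\gamma_X}\circ\inverse{(G\forget{f})}$ and $\inverse{(\forget{\bar F f})}\circ\inverse{\gamma_Y}$ into preimages of single composites, is precisely that verification. Your closing remark is also well calibrated: the genuine content sits in the hypotheses, namely that $\gamma$ is natural at all, which in the Kleisli case is exactly what the compatibility condition of Lemma~\ref{lemma:theta&mu_compatible} secures.
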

Note that %this technique (depicted in the square drawn above on the left) also works
in the case of coKleisli  and Eilenberg-Moore categories, we simply let $G=F$ and $\bar F$ be a coKleisli extension/Eilenberg-Moore lifting of $F$, which results in a distributive law of type $\forget{\bar F\_} \rTo^\gamma G \forget{\_}$; in the case of Eilenberg-Moore categories, such natural transformations are also known as EM-laws \cite{Jacobs_etal:trace_EM_KL}. %Such a technique also work in the coKleisli case (cf.\thinspace Section~\ref{sec:CTS}).

\noindent
In the case of Kleisli categories, the situation is slightly
complicated. This stems from the fact that the distributive law $FT \rTo^\vartheta
TF$ (which induces a Kleisli extension $\bar F$ of $F$) results in a
natural transformation in the `wrong' direction $F \forget{\_}
\rTo^\vartheta \forget{\bar F\_}$. However, in various
applications, $G$ is typically associated with the branching type of a
deterministic version of the corresponding system of interest (such as $G=\_^\act \times 2$ in the case of NDA) exists. The next result helps in finding such a distributive law $\gamma$ for a given $G$ in a more elementary way.
\begin{restatable}{lemma}{thetamuComp}\label{lemma:theta&mu_compatible}
  %Consider two functors $\cat C \rTo^{F,G} \cat C$ and a monad $T$ on  $\cat C$. Moreover,
  Let $\bar F$ be a Kleisli extension of $F$ induced by a distributive law $FT \rTo^\vartheta TF$. Then a natural transformation $TF \rTo^\gamma GT$ compatible with
  $\vartheta$ and $\mu$ (i.e., Square~\ref{eq:thetamu_compatible}
  commutes) induces a distributive law $\forget{\_}\circ \bar F \rTo G\circ \forget{\_}$.   Moreover, the converse also holds.
\end{restatable}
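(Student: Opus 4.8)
The plan is to exploit that the asserted distributive law $\forget{\_}\circ \bar F \rTo G\circ \forget{\_}$ and the given $TF \rTo^\gamma GT$ have exactly the same components $\gamma_X\colon TFX \to GTX$, since $\forget{\bar F X} = TFX$ and $G\forget{X} = GTX$. Thus the lemma reduces to a single statement: a family $(\gamma_X)_X$ is natural with respect to $\klcat T$ (i.e.\ is a distributive law of the stated type) if and only if it is natural with respect to $\set$ and Square~\eqref{eq:thetamu_compatible} commutes. Both the forward implication and its converse then read off from this equivalence.

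First I would record the explicit action of the two functors on Kleisli arrows: on objects $\forget X = TX$ and $\bar F X = FX$, and for a Kleisli arrow $X \rTo^f Y$ (a map $f\colon X \to TY$ in $\set$) one has $\forget f = \mu_Y \circ Tf$ and $\bar F f = \vartheta_Y \circ Ff$ (the latter as in Example~\ref{ex:nda}), together with the defining equation $\bar F \circ \iota = \iota \circ F$ of a Kleisli extension. The decisive structural observation is then that every Kleisli arrow factors as $f = \epsilon_Y \bullet \iota(f)$, where $\iota(f)\colon X \rTo TY$ is the free arrow on the set-map $f\colon X \to TY$, $\epsilon_Y\colon TY \rTo Y$ is the counit of $\iota \dashv \forget{\_}$ (the Kleisli arrow $\id{TY}$), and $\bullet$ is Kleisli composition; this is a one-line check from $\mu_Y \circ \eta_{TY} = \id{TY}$. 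Since naturality squares paste along composition and $\forget{\_}, \bar F$ are functors (the latter by the coherence laws~\eqref{eq:KLlaw}), a family $(\gamma_X)_X$ is $\klcat T$-natural if and only if it is natural on all free arrows $\iota h$ and on all counit arrows $\epsilon_Y$.

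The two remaining verifications are direct calculations. For a free arrow $\iota h$ with $h\colon X \to Y$ in $\set$ I would compute $\forget{\iota h} = Th$ (using $\mu \circ T\eta = \text{id}$) and $\forget{\bar F \iota h} = \forget{\iota F h} = TFh$, whence the naturality square for $\iota h$ is precisely $GTh \circ \gamma_X = \gamma_Y \circ TFh$, i.e.\ $\set$-naturality of $\gamma$ at $h$; letting $h$ range over all set-maps recovers full $\set$-naturality. For the counit $\epsilon_Y$ I would compute $\forget{\epsilon_Y} = \mu_Y$ and, from $\bar F \epsilon_Y = \vartheta_Y$, $\forget{\bar F \epsilon_Y} = \mu_{FY}\circ T\vartheta_Y$, so its naturality square reads $G\mu_Y \circ \gamma_{TY} = \gamma_Y \circ \mu_{FY} \circ T\vartheta_Y$ --- which is exactly commutativity of Square~\eqref{eq:thetamu_compatible} at $Y$.

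Combining these gives both directions at once: $\set$-naturality together with Square~\eqref{eq:thetamu_compatible} is precisely naturality on the two generating classes of Kleisli arrows, hence (by the factorization and pasting) full $\klcat T$-naturality, i.e.\ the distributive law; conversely, restricting a $\klcat T$-natural family to free and counit arrows delivers $\set$-naturality and Square~\eqref{eq:thetamu_compatible}. I expect the only real obstacle to be the bookkeeping in the counit step --- correctly pushing $\epsilon_Y = \id{TY}$ through $\bar F f = \vartheta_Y \circ Ff$ and then through $\forget{\_}$, and confirming that the emerging equation coincides on the nose with Square~\eqref{eq:thetamu_compatible} rather than up to some unit or associativity adjustment.
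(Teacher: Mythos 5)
Your proposal is correct and takes essentially the same route as the paper: the paper's forward direction stacks the $\set$-naturality square of $\gamma$ at the underlying map $f\colon C \to TD$ on top of the compatibility pentagon \eqref{eq:thetamu_compatible}, which is precisely the pasting you obtain from the factorisation $f = \epsilon_D \bullet \iota(f)$, and the paper's converse instantiates naturality at $f = \id{TC}$ viewed as a Kleisli arrow $TC \rTo C$ --- exactly your counit case. Your generator-based framing just makes the paper's two-stage diagram chase explicit (and is slightly more complete on the converse, where the paper leaves the recovery of $\set$-naturality via free arrows implicit).
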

\begin{remark}
  Note that the compatibility property in the above lemma appeared in \cite{Jacobs_etal:trace_EM_KL} as part of an `extension' natural transformation. In short, the properties of an extension natural transformation are more stringent than of Lemma~\ref{lemma:theta&mu_compatible}. %we claim that their `extension' natural transformation correspond (using the notations of \cite{Jacobs_etal:trace_EM_KL}) to natural transformation of type $\hat G K \rTo K \bar F$, where $\hat G,\bar F$ is the lifting of a $G$
\end{remark}
\paragraph*{Lifting of relation liftings}
The above idea can also be used to construct a relation lifting, i.e., an indexed morphism of type $\Psi \rTo \Psi \bar F$, where $\Psi$ is the indexed category of abstract relations given in \eqref{eq:indexedcatRelations}. So now given a relation lifting $\contrapower (X\times X) \rTo^{\sigma_X} \contrapower (GX\times GX)$ of $G$, then we can define $\Psi \rTo^{\bar\lambda} \Psi\bar F$:
\vspace{-0.2cm}
\begin{diagram}[height=3em]
  \Psi X &\rTo^{\langle \pi_1^X,\pi_2^X\rangle_!}& \contrapower (\forget X \times \forget X) & \rTo^{\sigma_{\forget{X}}} & \contrapower (G\forget{X} \times G\forget{X}) \\
  & & & & \dTo_{\inverse{(\gamma_X\times\gamma_X)}} && \eqnum\label{eq:liftofRlift}\\ &&&& \contrapower{(\forget{\bar FX}\times\forget{\bar FX})} & \rTo^{\inverse{\langle \pi_1^{FX},\pi_2^{FX}\rangle}} & \Psi\bar FX.
\end{diagram}
Here, we use the fact that $\elements{\contrapower}$ is a bifibration \cite{jacobs-fibrations}, i.e., for any function $X \rTo^f Y$ the reindexing functor $\inverse f$ has the direct image functor $f_!$ as its left adjoint. Now under the following assumption we can show that the $\bar\lambda$ is indeed an indexed morphism.
\begin{enumerate}[label=\textbf{A\arabic*},start=4]
\item\label{assum:projpullback} The square drawn in \eqref{eq:squarepbk} is a weak pullback in $\set$ for every $f\in\cat C$.
\end{enumerate}
The above assumption ensures that, in the context of $\elements{\contrapower}$, the square in \eqref{eq:squarepbk} satisfies the Beck-Chevalley condition, i.e., the following equation holds
\[
\langle\forget{\pi_1^X},\forget{\pi_2^X}\rangle_! \circ \inverse{(f\otimes f)} = \inverse{(\forget{f} \times \forget f)}\circ \langle\forget{\pi_1^Y},\forget{\pi_2^Y}\rangle_! \enspace.
\]
In turn this equation is used in diagram chasing to show that $\bar\lambda$ is a natural transformation. Furthermore, \ref{assum:projpullback} trivially holds when $\cat C$ is a Kleisli or Eilenberg-Moore category (cf.\thinspace Corollary~\ref{cor:projpullback}) because the canonical function $\langle\forget{\pi_1^X},\forget{\pi_2^X}\rangle$ is a bijection for each $X\in\cat C$. In other words, when $\cat C$ is a Kleisli or Eilenberg-Moore category, $\Psi$ can be alternatively defined as the composition $\contrapower\circ (\forget{\_} \times \forget{\_})$.
\vspace{-0.5cm}
\begin{restatable}{theorem}{RelLiftIndMorphism}\label{thm:RelLiftIndMorphism}
%If the square in \ref{assum:projpullback} is a weak pullback square in $\set$ for each $X\rTo^f Y\in\cat C$, then
Under \ref{assum:projpullback} $\bar\lambda$ as defined in \eqref{eq:liftofRlift} is an indexed morphism.
\end{restatable}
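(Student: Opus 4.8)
The plan is to check that $\bar\lambda$ is a natural transformation from $\Psi$ to $\Psi\bar F$, i.e.\ that for every arrow $X \rTo^f Y$ in $\cat C$ one has $(\bar F f)^*\circ\bar\lambda_Y = \bar\lambda_X\circ f^*$, where the reindexing functors of $\Psi$ are $f^* = \inverse{(\forget{f\otimes f})}$ and $(\bar F f)^* = \inverse{(\forget{\bar F f\otimes\bar F f})}$. Writing $\rho_X$ for the canonical function $\langle\forget{\pi_1^X},\forget{\pi_2^X}\rangle$, the definition \eqref{eq:liftofRlift} factors $\bar\lambda_X$ as the composite $\inverse{\rho_{\bar F X}}\circ\inverse{(\gamma_X\times\gamma_X)}\circ\sigma_{\forget X}\circ(\rho_X)_!$. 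I would therefore reduce the single naturality square to four stacked cells, one per layer, and splice them together.

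First I would slide $f^*$ past the outermost layer $(\rho_\bullet)_!$ by Beck--Chevalley: since \ref{assum:projpullback} makes \eqref{eq:squarepbk} a weak pullback, we obtain $(\rho_X)_!\circ\inverse{(\forget{f\otimes f})} = \inverse{(\forget f\times\forget f)}\circ(\rho_Y)_!$ in the bifibration $\elements{\contrapower}$ (this is exactly the displayed equation preceding the statement). Second, I would pass the resulting $\inverse{(\forget f\times\forget f)}$ through $\sigma$ using naturality of $\sigma$ as the relation lifting of $G$ over $\set$-relations, turning it into $\inverse{(G\forget f\times G\forget f)}$. Third, the two consecutive reindexings compose contravariantly to $\inverse{((G\forget f\circ\gamma_X)\times(G\forget f\circ\gamma_X))}$, and naturality of $\gamma$ ($G\forget f\circ\gamma_X = \gamma_Y\circ\forget{\bar F f}$) rewrites this as $\inverse{(\forget{\bar F f}\times\forget{\bar F f})}\circ\inverse{(\gamma_Y\times\gamma_Y)}$. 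Finally, the leftover $\inverse{\rho_{\bar F X}}\circ\inverse{(\forget{\bar F f}\times\forget{\bar F f})}$ equals $\inverse{(\forget{\bar F f\otimes\bar F f})}\circ\inverse{\rho_{\bar F Y}}$ because \eqref{eq:squarepbk} commutes for the arrow $\bar F f$ (Proposition~\ref{prop:eqfunexists}). Read in sequence, these four rewrites carry $\bar\lambda_X\circ f^*$ into $(\bar F f)^*\circ\bar\lambda_Y$, which is the desired naturality.

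The main obstacle is the very first cell. The last three layers are all reindexing (contravariant) functors, so their compatibility with $f$ follows formally from functoriality of $\contrapower$ together with naturality of $\sigma$, $\gamma$ and $\rho$, and they glue without further hypotheses. The outermost layer $(\rho_X)_!$, by contrast, is a direct-image (covariant) functor, so one cannot commute it past a reindexing by functoriality alone; one genuinely needs the interchange law relating $f_!$ and reindexing, which is precisely the Beck--Chevalley condition. This is exactly why \ref{assum:projpullback} is imposed: the weak-pullback property of \eqref{eq:squarepbk} is what validates Beck--Chevalley in $\elements{\contrapower}$ and thereby makes the first cell commute. With that cell settled, the remaining reductions are routine diagram chasing.
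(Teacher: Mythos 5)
Your proof is correct and follows exactly the route the paper indicates: isolate the single covariant (direct-image) layer $(\langle\forget{\pi_1^X},\forget{\pi_2^X}\rangle)_!$ and commute it past reindexing via the Beck--Chevalley equation that Assumption~\ref{assum:projpullback} validates, then dispose of the three remaining contravariant layers by naturality of $\sigma$, naturality of $\gamma$, and commutativity of \eqref{eq:squarepbk} for the arrow $\bar F f$ (Proposition~\ref{prop:eqfunexists}). The paper merely sketches this as ``Beck--Chevalley plus diagram chasing''; your write-up carries out the same argument in full, including the correct identification of why only the first cell needs the weak-pullback hypothesis.
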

\begin{corollary}\label{cor:projpullback}
  If the forgetful functor $\cat C \rTo^{\forget{\_}} \set$ is product preserving, then \ref{assum:projpullback} is always satisfied. As a result, $\bar\lambda$ defined in \eqref{eq:liftofRlift} is an indexed morphism.
\end{corollary}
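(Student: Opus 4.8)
The plan is to establish Assumption~\ref{assum:projpullback} from product preservation and then appeal directly to Theorem~\ref{thm:RelLiftIndMorphism}; no separate work on $\bar\lambda$ itself is required. The crucial observation is that product preservation of $\forget{\_}$ is precisely the statement that, for each object $X\in\cat C$, the triple $(\forget{X\otimes X},\forget{\pi_1^X},\forget{\pi_2^X})$ is a product diagram of $\forget X$ with itself in $\set$. Since $\forget X\times\forget X$ is the chosen such product in $\set$, the universal property identifies the canonical comparison arrow $\langle\forget{\pi_1^X},\forget{\pi_2^X}\rangle\colon\forget{X\otimes X}\to\forget X\times\forget X$ as the unique mediating map between two products of the same pair, hence as an isomorphism, i.e.\ a bijection for every $X\in\cat C$.

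With this in hand I would inspect the square in \eqref{eq:squarepbk}: its two horizontal arrows are exactly the comparison maps $\langle\forget{\pi_1^X},\forget{\pi_2^X}\rangle$ and $\langle\forget{\pi_1^Y},\forget{\pi_2^Y}\rangle$, both now bijections. The remaining step is the elementary categorical fact that a commutative square in which two parallel sides are isomorphisms is automatically a pullback: given a competing cone $(Z,u,v)$ over the cospan formed by $\forget f\times\forget f$ and $\langle\forget{\pi_1^Y},\forget{\pi_2^Y}\rangle$, the mediating map is forced to be $\langle\forget{\pi_1^X},\forget{\pi_2^X}\rangle^{-1}\circ u$, and commutativity of the square together with the invertibility of the bottom comparison map verifies both pullback equations and uniqueness. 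In particular the square is a weak pullback for every $f\in\cat C$, which is exactly Assumption~\ref{assum:projpullback}.

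Having secured \ref{assum:projpullback}, Theorem~\ref{thm:RelLiftIndMorphism} applies verbatim and yields that $\bar\lambda$ from \eqref{eq:liftofRlift} is an indexed morphism, completing the corollary. I do not expect a genuine obstacle here, since the statement is a direct specialisation of the preceding theorem; the only point deserving a line of care is the first one, namely making explicit that ``product preserving'' forces the comparison arrow to be an \emph{iso} rather than merely some map into $\forget X\times\forget X$. Once that is spelled out, the weak-pullback condition is immediate, and indeed one obtains the stronger fact that \eqref{eq:squarepbk} is a genuine pullback.
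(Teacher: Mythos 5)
Your proposal is correct and follows exactly the route the paper intends: product preservation makes each comparison map $\langle\forget{\pi_1^X},\forget{\pi_2^X}\rangle$ a bijection (this is precisely the justification the paper gives in the text preceding the corollary), and a commutative square whose two parallel sides are isomorphisms is a genuine (hence weak) pullback, so Assumption~\ref{assum:projpullback} holds and Theorem~\ref{thm:RelLiftIndMorphism} applies. Your explicit verification of the mediating map and uniqueness, using that the bottom comparison map is monic, is a careful spelling-out of what the paper leaves as immediate.
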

%\begin{diagram}
%  \contrapower X & \pile{ \rTo~{f_!} \\ {\scriptscriptstyle \perp} \\ \lTo~{\inverse{f}}} & \contrapower Y
%\end{diagram}

\section{Nondeterministic automata (NDA)}\label{sec:KleisliEx:NDA}
Recall the necessary parameters from Example~\ref{ex:nda} for coalgebraic modelling of an NDA, i.e., $T=\power$, $\cat C=\klcat{\power}\cong\setrel$, $F=\nda{\_}$, the distributive law $\vartheta$ given in \eqref{eq:thetaNDA}, and the free-forgetful adjunction $\iota\dashv\forget{\_}$ associated with any Kleisli category. To apply Theorem~\ref{thm:KleisliQu}, we also recall the reflective subcategory $\op{\set}$ of $\setrel$ from \cite{ABHKMS12}. %Note that a subcategory is \emph{reflective} if the inclusion functor has a left adjoint (called as reflector $\reflector$).
%\begin{definition}
%  A subcategory $\cat B \rMono^{\jmath} \cat C$ is \emph{reflective} when the inclusion functor $\jmath$ has a left adjoint $\reflector$ (often called as reflector).
%\end{definition}
%consider the functors $\jmath$ and $\reflector$ (
Below $X \rTo^f Y \in \set$ and $X \rTo^g Y \in\setrel$:
{\allowdisplaybreaks
\begin{align*}
\jmath X =& X & Y\rTo^{\jmath f} X \in \setrel && y \mathrel {\jmath f} x \iff fx =y \\
\reflector X =& \power X & \reflector X \rTo^{\reflector g} \reflector Y\in\op\set && \reflector g(V)=\{x\mid \exists_{y\in V}\ x \mathrel g y\}.
\end{align*}
}

\begin{wrapfigure}[2]{r}{0.4\textwidth}
\vspace{-1cm}
  \begin{tikzpicture}
\node [draw, circle] (S1) at (-2,0) {$x$}; 			
\node [accepting,draw, circle](S3) at (0,0) {$z$};

\draw  [->,thick] (S1) to node [above]{$a$} (S3);
\node [draw, circle] (T2) at (2,0) {$y$}; 			
\draw  [->,thick] (T2) to node [above]{$a,b$} (S3);
\end{tikzpicture}
\end{wrapfigure}
\noindent
Next we illustrate the definition of $\bar \sfunctor$ and how the unit of $\bar\sfunctor\dashv\bar\tfunctor$ maps an NDA to the largest subautomaton (respecting language equivalence) obtained after backward determinisation of the given NDA. It is worthwhile to note that the abstract and concrete state space coincide (up to isomorphism) in the case of NDA because forgetful functor preserves products, i.e., $\power (X+X) =\forget{X+X} \cong \forget X \times \forget X = \power X \times\power X$. Therefore, as mentioned earlier, we will simplify our presentation by working with the indexed category $\contrapower\circ (\forget{\_} \times \forget{\_})$.

\begin{example}\label{ex:NDAQuotient}
Consider the NDA drawn above on the right with the accepting state $z$ as a coalgebra $X\rTo^\alpha\act\times X +1 \in \klcat{\power}$. Logical equivalence $\simeq$ is the least
  equivalence that equates $\{x,y\}$ with $\{y\}$ (both accept
  the language $\{a,b\}$) and $\{x,y,z\}$ with $\{y,z\}$ (both accept
  the language $\{a,b,\epsilon\}$). Also, for any $U,U'\subseteq X$ such that $U \mathrel R U'$ we have $(U,U') \mathrel {p_1'} x \iff x\in U$ and $(U,U') \mathrel {p_2'} x \iff x\in U'$. Note that $p_i'$ are transpose of $p_i$ for $i\in\{1,2\}$ (see \eqref{eq:coequaliser}). So the equaliser $\bar\sfunctor(X,\simeq)$ of $\reflector p_i'$ can be computed as follows:
  \[
  \bar\sfunctor(X,\simeq)  = %\{W\in \power X \mid \tilde \reflector p_1' W = \reflector p_2' W\}=
  \big\{W\in \power X \mid \forall_{U,V}\ (U \cap W \neq \emptyset \land U \simeq V) \implies V \cap W \neq \emptyset\big\}.
  \]
%  {\allowdisplaybreaks
%\begin{align*}
%\bar\sfunctor(X,\equiv) & = \{W\in \power X \mid \tilde \reflector p_1' W = \reflector p_2' W\} \\
%%=&\ \big\{W\in\power X \mid \{(U,V) \mid U\equiv V \land U \cap W \neq \emptyset \}  = \{(U,V) \mid U\equiv V \land V \cap W \neq \emptyset \}\big\}\\
%=&\ \big\{W\in \power X \mid \forall_{U,V}\ (U \cap W \neq \emptyset \land U \equiv V) \implies V \cap W \neq \emptyset\big\}.
%\end{align*}
%}
%\noindent
The arrow $\bar\sfunctor(X,\simeq) \rTo^{\beta} F\bar\sfunctor(X,\simeq)\in\op\set$ is defined by the following (depicted on the left) universal property of equaliser in $\set$. Here, $\bar \reflector \alpha$ is the backward determinisation of the
given coalgebra (as described, e.g. in \cite{ABHKMS12} as a deterministic automaton accepting the reverse language), i.e. it maps $(a,U)\mapsto \{x\mid
\exists_{x'\in U} (a,x')\in\alpha(x)\}$ and $\bullet \mapsto \{x
\mid \bullet \in \alpha(x)\}$. Thus, in essence, $\beta$ acts like
$\bar\reflector\alpha$ on the elements of $\bar\sfunctor(X,\simeq)$.

\vspace{-0.2cm}
\hspace{-0.67cm}
\begin{minipage}{0.3\textwidth}
\scalebox{0.8}{
  \begin{diagram}[width=3em]
  \bar\sfunctor(X,\simeq) & \rMono^{e_X} & \power X & \parrows{\tilde \pi_1 }{\tilde \pi_2} & \power \iota \simeq\\
  \uExistsMap^{\beta} & & \uTo^{\bar\reflector\alpha}\\
   F\bar\sfunctor(X,\simeq)& \rMono^{Fe_X} &  F\power X
\end{diagram}}
\end{minipage}\hspace{0.6cm}
\begin{minipage}{0.6\textwidth}
  \scalebox{0.8}{
\begin{tikzpicture}
\tikzstyle{state}=[thick, draw, ellipse, minimum width=20pt, inner sep=2.5pt,
    align=center]
\node[state] (x1) at (-2.5,3) {$\emptyset$};
\node[state] (x2) at (-4.5,2.5) {$\{x,y\}$};
\node[state] (x3) at (-2,1) {$\{y\}$};
\node[state,accepting] (x4) at (-4,0.5) {$\{z\}$};
\node[state] (x5) at (-5,-0.5) {$\{y,z\}$};
\node[state] (x6) at (-6.5,-1) {$\{x,y,z\}$};
\node[state] (lx1) at (-8,1.5) {$y$};
\node[state] (lx2) at (-10.5,2.5) {$x$};
\node[state,accepting] (lx3) at (-10,0) {$z$};
\path[->, thick]
 (x1) edge node[above left] {$a,b$} (x2)
 (x1) edge node[above right] {$a,b$} (x3)
 (x2) edge node[above right] {$a$} (x4)
 (x3) edge node[above left] {$b$} (x4)
 (x2) edge node[below left] {$a$} (x5)
 (x3.south) edge node[below right] {$b$} (x5)
 (x2) edge[bend right] node[below left] {$a$} (x6)
 (x3.south) edge[bend left] node[below right] {$b$} (x6);
 \draw  [->] (x1) edge [loop right] node {$a,b$} (x1);
\draw[dotted]
	(lx2) edge (x2)
      (lx2) edge (x6)
      (lx1) edge (x2)
      (lx1) edge (x3)
      (lx1) edge (x5)
      (lx1) edge (x6)
      (lx3) edge (x4)
      (lx3) edge (x5)
      (lx3) edge (x6);
\draw[->,thick]
(lx2) edge node[left] {$a$} (lx3)
(lx1) edge node[above left] {$a,b$} (lx3);
\end{tikzpicture}}
\end{minipage}

\medskip
\noindent
In this example, we obtain as $\beta$ the automaton drawn above on
the right with six states. The relation $\kappa_X$ indicated by dotted line is the transpose of $e_X$ w.r.t.\thinspace $\reflector\dashv \jmath$;
concretely, $x \mathrel{\kappa_X} U \iff x\in U$. Furthermore $\kappa_X\in\setrel$ is a witnessing coalgebra homomorphism because
$\bar F(\kappa_X) \circ \alpha = \beta \circ
\kappa_X$.  %This can be validated by inspecting  (for $x\in X,W\in E(X,\equiv)$):
%\begin{itemize}
%  \item $\exists_{U\in E(X,\equiv)}\ (x\in U \land (a,W)\in\beta U) \iff \exists_{x'\in X} ((a,x')\in\alpha(x) \land x'\in W)$;
%  \item $\bullet \in \alpha(x) \iff \exists_{U\in E(X,\equiv)}\ (x\in U \land \bullet\in \beta U)$.
%  \end{itemize}
  Note that $|\kappa_X|$ maps both $\{x,y\},\{y\}$ to
  $\{\{x,y\},\{y\},\{y,z\},\{x,y,z\}\}$, witnessing the fact that they
  are language equivalent.
  Hence, the
  coequaliser gives us the largest sub-automaton of the
  backwards determinisation that respects logical
  equivalence (removing $\emptyset$, $\{y,z\}$, and $\{x,y,z\}$ will result in the smallest such sub-automaton).
\end{example}

\paragraph*{Predicate liftings for NDAs}
To apply techniques from Section~\ref{sec:indMorKleisli}, we set $G=\_^\act \times 2$ and define $\gamma$ as follows:
\begin{align*}
   % &\act\times \power X + 1 \rTo^{\theta_X} \power (\act \times X + 1) \qquad\quad (a,U) \mapsto \{a\}\times U, \ \bullet \mapsto \{\bullet\}\\
   &\power(\act\times X + 1) \rTo^{\gamma_X} (\power X)^\act \times 2 \qquad\quad\quad \bar U \mapsto (\gamma_X^\act \bar U,\gamma_X^2\bar U), \\
   \text{where} &\ \gamma_X^\act\bar U(a) = \{x \mid (a,x)\in\bar U\} \qquad \text{and}\qquad
   \gamma_X^2\bar U = 1 \iff \bullet \in \bar U.
\end{align*}
Moreover, from \cite{Jacobs_etal:trace_EM_KL} we know that $\gamma$ is compatible with $\theta$ and $\bigcup$ in the sense of
Lemma~\ref{lemma:theta&mu_compatible}. %In addition, we also need a predicate lifting of $G=\_^\act\times 2$ w.r.t. the indexed category $\contrapower$. So
Next consider the family of liftings $\contrapower X \rTo^{\sigma_X^a} \contrapower(X^\act\times 2)$ (for each $a\in\act$) and $\contrapower X \rTo^{\sigma_X^\term} \contrapower (X^\act\times 2)$:
$
U\mapsto \{(p,b)\in X^\act\times 2 \mid p(a) \in U\}$
and %\qquad \text{and} \qquad
$U \mapsto \{(p,1) \mid p\in X^\act\}$, respectively.
%\]
\begin{restatable}{lemma}{predliftnda}\label{lem:predlift_nda}
  The above mappings $\sigma_X^a$ and $\sigma_X^\term$ are indexed morphisms.
\end{restatable}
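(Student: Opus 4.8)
The plan is to unfold what it means for $\sigma^a$ and $\sigma^\term$ to be indexed morphisms of type $\contrapower \rTo \contrapower G$ with $G = \_^\act \times 2$. Since the functor component $G$ is already fixed, by Definition~\ref{def:indexed_cat} the only obligation is that each of $\sigma^a, \sigma^\term$ is a natural transformation $\contrapower \rTo \contrapower G$. As the fibres $\contrapower X = (\power X, \subseteq)$ are thin, this splits into two checks: that every component is a monotone map of posets (so that it is a functor between the fibres), and that the naturality square commutes for each $X \rTo^f Y$ in $\set$, which---because the codomain is a poset---amounts to an honest equality of subsets of $GX$ obtained from the two composites. Monotonicity I would dispatch immediately: if $U \subseteq U'$ then $\sigma_X^a U = \{(p,b) \mid p(a) \in U\} \subseteq \{(p,b) \mid p(a) \in U'\} = \sigma_X^a U'$, and $\sigma^\term$ is a constant family (its value $X^\act \times \{1\}$ ignores the input), hence trivially monotone.

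The substance is naturality. First I would record that $G$ sends $X \rTo^f Y$ to $Gf = f^\act \times \id 2$, i.e.\ $(p,b) \mapsto (f \circ p, b)$, and that $\contrapower$ reindexes by preimage, so $\contrapower f = \inverse f$ and $\contrapower Gf = \inverse{(Gf)}$. For $\sigma^a$ the naturality square requires $\sigma_X^a(\inverse f V) = \inverse{(Gf)}(\sigma_Y^a V)$ for every $V \subseteq Y$; I would verify this by an element chase, showing both sides equal $\{(p,b) \in X^\act \times 2 \mid f(p(a)) \in V\}$ via the identity $(f \circ p)(a) = f(p(a))$. For $\sigma^\term$ the left-hand side is the constant set $X^\act \times \{1\}$, while the right-hand side $\inverse{(Gf)}(Y^\act \times \{1\})$ also equals $X^\act \times \{1\}$, precisely because $Gf$ leaves the $2$-coordinate untouched; so the square commutes in this case as well.

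I do not anticipate a genuine obstacle here: both maps are preimage-style predicate constructions and the whole argument is routine element-chasing. The only point deserving care is keeping the variance and the action of $G$ on morphisms straight---in particular, recognising that $\sigma^\term$ is a constant family, so that its naturality reduces entirely to $Gf$ preserving the $2$-coordinate rather than to any property of the argument predicate.
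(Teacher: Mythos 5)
Your proposal is correct and follows essentially the same route as the paper: an element chase verifying $\sigma_X^a(\inverse{f}V) = \inverse{(f^\act\times 2)}(\sigma_Y^a V)$ via $(f\circ p)(a)=f(p(a))$, and the analogous (trivial, since $\sigma^\term$ is a constant family) computation for $\sigma_X^\term$. Your explicit monotonicity check of the fibre components is a small addition the paper leaves tacit, but it does not change the argument.
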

%\end{lemma}
\noindent
Thanks to Theorem~\ref{th:lambda-indexed-morphism},
$\lambda^a=\inverse{\gamma}\circ \sigma^a,\lambda^\term=\inverse{\gamma}\circ \sigma^\term$ are valid predicate liftings for the functor $\setrel \rTo^{\overline {\act\times\_+1}} \setrel$. Moreover, for any $\mathbb U \subseteq\power X$ we find:
{\allowdisplaybreaks
  \begin{align*}
  \lambda_X^a&(\mathbb U) =\ \inverse{\gamma_X}\sigma_{\power X}^a(\mathbb U) & \lambda_X^\term&(\mathbb U) =\ \inverse{\gamma_X}\sigma_{\power X}^\term(\mathbb U)\\
  =&\ \inverse{\gamma_X}\{(p,b) \in (\power X)^\act \times 2 \mid p(a) \in \mathbb U\} & =&\ \inverse{\gamma_X}\{(p,1) \mid p\in (\power X)^\act \}\\
  %=&\ \left\{\bar U  \mid \gamma_X \bar U \in \{(p,b) \mid p(a) \in \mathbb U\} \right\} & =&\ \left\{\bar U \mid \gamma_X \bar U \in \{(p,1) \mid p\in (\power X)^\act\} \right\}\\
%  =&\ \left\{\bar U  \mid \gamma_X^\act \bar U(a) \in \mathbb U \right\} & =&\ \left\{\bar U  \mid \gamma_X^2 \bar U=1\right\}\\
  =&\ \left\{\bar U \in\act\times X + 1 \mid \{x \mid (a,x)\in\bar U\} \in \mathbb U \right\} & =&\ \left\{\bar U \in\act\times X + 1 \mid \bullet \in \bar U\right\}.
\end{align*}
}
%\begin{remark}
The above indexed morphisms $\lambda^a,\lambda^\term$ induce modalities that can be interpreted on determinised automata.
%Lastly, we derive the action modality for deterministic automata by first recalling the determinisation procedure. Note that we prefer the usage of `action' over `box/diamond' modality, since their distinction disappears on deterministic transition relation.
Given an NDA $(X,\act,\rightarrow,\downarrow)$ with $\downarrow\subseteq X$ being the termination predicate (or $X \rTo^\alpha \power(\act\times X + 1)\in\set$), then the determinised automaton has state space $\power X$ with dynamics given by the SOS rules or abstractly by the composition $\gamma_X \circ \bigcup \circ \power\alpha$:
\[\infer {U \step a U_\alpha} {U\subseteq X \quad U_\alpha=\{x'\mid \exists_{x\in U}\ x \step a x'\}}
\qquad \infer {U \term}{U\subseteq X \qquad \exists_{x\in U}\ x \term}\enspace.\]
%More abstractly, dynamics of the determinised system is the composition $\gamma_X \circ \bigcup \circ \power\alpha$ \cite{Jacobs_etal:trace_EM_KL}. %\done{l266: A reference to \cite{Jacobs_etal:trace_EM_KL}(sec 7.1) would be in place.}
In turn, we can now rewrite the two modalities to a simpler form:
{\allowdisplaybreaks
\begin{align*}
  \inverse{\forget{\alpha}}&\lambda_X^a \mathbb U =\  \inverse{\forget\alpha} \{\bar U \mid \{x \mid (a,x) \in \bar U\} \in \mathbb U\} & \inverse{\forget{\alpha}}&\lambda_X^\term \mathbb U =\  \inverse{\forget\alpha} \left\{\bar U \mid \bullet \in \bar U\right\}\\
%  =&\ \{U \mid \forget{\alpha} U \in \{\bar U \mid \{x \mid (a,x) \in \bar U\} \in \mathbb U\}\} & =&\ \{U \mid \forget{\alpha} U \in \{\bar U \mid \bullet \in \bar U\}\\
  %=&\ \{U\subseteq X \mid \{x \mid (a,x) \in \forget{\alpha} U \} \in \mathbb U\}\\
  =&\ \{U\mid U \step a U_\alpha \implies U_\alpha\in\mathbb U\} & =&\ \{U \mid U\term\}.
\end{align*}
}
%\begin{example}
%   Here we illustrate how the coalgebraic game based on predicate liftings works by recalling the automaton given in Example~\ref{ex:NDAQuotient}. Consider the initial situation $(\{x\},\{x,y\})$ (notice the concrete state space is the set $\power\{x,y,z\}$) and $S$ chooses $\{x,y\}$ with $ U=\{\{z\}\}$, so we have $ \forget {\alpha}  \{x,y\} = \{(a,z),(b,z)\}  \in \lambda_X^{b}(\{\{z\}\})$. To answer this move $D$ chooses $\{x\}$ and $U'=\{\emptyset,\{z\}\}$ with $ \forget {\alpha}  \{x\} = \{(a,z)\}  \in \lambda_X^{b}(\{\emptyset,\{z\}\})$. Note here, that the essential component of $ U'$ is the $\emptyset$, since no $b$-transition is observable from $\{x\}$. Next $S$ chooses $\emptyset,U'$ and so $D$ can only go on with $\{z\}$. Therefore, in Step~$2$ of the next round  $S$ selects $\{z\}$ with $\forget {\alpha}  \{z\}=\{\bullet\}$ and wins with any predicate due to $\bullet$. $D$ is unable to find a valid predicate $U'$ for $\forget {\alpha}  \emptyset= \emptyset$ satisfying   $\emptyset \in \lambda_X^{1}( U')$.
%\end{example}
\vspace{-0.7cm}
\paragraph*{Language equivalence through relation lifting}
First note that products exists in $\klcat{\power}$ and are given by disjoint union. Moreover, Assumption~\ref{assum:projpullback} is trivially satisfied since the forgetful functor preserves all limits (cf.\thinspace Corollary~\ref{cor:projpullback}). So we can create a relation lifting of $\bar F$ from the following relation lifting $\contrapower (X \times X) \rTo^{\bar\sigma_X} \contrapower (GX \times GX)$ of $G$ (below $R\subseteq X \times X$):
\[   (p,b) \mathrel{\bar\sigma_X R}  (p',b') \iff b=b' \land \forall_{a\in\act}\ pa \mathrel{R} p'a \enspace.\]
\vspace{-0.6cm}
\begin{restatable}{lemma}{GRelLift}\label{lem:GRelLift}
  The mapping $\bar\sigma$ defined above is an indexed morphism.
\end{restatable}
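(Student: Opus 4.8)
The plan is to verify directly that $\bar\sigma$ meets the two requirements of an indexed morphism (Definition~\ref{def:indexed_cat}) with underlying functor $G=\_^\act\times 2$: that each component $\bar\sigma_X$ is a functor between the thin fibres $\contrapower(X\times X)$ and $\contrapower(GX\times GX)$, and that the family $(\bar\sigma_X)_X$ is natural. In the thin setting functoriality of a component is just monotonicity, which is immediate: if $R\subseteq R'$ then $\forall_a\,pa\mathrel{R}p'a$ entails $\forall_a\,pa\mathrel{R'}p'a$, so $\bar\sigma_X R\subseteq\bar\sigma_X R'$.

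The substance is the naturality square. For a fixed function $f\colon X\to Y$ in $\set$ I would show that $\bar\sigma_X\circ\inverse{(f\times f)} = \inverse{(Gf\times Gf)}\circ\bar\sigma_Y$ as maps $\contrapower(Y\times Y)\to\contrapower(GX\times GX)$. Here I use that reindexing in $\contrapower(\_\times\_)$ along $f$ is pullback along $f\times f$, and that $Gf=f^{\act}\times\id{2}$ acts by post-composition on the $\_^{\act}$ coordinate, i.e.\ $Gf(p,b)=(f\circ p,b)$. I then evaluate both composites on an arbitrary relation $S\subseteq Y\times Y$ and an arbitrary pair $\big((p,b),(p',b')\big)\in GX\times GX$.

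Unfolding the left-hand side, $(p,b)\mathrel{\bar\sigma_X\inverse{(f\times f)}S}(p',b')$ holds iff $b=b'$ and $\forall_a\,(f(pa),f(p'a))\in S$. For the right-hand side, applying $Gf\times Gf$ first sends the pair to $\big((f\circ p,b),(f\circ p',b')\big)$, which lies in $\bar\sigma_Y S$ iff $b=b'$ and $\forall_a\,((f\circ p)a,(f\circ p')a)\in S$; since $(f\circ p)a=f(pa)$, this is literally the same condition. Hence the two relations coincide and the square commutes, which is exactly the naturality of $\bar\sigma$.

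There is no genuine obstacle here: once the data are unwound the verification is a one-line relation chase, and the clean match is structural. The point is that $\bar\sigma$ is defined pointwise over $\act$ together with an equality test on the $2$-component, while $Gf$ acts coordinatewise in precisely the compatible way, so the universal quantifier over $\act$ commutes with reindexing. The only care required is to keep the contravariance of $\contrapower(\_\times\_)$ straight (so that reindexing is pullback along $f\times f$, resp.\ $Gf\times Gf$) and to use the explicit form $Gf=f^{\act}\times\id{2}$. This establishes that $\bar\sigma$ is an indexed morphism.
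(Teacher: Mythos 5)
Your proposal is correct and follows essentially the same route as the paper's own proof: monotonicity of each component $\bar\sigma_X$ (functoriality in the thin fibres) followed by an element-wise verification of the naturality equation $\bar\sigma_X\circ\inverse{(f\times f)}=\inverse{(Gf\times Gf)}\circ\bar\sigma_Y$, using $Gf(p,b)=(f\circ p,b)$ and $(f\circ p)a=f(pa)$. The paper's proof is the same relation chase, so there is nothing to add.
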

\noindent
So $\Psi \rTo^{\bar \lambda} \Psi \bar F$ given in \eqref{eq:liftofRlift} is an indexed morphism.  Concretely, it maps a relation $R\subseteq \power X \times\power X$ to a relation $\bar\lambda_X R \subseteq \power FX \times \power FX$: $\bar U \mathrel{\bar\lambda_XR} \bar U'$ iff
\[
\left(\bullet\in\bar U \iff \bullet \in \bar U'\right)  \land \left( \forall_{a\in\act}\ \{x \mid (a,x)\in\bar U\} \mathrel R \{x \mid (a,x) \in \bar U'\}\right) \enspace.
\]
\vspace{-0.6cm}
\begin{restatable}{lemma}{LambdaCofilter}\label{lem:lambda_cofilter}
  The indexed morphism $\bar\sigma$ preserves arbitrary intersections at each component; therefore, so does the predicate lifting $\bar\lambda$. Moreover, $\bar\lambda$ satisfies \ref{assum:presEq}.
\end{restatable}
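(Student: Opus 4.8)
The plan is to prove the three assertions of Lemma~\ref{lem:lambda_cofilter} in sequence, using the explicit description of $\bar\sigma$ and the factorisation of $\bar\lambda$ from \eqref{eq:liftofRlift}. First I would verify that each component $\bar\sigma_X$ preserves arbitrary intersections. Since $\bar\sigma_X$ sends a relation $R \subseteq X\times X$ to the relation relating $(p,b)$ and $(p',b')$ exactly when $b=b'$ and $\forall_{a\in\act}\ pa\mathrel R p'a$, the condition is a conjunction of membership conditions over the (fixed) index set $\act$. Consequently, for a family $(R_i)_i$ of relations, the pair $(p,b),(p',b')$ lies in $\bar\sigma_X(\bigcap_i R_i)$ iff $b=b'$ and for every $a$ we have $pa\mathrel{\bigcap_i R_i}p'a$, which by definition of intersection is iff for every $i$ and every $a$ we have $pa\mathrel{R_i}p'a$; reordering the two universal quantifiers this is precisely $(p,b),(p',b')\in\bigcap_i \bar\sigma_X(R_i)$. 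This is a routine calculation and should pose no difficulty.

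Next I would transfer intersection-preservation from $\bar\sigma$ to $\bar\lambda$ using the construction \eqref{eq:liftofRlift}. Here $\bar\lambda_X$ is the composite of the direct image $\langle\pi_1^X,\pi_2^X\rangle_!$, followed by $\sigma_{\forget X}$, followed by the two reindexing maps $\inverse{(\gamma_X\times\gamma_X)}$ and $\inverse{\langle\pi_1^{FX},\pi_2^{FX}\rangle}$. The key observation is that in the NDA setting the forgetful functor preserves products, so $\langle\forget{\pi_1^X},\forget{\pi_2^X}\rangle$ is a bijection and the direct-image step is (up to this bijection) an isomorphism of the fibre, hence preserves arbitrary intersections. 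The two reindexing maps are inverse-image operations $\inverse h$, which always preserve arbitrary intersections (indeed all limits in the poset of subsets). Composing intersection-preserving maps yields an intersection-preserving map, so from the first part we conclude $\bar\lambda_X$ preserves arbitrary intersections.

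The main obstacle is the final claim, that $\bar\lambda$ satisfies \ref{assum:presEq}, i.e.\ $\bar\lambda_X(\equiv_X)\ =\ \equiv_{FX}$. Since the forgetful functor preserves products in $\klcat\power$, here $\equiv_X$ coincides with the concrete diagonal $=_{\forget X}$, so I need to show that $\bar\lambda_X$ sends the equality relation on $\forget X=\power X$ to the equality relation on $\forget{\bar FX}=\power FX$. Using the concrete description of $\bar\lambda_X$ displayed just before the lemma, $\bar U\mathrel{\bar\lambda_X(=_{\forget X})}\bar U'$ holds iff $(\bullet\in\bar U\iff\bullet\in\bar U')$ and, for every $a\in\act$, the sets $\{x\mid(a,x)\in\bar U\}$ and $\{x\mid(a,x)\in\bar U'\}$ are equal. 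The forward inclusion, that this forces $\bar U=\bar U'$, is the delicate direction: I must argue that knowing the termination bits agree and all the $a$-indexed slices agree determines $\bar U$ completely, which follows because every element of $\act\times X+1$ is either $\bullet$ or of the form $(a,x)$ for a unique $a$, so membership in $\bar U$ is fully recovered from the slices and the termination bit. The reverse inclusion is immediate. I would carry out this set-theoretic argument elementwise; the only subtlety to watch is the coproduct summand $1$, ensuring the $\bullet$-component is handled separately from the $\act\times X$ component.
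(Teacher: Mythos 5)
Your proposal is correct and follows essentially the same route as the paper's proof: the same quantifier-reordering calculation shows $\bar\sigma_X$ preserves intersections, intersection-preservation transfers to $\bar\lambda_X$ because the remaining factors of \eqref{eq:liftofRlift} are inverse images (or, for the direct-image step, a bijection in the NDA setting), and \ref{assum:presEq} is verified by the same elementwise slice argument on $\act\times X+1$. If anything, you are slightly more careful than the paper, which works with an $\mathbb N$-indexed family and silently elides the direct-image factor $\langle\forget{\pi_1^X},\forget{\pi_2^X}\rangle_!$, whereas you handle arbitrary families and justify that step via product preservation of the forgetful functor.
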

%\noindent
%Now we can state behavioural equivalence $\equiv_X$ on an NDA $X\rTo^\alpha FX\in\setrel$ as a $\alpha^*\circ\bar\lambda_X$-coalgebra living in the fibre $\Psi X$.
\vspace{-0.4cm}
\begin{restatable}{theorem}{NDALangEquivFibred}\label{thm:NDALangEquivFibred}
  Let $X \rTo^\alpha FX\in\setrel$ be an NDA. Then language equivalence $\simeq_X\subseteq \power X \times \power X$ on the determinised system is a $\Psi(\alpha)\circ\bar\lambda_X$-coalgebra, i.e., $\simeq_X \subseteq \inverse{(\forget{\alpha}\times\forget{\alpha})} (\bar\lambda_X \simeq_X)$. Moreover,  $\simeq_X = \inverse{(\forget{f} \times \forget f)} \simeq_Y$ for any coalgebra homomorphism $f$ between $(X,\alpha)$ and $(Y,\beta)$; thus, there is a behavioural conformance functor $\coalg{\bar F}{\setrel} \rTo^{\1^{\bar\lambda}} \coalg{\bar F_\lambda}{\elements{\Psi}}$.
\end{restatable}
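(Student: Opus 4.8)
The plan is to prove the three assertions in turn, translating each fibrational statement into the concrete determinised automaton and then reading off the conclusion from the standard theory of deterministic automata.

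First, for the coalgebra claim, I would unfold the reindexing $\inverse{(\forget\alpha\times\forget\alpha)}$ together with the concrete description of $\bar\lambda$ recorded just before Lemma~\ref{lem:lambda_cofilter}. Writing $\forget\alpha(U)=\bar U\subseteq\act\times X+1$, so that the $a$-successor is $U_a=\{x'\mid \exists x\in U.\ x\step a x'\}=\{x'\mid(a,x')\in\bar U\}$ and acceptance is $U\term \iff \exists x\in U.\ x\term \iff \bullet\in\bar U$, the relation $\inverse{(\forget\alpha\times\forget\alpha)}(\bar\lambda_X\simeq_X)$ relates $U,U'$ precisely when $(U\term \iff U'\term)$ and $U_a\simeq_X U'_a$ for every $a\in\act$. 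Since $\simeq_X$ is language equivalence on the determinised, hence deterministic, automaton, the coinductive description of the accepted language --- $\varepsilon$ is accepted by $U$ iff $U\term$, and $aw$ is accepted by $U$ iff $w$ is accepted by $U_a$ --- shows immediately that language-equivalent states agree on acceptance and have language-equivalent $a$-successors. This gives $\simeq_X \subseteq \inverse{(\forget\alpha\times\forget\alpha)}(\bar\lambda_X\simeq_X)$, i.e.\ $\simeq_X$ is a post-fixpoint of the fibre functional.

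Second, for the reindexing identity, the key observation is that the determinisation $d_X=\gamma_X\circ\forget\alpha:\power X\to G\power X$ turns a coalgebra homomorphism $f$ in $\setrel$ into a $G$-coalgebra homomorphism $\forget f$ in $\set$. Indeed, applying $\forget{\_}$ to the homomorphism equation $\bar F f\circ\alpha=\beta\circ f$ yields $\forget{\bar F f}\circ\forget\alpha=\forget\beta\circ\forget f$, and postcomposing with $\gamma$ and using its naturality $\gamma_Y\circ\forget{\bar F f}=G\forget f\circ\gamma_X$ --- which is exactly the distributive law $\forget{\_}\circ\bar F\Rightarrow G\circ\forget{\_}$ supplied by Lemma~\ref{lemma:theta&mu_compatible} --- gives $d_Y\circ\forget f=G\forget f\circ d_X$. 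A $G$-coalgebra homomorphism preserves accepted languages (by a short induction on word length, or by commuting with the unique maps into the final $G$-coalgebra $\power(\act^\star)$), so $U$ and $\forget f(U)$ accept the same language. Hence $U\simeq_X U'$ iff $\forget f(U)\simeq_Y\forget f(U')$, which is the claimed equality $\simeq_X=\inverse{(\forget f\times\forget f)}\simeq_Y$.

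Third, for the behavioural conformance functor, I would verify the hypotheses of Lemma~\ref{lemma:indexfinal} for the indexed category $\Psi^{\bar F}_{\bar\lambda}$. Each fibre $\coalg{\alpha^*\circ\bar\lambda_X}{\Psi X}$ is the category of coalgebras of a monotone endomap on the complete lattice of relations on $\power X$; since $\bar\lambda$ preserves arbitrary intersections (Lemma~\ref{lem:lambda_cofilter}) and $\alpha^*=\inverse{(\forget\alpha\times\forget\alpha)}$ preserves meets, Knaster--Tarski furnishes a greatest fixpoint, i.e.\ a final object in each fibre. I would identify this final object with $\simeq_X$: the first step shows $\simeq_X$ is a post-fixpoint, and the induction on word length of the second step shows conversely that any post-fixpoint relates only states with equal accepted language, so the greatest post-fixpoint is exactly $\simeq_X$. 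Finally, the reindexing functors of $\Psi^{\bar F}_{\bar\lambda}$ along a homomorphism $f$ are the restrictions of $\inverse{(\forget f\times\forget f)}$, so the second step's equation is precisely the statement that reindexing preserves these final objects. Lemma~\ref{lemma:indexfinal} then yields the right adjoint, which by Proposition~\ref{prop:technicalresult} is the desired $\coalg{\bar F}{\setrel}\rTo^{\1^{\bar\lambda}}\coalg{\bar F_\lambda}{\elements{\Psi}}$ sending $(X,\alpha)$ to $(X,\alpha,\simeq_X)$.

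I expect the main obstacle to be the bridge in the second step: carefully checking that applying $\forget{\_}$ to a $\setrel$-homomorphism, combined with the naturality of $\gamma$ from Lemma~\ref{lemma:theta&mu_compatible}, really produces a $G$-coalgebra homomorphism between the determinised systems, so that all the bookkeeping about $\forget{\bar F f}$, $\forget\alpha$ and $\gamma$ lines up. Once determinisation is seen to be functorial into $\coalg{G}{\set}$, the remaining content reduces to the familiar fact that language equivalence is the final-object / greatest-fixpoint semantics of deterministic automata.
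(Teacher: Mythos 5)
Your proposal is correct, and on two of the three claims it takes a genuinely different route from the paper. On the first claim you coincide: unfolding $\bar\lambda$ and $\inverse{(\forget{\alpha}\times\forget{\alpha})}$ shows that $U,U'$ are related exactly when they agree on acceptance and have $\simeq_X$-related $a$-successors, so language equivalence is a post-fixpoint. For the reindexing identity the paper works concretely: it spells out the two relational transfer properties of a $\setrel$-homomorphism $f$ and checks by hand that $U_w\term \iff (\forget{f}U)_w\term$ for all $w$, silently using that $\forget{f}$ commutes with determinised transitions, i.e.\ $(\forget{f}U)_w=\forget{f}(U_w)$. You instead prove precisely that glossed-over fact abstractly: functoriality of $\forget{\_}$ applied to $\bar Ff\circ\alpha=\beta\circ f$, together with the naturality $\gamma_Y\circ\forget{\bar Ff}=G\forget{f}\circ\gamma_X$ supplied by Lemma~\ref{lemma:theta&mu_compatible}, makes $\forget{f}$ a $G$-coalgebra homomorphism between the determinisations, and language preservation then follows from finality of $2^{\fseq{\act}}$; this is cleaner and makes determinisation-functoriality explicit where the paper leaves it implicit. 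For the final claim the paper builds the Kleene chain $Di=\bigl(\inverse{(\forget{\alpha}\times\forget{\alpha})}\circ\bar\lambda_X\bigr)^i\top$ — this is where the intersection-preservation of Lemma~\ref{lem:lambda_cofilter} is actually needed — and identifies $\bigcap_i Di$ with $\simeq_X$ by induction on word length; you invoke Knaster--Tarski, for which monotonicity alone suffices, and identify the greatest fixpoint with $\simeq_X$ directly. One small repair there: the converse inclusion is not literally ``the induction of the second step'' (that induction concerned homomorphisms preserving languages); what you need is the routine coinduction argument that if $R$ is any post-fixpoint and $U\mathrel{R}U'$, then induction on $|w|$, using the unfolding $U\term\iff U'\term$ and $U_a\mathrel{R}U'_a$, gives $U_w\term\iff U'_w\term$, hence $R\subseteq\ \simeq_X$. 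With that one-line induction stated, your argument is complete, and your verification of the hypotheses of Lemma~\ref{lemma:indexfinal} (fibrewise final objects, preserved by $\inverse{(\forget{f}\times\forget{f})}$, which is exactly your second claim) matches how the paper obtains $\1^{\bar\lambda}$.
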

\paragraph*{Logical characterisation of language equivalence}
Recall the adjoint situation $\reflector \dashv \jmath$ that witnesses $\op\set$ is a reflective subcategory of $\setrel$. We use this dual adjunction to model our logic because (intuitively) conjunction is not needed to characterise language equivalence. %as such does not preserve the branching structure of a state.
Thus we fix $\cat A=\set$, $S=\reflector$, and $T=\jmath$. Moreover, a left adjoint $\bar\sfunctor$ of $\bar\tfunctor$ exists due to Theorem~\ref{thm:KleisliQu}.

Since to establish language equivalence one needs to ascertain whether a word in $\fseq\act$ is accepting or not, so we take our syntax functor $L=\nda{\_}$. Note that the initial algebra of $L$ exists and is given by $\mathcal A = \fseq\act$. As for the one-step semantics given by a natural transformation $\delta$, we are going to define it (indirectly) by defining its mate $L\sfunctor X = \act\times \power X + 1 \rTo^{\theta_X} \power(\act\times X + 1) = \sfunctor \bar FX \in\set$
%\[
%L\sfunctor X = \act\times \power X + 1 \rTo^{\theta_X} \power(\act\times X + 1) = \sfunctor \bar FX \quad \in\set
%\]
that %takes $\bullet \mapsto \{\bullet\}$ and $(a,U) \mapsto \{a\}\times U$ (for each $a\in\act,U\subseteq X$). Notice that $\theta_X$
acts on objects like the distributive law $\vartheta_X$ (see \eqref{eq:thetaNDA}). Note that, however, they differ in their naturality conditions.
\begin{restatable}{proposition}{ThetaNDA}\label{thm:thetatNDA}
  The above defined mapping $\theta$ is a natural transformation.
\end{restatable}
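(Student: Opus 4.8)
The plan is to check naturality of $\theta$ directly by chasing elements, being careful that the requested condition is naturality as a transformation $\sfunctor\bar F\Rightarrow L\sfunctor$ of functors $\setrel\to\op\set$, i.e.\ over \emph{all relations} $g\colon X\to Y$ in $\setrel$, which is genuinely stronger than---and not implied by---the naturality of the distributive law $\vartheta$ of \eqref{eq:thetaNDA} over the \emph{functions} of $\set$. First I would fix the data: here $L=\act\times\_+1=F$ and $\sfunctor=\reflector$ (so $\sfunctor X=\power X$), and by the mate correspondence of Proposition~\ref{prop:mate} the component $\theta_X\colon L\sfunctor X=\act\times\power X+1\to\power(\act\times X+1)=\sfunctor\bar F X$ is, on elements, exactly $\vartheta_X$, namely $(a,U)\mapsto\{a\}\times U$ and $\bullet\mapsto\{\bullet\}$.

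Next I would write the naturality square and transport it to $\set$. Since $\sfunctor=\reflector$ lands in $\op\set$, the square for $g\colon X\to Y$ lives in $\op\set$; reversing arrows, the equation to verify in $\set$ is $\theta_X\circ F(\reflector g)=\reflector(\bar F g)\circ\theta_Y$ as functions $\act\times\power Y+1\to\power(\act\times X+1)$, where $F(\reflector g)=\act\times\reflector g+\id{1}$ and $\reflector g(V)=\{x\mid\exists_{y\in V}\ x\mathrel g y\}$ by definition of the reflector on relations. The remaining input is the relation $\bar F g\subseteq FX\times FY$: from the Kleisli extension $\bar F g=\vartheta_Y\circ Fg'$ (with $g'\colon X\to\power Y$ the Kleisli transpose of $g$) one reads off $(a,x)\mathrel{\bar F g}(a,y)\iff x\mathrel g y$, together with $\bullet\mathrel{\bar F g}\bullet$, and no other related pairs.

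With these in hand the verification is a two-case element chase. On a summand $(a,V)$ the left-hand side yields $\theta_X(a,\reflector g(V))=\{a\}\times\reflector g(V)=\{(a,x)\mid\exists_{y\in V}\ x\mathrel g y\}$, while the right-hand side yields $\reflector(\bar F g)(\{a\}\times V)=\{(a,x)\mid\exists_{y\in V}\ x\mathrel g y\}$, so the two agree; on $\bullet$ both sides return $\{\bullet\}$. This establishes commutativity of the square for every $g$, hence naturality of $\theta$.

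I expect the only real obstacle to be bookkeeping rather than depth: one must correctly reverse the square into $\set$ (because of the $\op\set$ codomain) and must compute $\reflector(\bar F g)$ from the existential-image formula applied to the \emph{composite} relation $\vartheta_Y\circ Fg'$, rather than conflating it with the $\set$-naturality of $\vartheta$. Indeed the whole point, as the paper flags, is that although $\theta_X$ and $\vartheta_X$ coincide pointwise, $\vartheta$'s distributive-law naturality over $\set$ does not deliver this square; the content is precisely that the same family of functions remains natural over the larger morphism class of $\setrel$.
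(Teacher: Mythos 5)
Your proof is correct and matches the paper's argument essentially verbatim: both verify the naturality square for an arbitrary relation $g\in\setrel$ by an element chase on the two summands, computing $\reflector(\bar F g)(\{a\}\times V)=\{(a,x)\mid\exists_{y\in V}\ x\mathrel g y\}=\theta_X(a,\reflector g(V))$ and handling $\bullet$ trivially. Your explicit identification of the relation $\bar F g$ from the Kleisli composite $\vartheta_Y\circ Fg'$, and your remark that this is genuinely stronger than the $\set$-naturality of $\vartheta$, are exactly the points the paper relies on (and flags), so no further comment is needed.
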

\noindent
The algebra $\act\times\fseq\act + 1 \rTo^{h} \fseq\act$ is given by the unary concatenation of words and the constant $\eseq$ (i.e., $h(a,w)=aw$ and $h\bullet=\eseq$). Consider the map $X \rTo^{\theory{\_}} \mathcal A \in \setrel$ that maps a state to the language accepted by it. %set of traces accepted by it (below $\twoheadrightarrow$ is the reachability relation):
%\[
%\theory{x} = \{w\in\fseq\act \mid \exists_{x'}\ x \steps w x' \land x' \term\} \quad \text{(for each $x\in X$)}\enspace.
%\]
\begin{restatable}{corollary}{NDALogChar}\label{cor:NDALogChar}
  The above map $\theory{\_}$ is indeed the theory map for a given NDA. So the logic $(L,\delta)$ is both adequate and expressive for language equivalence on determinised systems.
\end{restatable}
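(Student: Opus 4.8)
The plan is to settle the two claims in turn---first that the language map $\theory{\_}$ really is the theory map of $(L,\delta)$, and then that $(L,\delta)$ is adequate and expressive by Theorem~\ref{thm:fibredbehequiv->logequiv}---and both rest on an explicit description of $\delta$, so I would compute $\delta_{\mathcal A}$ first. Granted that $\theta$ is natural (Proposition~\ref{thm:thetatNDA}), the mate correspondence of Proposition~\ref{prop:mate} gives $\delta_{\mathcal A}=\jmath L\epsilon_{\mathcal A}\circ\jmath\theta_{\mathcal A}\circ\kappa_{\bar{F}\mathcal A}$ in $\setrel$, where $\kappa$ is the unit of $\reflector\dashv\jmath$ (the membership relation $x\mathrel{\kappa}U\iff x\in U$, as in Example~\ref{ex:NDAQuotient}) and the counit $\epsilon_{\mathcal A}$ is forced by the triangle identity to be the singleton map $w\mapsto\{w\}$. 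Threading a letter $(a,v)$ and the symbol $\bullet$ through this relational composite, and using that $\theta$ acts like $\vartheta$ from \eqref{eq:thetaNDA} (so $(a,S)\mapsto\{a\}\times S$ and $\bullet\mapsto\{\bullet\}$), the singleton constraint coming from $\epsilon$ collapses each admissible witness set to a singleton; the upshot is that $\delta_{\mathcal A}$ is the identity relation on $\act\times\fseq\act+1$.

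With $\delta_{\mathcal A}$ computed, the defining equation of the theory map (the right square of \eqref{eq:semantics}) simplifies to the relational identity $\jmath h\circ\theory{\_}=\bar{F}\theory{\_}\circ\alpha$ in $\setrel$. I would verify this by unfolding both direct-image composites. On the left, $\jmath h$ is the converse graph of $h(a,w)=aw,\ h\bullet=\eseq$, so $x$ is related to $(a,v)$ exactly when $av\in\theory{x}$ and to $\bullet$ exactly when $\eseq\in\theory{x}$. On the right, composing $\alpha$ with the action of $\bar{F}$ on $\theory{\_}$ (given by $\vartheta$), $x$ is related to $(a,v)$ iff some $a$-successor $x'$ of $x$ satisfies $v\in\theory{x'}$, and to $\bullet$ iff $x\term$. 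These two descriptions coincide with the recursive characterisation of the accepted language ($av\in\theory{x}$ iff $v$ is accepted from some $a$-successor of $x$; $\eseq\in\theory{x}$ iff $x$ is accepting), so the language map satisfies the equation. Since $h$ is the initial $L$-algebra the solution is unique, whence $\theory{\_}$ is the theory map.

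For the second claim I would invoke Theorem~\ref{thm:fibredbehequiv->logequiv} and check its hypotheses. Its standing assumptions are those of Theorem~\ref{thm:behequiv->fibredcoalg}: \ref{assum:product} holds because products in $\klcat{\power}$ are disjoint unions, and \ref{assum:presEq} was shown in Lemma~\ref{lem:lambda_cofilter}. A left adjoint $\bar\sfunctor$ to $\bar\tfunctor=\Eq\circ\jmath$ exists by Theorem~\ref{thm:KleisliQu}, since $\op{\set}$ is a reflective subcategory of $\setrel$ that has coequalisers (these are just equalisers in $\set$); hence adequacy follows at once. For expressivity it remains to see that $\forget{\delta_{\mathcal A}}$ is injective; but $\delta_{\mathcal A}$ being the identity relation, its direct-image (Kleisli) extension $\forget{\delta_{\mathcal A}}$ is the identity function on $\power(\act\times\fseq\act+1)$, which is injective, so the logic is expressive too.

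I expect the main obstacle to be the first step: computing $\delta_{\mathcal A}$ demands careful bookkeeping of the $\mathrm{op}$-duality between $\set$ and $\op{\set}$, of the precise unit and counit of $\reflector\dashv\jmath$, and of relational (as opposed to functional) composition in $\setrel$. The decisive---and slightly surprising---point is that the singleton map contributed by the counit forces $\delta_{\mathcal A}$ down to the identity; this is exactly what makes both the theory-map verification and the injectivity required for expressivity go through, the remaining diagram chases being routine unfoldings of direct images against the recursive definition of the accepted language.
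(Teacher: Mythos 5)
Your proposal is correct and follows essentially the route the paper intends: the paper leaves this corollary without an explicit appendix proof, but its analogous argument for weighted automata (Theorem~\ref{thm:logicalWLang}) proceeds exactly as you do --- verify the right-hand square of \eqref{eq:semantics} against the recursive definition of the accepted language, invoke Theorem~\ref{thm:fibredbehequiv->logequiv} with $\bar\sfunctor$ from Theorem~\ref{thm:KleisliQu} for adequacy, and use injectivity of $\forget{\delta_{\mathcal A}}$ for expressivity. Your explicit mate computation showing $\delta_{\mathcal A}$ is the identity relation (via the unit $\kappa$ and the singleton counit of $\reflector\dashv\jmath$) is a correct fleshing-out of what the paper takes for granted; the only detail worth adding is a citation of Theorem~\ref{thm:NDALangEquivFibred}, which supplies the behavioural conformance functor $\1^{\bar\lambda}$ and identifies it with language equivalence on determinised systems, so that the adequacy/expressivity conclusion indeed concerns that equivalence.
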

%An analogous approach can be chosen to handle linear weighted automata
%in this framework (see Section~\ref{sec:KleisliEx:LWA} in the
%appendix).
\section{Conditional transition systems: an application in coKleisli categories}\label{sec:CTS}
We next consider conditional transition systems (CTSs)
\cite{ABHKMS12}, strongly related to the featured transition systems
used for modelling software product lines
\cite{ccpshl:simulation-product-line-mc}. A conditional transition system is a compact
  representation of several transition systems -- one for each
  condition or product -- where transitions are labelled by products. Here we consider the simpler
case of conditional transition systems without upgrades and action
labels; their full treatment
\cite{BBKSW18_cts_coalgebraic,CTSwithUpgrades} is left for the future.
In our earlier work, CTSs were coalgebras living in the Kleisli category
induced by the reader monad $\_^\condset$ (for a fixed set of
conditions $\condset$). It is however more convenient to treat CTSs as coalgebras in coKleisli categories, hence we start with a %short introduction to such categories.
relevant comonad $G = \condset\times\_$ whose counit is given by the projection of the second component and comultiplication $\Delta$ is given by the diagonal map. The map $\condset\times X \rTo^{\Delta_X} \condset \times \condset \times X$ is given by $(k,x) \mapsto (k,k,x)$ (for $k\in \condset,x\in X$).

Consider the coKleisli category $\coklcat G$ whose objects, just like in any Kleisli category, are sets; an arrow $X \rTo^f Y$ corresponds to a function $GX \rTo^f Y$.
%%Now assume a situation where $\cat C \rTo^T \cat C$ -- the monad of
%%the Kleisli category, modelling the implicit branching structure --
%%has a comonad $\cat C \rTo^G \cat C$ as its left adjoint, i.e.
%%$G \dashv T$. It is well known that the Kleisli category
%%induced by the monad $T$ is isomorphic to the coKleisli category
%%induced by the comonad $G$ because
%%\[
%%\klcat{T}(C,D) = \cat{C}(C,TD) \cong \cat{C}(GC,D) = \coklcat{G}(C,D).
%%\]
%Just like the case of Kleisli setting,
Now there is a forgetful functor $\coklcat{G}\rTo^{\forget{\_}} \cat \set$; however, in contrast to the Kleisli setting, it is now left adjoint to the inclusion $\cat \set \rTo^{\iota} \coklcat{G}$. Concretely, this forgetful functor maps an object $X\mapsto GX$ and an arrow $X\rTo^f Y\in\coklcat{G}$ to a function $\forget f$ mapping $(k,x) \mapsto (k,f(k,x))$.
%$\forget{X} =GX$ and  %\quad \text{and} \quad
%$\forget{f}=Gf\circ \Delta_X$, for $GX\rTo^f Y\in\set$.% \ \text{(for $GX\rTo^f Y\in\set$)}.$

Next %we define the endofunctor modelling
to model the branching type of CTSs,
%\begin{example}
%Let $F = \power$, which is extended to an endofunctor
take $\bar F$ to be the coKleisli extension \cite{BBKSW18_cts_coalgebraic} of $F=\power$ given by the following distributive law $\condset\times\power X \rTo^{\gamma_X} \power(\condset\times X)$: $\gamma_X(k,U) = \{k\}\times U$; concretely, $\bar FX = \power X$ and $\bar Ff(k, U) =\{f(k,x)\mid x\in U\}$ for $X\rTo^f Y\in\coklcat{G}$. A CTS modelled as a coalgebra $X\rTo^\alpha\bar FX \in \coklcat{G}$ is a function $\condset
  \times X\rTo^\alpha \power X$ that assigns to each state $x\in X$
  and each condition $k\in \condset$, the successors of $x$ under
  condition $k$. %Conditions can be thought of as different products and we model the behaviour of different products in a single compact transition system. %(Equivalently such a transition system could be represented in a Kleisli category as a function of type $X\to (\power X)^\condset$.)
%\end{example}

As mentioned earlier in Section~\ref{sec:indMorKleisli}, it is easier (than in the Kleisli case) to lift a predicate lifting $\contrapower \rTo^\sigma \contrapower F$ of $F$ to define a predicate lifting $\contrapower \forget{\_} \rTo^\lambda \contrapower \forget{\bar F\_}$ of $\bar F$. In particular, $\lambda$ is given by the composition in \eqref{eq:liftOfPlift} and, %: $\contrapower\forget X= \contrapower GX \rTo^{\sigma_{GX}} \contrapower F GX \rTo^{\inverse{\gamma}_X} \contrapower G \bar {F}X = \contrapower \forget{\bar{F}X}$.
moreover, Theorem~\ref{th:lambda-indexed-morphism} ensures that $\lambda$ is indeed an indexed morphism once we have fixed the predicate lifting $\sigma$ of $F$. %So let $\Phi$ be the indexed category modelling predicates on $\coklcat{G}$ given by the composition:
%Concerning predicate liftings, we follow a similar recipe as in the case of Kleisli category, i.e. we lift the indexed category given on the base category $\set$ by the following composition:
%$\op{\coklcat{G}} \rTo^{\forget{\_}} \op{\set} \rTo^\contrapower \ccat$.
%Now to instantiate our framework, i.e.\ CTSs, we need a predicate
%lifting $\contrapower\forget{\_} \rTo^\lambda \contrapower \forget{\tilde F\_}$. In contrast to Kleisli case, we can lift an indexed morphism $\contrapower \rTo^{\lambda'} \contrapower F$ on the base category $\set$ as follows:
%\[
%\contrapower\forget X= \contrapower GX \rTo^{\lambda'_{GX}} \contrapower F GX \rTo^{\inverse{\theta}_X} \contrapower G \tilde FX = \contrapower \forget{\tilde{FX}}.
%\]
To this end, we simply take $\sigma$ that corresponds to the box modality (cf.\thinspace Example~\ref{ex:standard_indcat}).
To answer whether these definitions give the right kind of `box' modality for CTSs, let us first instantiate  $\lambda_X$ for any $U \subseteq \condset\times X$:
\begin{equation}\label{eq:boxCTSmodal}
  \lambda_X U
  = \inverse{\gamma}_X \sigma_X U
  = \inverse{\gamma}_X \{U' \mid U' \subseteq U\}
  = \{(k,U') \mid \{k\} \times U' \subseteq U\}.
\end{equation}
Now given a coalgebra $\condset\times X \rTo^{\alpha} \power X\in\set$, we derive the interpretation of box modality for CTSs in the following way (below $x \step k x' \iff x' \in \alpha x k$):
\begin{equation}\label{eq:derivedboxCTSmodality}
  \inverse{\forget{\alpha}} \lambda_X U=\{(k,x) \mid \forall_{x'}\ x \step{k} x' \implies (k,x')\in U\} .
\end{equation}

\paragraph*{Conditional bisimilarity through relation lifting}~\\
\noindent
Next we introduce conditional bisimilarity in which two states might be bisimilar for all conditions or only under certain conditions.
\begin{definition}\label{def:condbisim}
  Given a CTS $X \rTo^\alpha \bar \power X\in \coklcat{G}$, a \emph{conditional bisimulation} is a  relation $R \subseteq (\condset \times X) \times (\condset \times X)$ satisfying:
  \begin{description}
    \item[1]\label{def:filtering} $\forall_{k,k'\in\condset,x,x'\in X}\ (k,x) \mathrel R (k',x') \implies k=k'$.
    \item[2] $\forall_{x_1,x_2,x_3,k} \big(x_1 \step{k} x_3 \land (k,x_1) \mathrel {R} (k,x_2)  \big) \Rightarrow \exists_{x_4\in X} \big(x_2 \step{k} x_4 \land (k,x_3) \mathrel{R} (k,x_4) \big)$.
  \end{description}
  Two states $x,x'\in X$ are \emph{conditional bisimilar under $k$} iff there is a conditional bismulation relation $R$ such that $(k,x) \mathrel {R} (k,x')$.
  Moreover, two states $x,x'$ are \emph{conditional bisimilar}, denoted $x \bisim_X x'$, iff $x$ and $x'$ are conditional bisimilar under every condition $k\in\condset$.
\end{definition}
In order to capture conditional bisimilarity, we first need a fibration $\Psi$ of binary relations on the state space. The first choice for $\Psi$ is to consider the set of all binary relations on the underlying state space, i.e., $\Psi=\contrapower(\forget{\_}\times\forget{\_})$.
%Assumption~\ref{assum:equality} is satisfied by this indexed category, since the equality functor $\Eq$ maps a set $X\in\coklcat{G}$ to the pair $(X,=_{GX})$.
Unfortunately, Assumption~\ref{assum:quotient} fails to hold which we explain next.
\begin{remark}\label{rem:CTScounterexample}
  %From Proposition~\ref{prop:liftingA1},
  %Let $\Psi=\E \circ \forget{\_}$ and let $\0=\Eq \circ \forget{\_}$ be the zero functor.
  We argue that $\Eq$ cannot have a left adjoint since it does not preserve finite limits (in particular, terminal objects). Clearly, $\iota 1 = 1=\{\bullet\}$ is the terminal object in $\coklcat{G}$ because $\iota$ is the right adjoint of $\forget{\_}$. Now suppose $\Eq 1=(1,=_{\condset\times 1})$ is the terminal object in $\elements{\Psi}$. Then, for any $(X,R)$, there is a unique arrow $(X,R) \rTo^{!_X} \Eq 1$, i.e., $X \rTo^{!_X} 1\in\coklcat{G}$ and $R\subseteq \inverse{(\forget{!_X} \times \forget{!_X})}=_{\condset\times 1}$. But we argue that $!_X$ is \emph{not} a map in $\elements{\Psi}$ for $|\condset|\geq 2$. To see this, let $k,k'\in\condset$ with $k\neq k'$ and let $R\subseteq (\condset\times X) \times (\condset \times X)$ be an equivalence relation such that $(k,x) \mathrel R (k',x)$. Then we find a contradiction
  %\begin{multline*}
   \[  R \subseteq \inverse{(\forget{!_X} \times \forget{!_X})}=_{\condset\times 1} %\implies
   % \forget {!_X} (k,x) = \forget{!_X} (k',x) \\
    \implies (k,!_X(k,x)) =(k',!_X(k',x)) \implies k=k'.\]
  %\end{multline*}
   %Clearly, Then we because for any equivalence relation $R\subseteq X \times X$ with $x R x'$
\end{remark}

So we need to restrict ourselves to relations satisfying the first property of conditional bisimulation (see Definition~\ref{def:condbisim}). An elegant way to address this is by working with the indexed category of abstract relations given in \eqref{eq:indexedcatRelations}; thus, enabling the applicability of constructions given in Section~\ref{sec:indMorKleisli}. Note that binary products $\otimes$ exist in $\coklcat{G}$ and is given by the Cartesian product of two sets. Thus $\Psi$ in \eqref{eq:indexedcatRelations} is well defined. %Second, Assumptions~\ref{assum:quotient} and \ref{assum:projpullback} is satisfied, although we verify them concretely.
\begin{restatable}{lemma}{coKleisliWPbk}\label{prop:coklWPbk}
Assumptions~\ref{assum:quotient} and \ref{assum:projpullback} are valid.
\end{restatable}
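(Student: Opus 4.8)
The statement bundles two independent claims, which I would handle separately. For \ref{assum:projpullback} note first that Corollary~\ref{cor:projpullback} does \emph{not} apply, since the forgetful functor of $\coklcat{G}$ fails to preserve binary products: here $\forget{X\otimes X}=\condset\times(X\times X)$ whereas $\forget X\times\forget X=(\condset\times X)\times(\condset\times X)$. So I would verify the weak pullback property of \eqref{eq:squarepbk} by hand. For \ref{assum:quotient} the plan is to construct $\Qu$ as a per-condition quotient, exploiting the fact that abstract relations are \emph{filtered} (they relate only pairs sharing the same condition) --- the very feature whose absence breaks \ref{assum:quotient} for the full relation fibration in Remark~\ref{rem:CTScounterexample}.

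For \ref{assum:projpullback}, I would instantiate the four legs of \eqref{eq:squarepbk} along an arrow $f\colon X\to Y$ of $\coklcat{G}$ (a function $\condset\times X\to Y$): both horizontal maps send $(k,(x,x'))\mapsto((k,x),(k,x'))$; the left map $\forget{f\otimes f}$ sends $(k,(x,x'))\mapsto(k,(f(k,x),f(k,x')))$; and the right map $\forget f\times\forget f$ sends $((k_1,x),(k_2,x'))\mapsto((k_1,f(k_1,x)),(k_2,f(k_2,x')))$. Now take a compatible pair, i.e.\ $b\in\forget X\times\forget X$ and $c\in\forget{Y\otimes Y}$ with equal images in $\forget Y\times\forget Y$; matching the two condition coordinates forces both conditions occurring in $b$ to coincide with the condition of $c$, and once they agree the evident element of $\forget{X\otimes X}$ is the required lift, and moreover the unique one. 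Hence \eqref{eq:squarepbk} is in fact a genuine pullback, which certainly yields \ref{assum:projpullback}.

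For \ref{assum:quotient} I would proceed as follows. Regard $R\in\Psi X=\contrapower(\forget{X\otimes X})$ as a relation on $\forget X=\condset\times X$ through the canonical injection $\langle\forget{\pi_1^X},\forget{\pi_2^X}\rangle$ of Proposition~\ref{prop:eqfunexists}, and let $\widehat R$ be the least equivalence containing it. Because $R$ is filtered, $\widehat R$ relates only same-condition pairs and therefore splits as a $\condset$-indexed family of equivalences on $X$; I would define $\Qu(X,R)$ from the resulting quotients and let the unit $\kappa$ be the induced quotient morphism in $\coklcat{G}$, for which $R\subseteq\kappa^*(\equiv_{\Qu(X,R)})$ holds by construction. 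The adjunction $\Qu\dashv\Eq$ is then established via the universal property of $\kappa$: every $f\colon(X,R)\to\Eq Y$ in $\elements{\Psi}$ --- that is, a coKleisli map with $f(k,x)=f(k,x')$ whenever $(k,(x,x'))\in R$ --- should factor uniquely through $\kappa$, and this factorisation transposes back to the witnessing coalgebra homomorphism used in Theorem~\ref{thm:fibredcoalg->behequiv}. The delicate point, which I expect to be the main obstacle, is precisely the assembly of this $\condset$-indexed family of set-level quotients into a \emph{single} object of $\coklcat{G}$ together with a factorising map that is well defined simultaneously for all conditions; here the filtering property must be used in full, and verifying naturality and the triangle identities requires care.
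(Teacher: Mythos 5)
Your treatment of \ref{assum:projpullback} is correct and is essentially the paper's own argument: one chases a compatible pair through \eqref{eq:squarepbk}, the matching of the condition coordinates forces the two conditions in $b$ to equal that of $c$, and $(k,x,x')$ is the required lift --- indeed the unique one, since $\langle\forget{\pi_1^X},\forget{\pi_2^X}\rangle$ is injective, so the square is even a genuine pullback, exactly as you say. Your preliminary remark that Corollary~\ref{cor:projpullback} is unavailable, because $\forget{X\otimes X}=\condset\times(X\times X)$ is not canonically isomorphic to $\forget{X}\times\forget{X}=(\condset\times X)\times(\condset\times X)$, is also accurate.

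For \ref{assum:quotient}, however, your sketch stops precisely at the step that constitutes the proof, and the obstacle you flag dissolves once the construction is set up as in the paper. No per-condition assembly is needed: since objects of $\coklcat{G}$ are plain sets, one takes the direct image $R_!=\langle\forget{\pi_1^X},\forget{\pi_2^X}\rangle_! R$ on $GX=\condset\times X$ (which is filtered by construction), coequalises its two projections $p_1,p_2\colon R_!\to GX$ in $\set$ to get $q\colon GX\to GX/R_!$, and sets $\Qu(X,R)=\iota(GX/R_!)$; this single quotient of $\condset\times X$ \emph{is} your assembly, being canonically the disjoint union over $\condset$ of the per-condition quotients. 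What then genuinely requires verification --- and what your proposal defers --- is the hom-set bijection. Downwards: for $g\colon(X,R)\to\Eq Y$ and $(k,x)\mathrel{R_!}(k',x')$, filtering gives $k=k'$ and the fibration condition $R\subseteq\inverse{(\forget{g\otimes g})}\equiv_Y$ gives $g(k,x)=g(k,x')$, so $g$ coequalises $p_1,p_2$ and factors uniquely through $q$. Upwards: the transpose of $f\colon\Qu(X,R)\to Y$ is $g=f\circ\iota q\circ\eta_X$, and the short but essential coKleisli computation $\iota q\circ\eta_X=q$ (via the comultiplication $\Delta_X$) yields $g(k,x)=f(k,q(k,x))$, which settles exactly the ``well defined simultaneously for all conditions'' point you worried about; there are no separate triangle identities left to check once this universal property of the unit is in place. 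Finally, note that the paper's proof of functoriality of $\Qu$ (routing it through the standard quotient functor on $\set$-relations) invokes the Beck--Chevalley equality for \eqref{eq:squarepbk} --- i.e.\ precisely the pullback property from your first half --- so the two claims are not as independent as you assert.
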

Now to invoke the definition of $\bar\lambda$ in \eqref{eq:liftofRlift}, it suffices to define a relation lifting $\sigma$ of the set endofunctor $F=\power$. We take $\sigma$ to be the relation lifting associated with bisimulation relations as defined in Example~\ref{ex:LTS}.

\begin{restatable}{theorem}{coKleisliRelLift}\label{prop:coKlRelLift}
Alternatively, the indexed morphism $\Psi \rTo^{\bar\lambda} \Psi \bar\power$ given in \eqref{eq:liftofRlift} can be defined as follows: $\bar\lambda_X R=\{(k,U, U') \mid \forall_{x\in U}\exists_{x'\in U'} (k,x,x')\in R \land \forall_{x'\in U'}\exists_{x\in U} (k,x,x')\in R\}$.
%$(k,U) \mathrel {\bar\lambda_X R} (k',U')$ iff
%\[
%k=k' \land \forall_{x\in U}\exists_{x'\in U'}\ (k,x)\mathrel R (k,x') \land \forall_{x\in U'}\exists_{x\in U}\ (k,x) \mathrel R (k,x').
%\]
Moreover, $\bar\lambda$ satisfies Assumption~\ref{assum:presEq}.
\end{restatable}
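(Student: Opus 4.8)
The plan is to establish the two assertions separately, both by direct computation: the concrete formula by unfolding the four-stage composition that defines $\bar\lambda$ in \eqref{eq:liftofRlift}, and the preservation property by substituting the abstract equality into the resulting closed form. The only conceptual input is a careful description of the products and projections in $\coklcat{G}$.

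For the concrete description, I would first record the shape of the data. Since products are Cartesian, $\forget{X\otimes X}=\condset\times X\times X$, and the forgetful images of the projections satisfy $\forget{\pi_i^X}(k,x_1,x_2)=(k,x_i)$; hence the pairing $\langle\forget{\pi_1^X},\forget{\pi_2^X}\rangle$ sends $(k,x_1,x_2)\mapsto((k,x_1),(k,x_2))$, \emph{copying the condition $k$ into both components}. Consequently a relation $R\subseteq\condset\times X\times X$ in $\Psi X$ is pushed forward to $\hat R:=\{((k,x),(k,x'))\mid(k,x,x')\in R\}\subseteq\forget X\times\forget X$, which relates only pairs carrying a common condition. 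Next, applying the Egli--Milner lifting $\sigma_{\forget X}$ of $\power$ and reindexing along $\gamma_X\times\gamma_X$ (where $\gamma_X(k,U)=\{k\}\times U$), membership of $((k,U),(k',U'))$ amounts to the clause $\forall_{x\in U}\exists_{x'\in U'}\,((k,x),(k',x'))\in\hat R$ together with its symmetric counterpart; because $\hat R$ relates only same-condition pairs, these can hold (for nonempty $U,U'$) only when $k=k'$. Finally the outer reindexing along $\langle\forget{\pi_1^{\bar\power X}},\forget{\pi_2^{\bar\power X}}\rangle$ evaluates this at $((k,U),(k,U'))$, forcing $k=k'$ outright and collapsing the membership clauses to $(k,x,x')\in R$; this yields exactly the stated set of triples $(k,U,U')$.

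For Assumption~\ref{assum:presEq}, I would use the same copying phenomenon to read off $\equiv_X=\langle\forget{\pi_1^X},\forget{\pi_2^X}\rangle^{-1}(=_{\forget X})=\{(k,x,x)\mid k\in\condset,\,x\in X\}$, the conditional diagonal, and likewise $\equiv_{\bar\power X}=\{(k,U,U)\mid k\in\condset,\,U\in\power X\}$. Substituting $\equiv_X$ into the formula just derived turns the two existential clauses into $\forall_{x\in U}\,x\in U'$ and $\forall_{x'\in U'}\,x'\in U$, i.e.\ $U\subseteq U'$ and $U'\subseteq U$; hence $\bar\lambda_X(\equiv_X)=\{(k,U,U)\mid k\in\condset,\,U\in\power X\}=\equiv_{\bar\power X}$, which is precisely \ref{assum:presEq}.

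The main obstacle is bookkeeping rather than depth: one must handle the projections in the coKleisli category correctly and track how the condition $k$ is duplicated by each pairing map, since this duplication is exactly what restricts the abstract relations to ``same-condition'' pairs and, in the final reindexing step, eliminates both the spurious case $k\neq k'$ and the harmless empty-set corner case $U=U'=\emptyset$ that would otherwise survive from the Egli--Milner lifting. Once this is kept straight, both the closed form and the preservation of equality follow by routine unfolding.
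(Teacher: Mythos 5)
Your proposal is correct and follows essentially the same route as the paper's proof: it unfolds the four-stage composite \eqref{eq:liftofRlift} step by step, using the fact that the coKleisli pairing $\langle\forget{\pi_1^X},\forget{\pi_2^X}\rangle$ copies the condition $k$ into both components, and then substitutes the conditional diagonal $\equiv_X=\{(k,x,x)\mid k\in\condset,\,x\in X\}$ into the resulting closed form to verify Assumption~\ref{assum:presEq}. If anything, you are slightly more careful than the paper, which silently writes the intermediate set after reindexing along $\gamma_X\times\gamma_X$ with equal conditions and does not mention the $k\neq k'$ empty-set corner case surviving the Egli--Milner stage, which, as you correctly note, is harmlessly eliminated by the final reindexing.
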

\begin{corollary}\label{cor:ctsbisim}
  Let $X \rTo^\alpha \power X\in\coklcat{G}$ be a CTS. A relation $R$ on $\condset \times X$ is a conditional bisimulation iff $\inverse{\langle \pi_1^X,\pi_2^X \rangle} R$ is an $\alpha^*\circ\bar\lambda$-coalgebra in $\Psi X$. Moreover, for any $(X,\alpha) \rTo^f (Y,\beta) \in \coklcat{G}$ we have $\bisim_X = \inverse{(\forget{f}\times\forget f)} \bisim_Y$; thus, there is a functor
  $\coalg{\bar{\power}}{\coklcat{G}} \rTo^{\1^{\bar\lambda}} \coalg{\bar{\power}_{\bar\lambda}}{\elements{\Psi}}.$
\end{corollary}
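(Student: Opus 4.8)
The plan is to dispatch the three assertions---the iff characterisation, the invariance $\bisim_X = \inverse{(\forget f\times\forget f)}\bisim_Y$, and the existence of $\1^{\bar\lambda}$---in that order, reusing the machinery already available for $\Psi$ over $\coklcat{G}$. All four assumptions are at hand: \ref{assum:product} holds since products exist in $\coklcat{G}$, \ref{assum:quotient} and \ref{assum:projpullback} by Lemma~\ref{prop:coklWPbk}, and \ref{assum:presEq} by Theorem~\ref{prop:coKlRelLift}. I would first record the two reindexings concretely. Writing the projections of $X\otimes X = X\times X$ in $\coklcat{G}$ explicitly, the canonical map $\langle\forget{\pi_1^X},\forget{\pi_2^X}\rangle$ sends $(k,x,x')\mapsto((k,x),(k,x'))$, so $\inverse{\langle\pi_1^X,\pi_2^X\rangle}R = \{(k,x,x')\mid((k,x),(k,x'))\in R\}$; this is a bijection between the relations $R$ obeying the first clause of Definition~\ref{def:condbisim} and the arbitrary abstract relations of $\Psi X = \contrapower(\condset\times X\times X)$. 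Commutativity of \eqref{eq:squarepbk} for $\alpha$ together with injectivity of its lower horizontal map forces $\forget{\alpha\otimes\alpha}(k,x,x')=(k,\alpha(k,x),\alpha(k,x'))$, so $\alpha^* = \inverse{\forget{\alpha\otimes\alpha}}$ acts by substituting successor sets.

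For the iff I would unfold $S\subseteq\alpha^*\circ\bar\lambda_X(S)$ with $S = \inverse{\langle\pi_1^X,\pi_2^X\rangle}R$, combining the explicit formula for $\bar\lambda_X$ from Theorem~\ref{prop:coKlRelLift} with the description of $\alpha^*$ above. Substituting $U=\alpha(k,x)$ and $U'=\alpha(k,x')$ turns the membership $(k,\alpha(k,x),\alpha(k,x'))\in\bar\lambda_X S$ into exactly a two-sided transfer of successors relating $(k,y)$ and $(k,y')$; matching it against the second clause of Definition~\ref{def:condbisim} (read with the first clause, which is absorbed into the very shape of $S$) yields the equivalence. The only delicate point is the bookkeeping between the one-sided phrasing of Definition~\ref{def:condbisim} and the symmetric transfer delivered by $\bar\lambda$, which I expect to reduce to the observation that the two coincide on the relations at issue.

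The invariance is where the real work sits. Since \eqref{eq:squarepbk} commutes (Proposition~\ref{prop:eqfunexists}), functoriality of pullback gives $\inverse{\langle\pi_1^X,\pi_2^X\rangle}\,\inverse{(\forget f\times\forget f)}\bisim_Y = f^*\,\inverse{\langle\pi_1^Y,\pi_2^Y\rangle}\bisim_Y = f^*1^{\bar\lambda}_Y$, so it suffices to prove $f^*1^{\bar\lambda}_Y = 1^{\bar\lambda}_X$ abstractly and then transport back through the bijection above (both sides are first-clause relations, on which the bijection is injective). The inclusion $f^*1^{\bar\lambda}_Y\subseteq 1^{\bar\lambda}_X$ is the easy, reflection direction: applying the monotone $f^*$ to $1^{\bar\lambda}_Y\subseteq\beta^*\bar\lambda_Y 1^{\bar\lambda}_Y$ and rewriting via the homomorphism identity $f^*\beta^* = \alpha^*(\bar\power f)^*$ together with the naturality of $\bar\lambda$, namely $(\bar\power f)^*\bar\lambda_Y = \bar\lambda_X f^*$, exhibits $f^*1^{\bar\lambda}_Y$ as an $\alpha^*\circ\bar\lambda_X$-coalgebra, hence below the largest one. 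The converse, preservation, inclusion is the main obstacle: it amounts to pushing $1^{\bar\lambda}_X$ forward along the direct image $f_!$ and showing $f_!1^{\bar\lambda}_X$ is a $\beta^*\circ\bar\lambda_Y$-coalgebra, which requires $\bar\lambda$ to commute (laxly) with direct images, not merely with reindexing. I would obtain this from the Beck--Chevalley condition on the homomorphism square (\ref{assum:projpullback}, as already exploited in Theorem~\ref{thm:RelLiftIndMorphism}) and the compatibility of the underlying $\power$-lifting $\sigma$ of Example~\ref{ex:LTS} with direct images; alternatively, since \ref{assum:product}, \ref{assum:presEq} and \ref{assum:quotient} all hold, one may route through Theorems~\ref{thm:behequiv->fibredcoalg} and \ref{thm:fibredcoalg->behequiv} to identify $1^{\bar\lambda}$ with behavioural equivalence and read off preservation from the latter.

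Finally, for the functor, each fibre $\coalg{\alpha^*\circ\bar\lambda_X}{\Psi X}$ has a final object: $\alpha^*\circ\bar\lambda_X$ is a monotone endomap (monotonicity is immediate from the formula of Theorem~\ref{prop:coKlRelLift}, paralleling Lemma~\ref{lem:lambda_cofilter}) on the complete lattice $\contrapower(\condset\times X\times X)$, so Knaster--Tarski supplies a greatest fixpoint, which by the iff is $1^{\bar\lambda}_X = \inverse{\langle\pi_1^X,\pi_2^X\rangle}\bisim_X$. Thus $\Psi^{\bar\power}_{\bar\lambda}$ has indexed final objects, and the invariance just proved says exactly that reindexing along coalgebra homomorphisms preserves them. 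Applying Lemma~\ref{lemma:indexfinal} to $\Psi^{\bar\power}_{\bar\lambda}$ over $\coalg{\bar\power}{\coklcat{G}}$ then yields the right adjoint $\coalg{\bar{\power}}{\coklcat{G}} \rTo^{\1^{\bar\lambda}} \coalg{\bar{\power}_{\bar\lambda}}{\elements{\Psi}}$ to $p^{\bar\power}_{\bar\lambda}$, which is the asserted behavioural conformance functor of \eqref{eq:largestbehequiv}.
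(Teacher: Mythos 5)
Your architecture is sound, and it is worth noting that the paper never actually writes out a proof of this corollary: it is left as a consequence of Theorem~\ref{prop:coKlRelLift} and the general framework, the only worked model being the NDA analogue (Theorem~\ref{thm:NDALangEquivFibred}), whose proof is concrete and computational (unwinding transfer properties and a Kleene fixpoint sequence by hand). Your route is genuinely different and arguably cleaner: you make the reindexings explicit, reduce the invariance claim to $f^*1^{\bar\lambda}_Y = 1^{\bar\lambda}_X$ via the commuting square \eqref{eq:squarepbk} and the bijection between clause-1 relations and $\Psi X$, and obtain the functor from Knaster--Tarski plus Lemma~\ref{lemma:indexfinal} --- monotonicity alone suffices here, no meet-preservation as in the NDA case. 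Moreover, the step you call the main obstacle is easier than you fear: no Beck--Chevalley condition and no compatibility of $\sigma$ with direct images is needed. From $1^{\bar\lambda}_X \subseteq \alpha^*\bar\lambda_X 1^{\bar\lambda}_X$ and the unit of $f_!\dashv f^*$ you get $1^{\bar\lambda}_X \subseteq \alpha^*\bar\lambda_X f^* f_! 1^{\bar\lambda}_X = f^*\beta^*\bar\lambda_Y f_! 1^{\bar\lambda}_X$, using only naturality of $\bar\lambda$ (it is an indexed morphism by Theorem~\ref{thm:RelLiftIndMorphism}) and the homomorphism identity $f^*\beta^* = \alpha^*(\bar\power f)^*$; transposing along $f_!\dashv f^*$ shows $f_!1^{\bar\lambda}_X$ is a $\beta^*\circ\bar\lambda_Y$-coalgebra, hence $f_!1^{\bar\lambda}_X \subseteq 1^{\bar\lambda}_Y$ and so $1^{\bar\lambda}_X \subseteq f^*1^{\bar\lambda}_Y$. (The lax commutation with direct images you invoke is in any case automatic, being the mate of the naturality square.)

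The one genuine gap is precisely the point you flag and then defer. The discrepancy between the one-sided clause~2 of Definition~\ref{def:condbisim} and the symmetric transfer built into $\bar\lambda$ does \emph{not} reduce to a bookkeeping observation: for arbitrary $R$ the two conditions differ. Take $X = \{x,y\}$ with $\alpha(k,x) = \emptyset$ and $\alpha(k,y) = \{y\}$, and $R = \{((k,x),(k,y))\}$. Clause~2 holds vacuously, since its hypothesis demands a transition from the \emph{left} component, so $R$ is a conditional bisimulation as printed; yet $\inverse{\langle \pi_1^X,\pi_2^X\rangle}R$ is not an $\alpha^*\circ\bar\lambda$-coalgebra, because the right-to-left half of the formula in Theorem~\ref{prop:coKlRelLift} fails for $y \in \alpha(k,y)$ against $\alpha(k,x)=\emptyset$. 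So the ``iff'' is false for the definition as printed and holds only once clause~2 is read two-sidedly (as in the CTS literature the paper builds on, where conditional bisimulations carry the symmetric clause); with that reading your unfolding goes through verbatim. A minor mismatch of the same flavour: Definition~\ref{def:condbisim} makes $\bisim_X$ a relation on $X$, whereas the equation $\bisim_X = \inverse{(\forget f\times\forget f)}\bisim_Y$ requires $\bisim$ construed as the largest conditional bisimulation on $\condset\times X$ --- which is where your argument correctly places it, via $1^{\bar\lambda}$ and the bijection with clause-1 relations (legitimately applied on both sides, since $\inverse{(\forget f\times\forget f)}\bisim_Y$ inherits clause~1 from $\bisim_Y$). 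State the symmetric clause explicitly and your proof is complete.
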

\paragraph*{Modal characterisation of conditional bisimilarity}
\begin{diagram}[width=5em,midshaft]
  \coklcat{\condset\times\_} & \pile{\rTo~{\forget{\_}} \\ {\qquad \scriptscriptstyle\perp} \\ \lTo~{\iota}} & \set & \pile{\rTo~{\sfunctor} \\ {\scriptscriptstyle\perp}\\ \lTo~{\tfunctor} } & \op\ba
\end{diagram}
Consider the above adjoint situations where the adjoint situation on the right is the well known duality (see, for instance \cite{CoalgebraicLogicSurvey_KupkePattinson2011}) between $\set$ and the opposite category of Boolean algebras $\ba$; $\sfunctor$ is the contravariant powerset functor $\contrapower$ and $\tfunctor$ maps a Boolean algebra to its set of ultrafilters. We follow \cite{JacobsSokolova_expressivity_modal_Logic} and use the proposed syntax functor $\ba \rTo^L \ba$ and the interpretation $\power \tfunctor \rTo^\delta \tfunctor L$ induced by the box modality on (unlabelled) transition systems. Since $\bar\power$ is a coKleisli extension of $\power$, i.e., $\bar\power \circ \iota =\iota \circ \power$, we consider the following logical interpretation for CTSs:
$
\bar\power \circ \iota \circ T = \iota \circ \power \circ T \rTo^{\iota\delta} \iota \circ T \circ L.
$

\begin{corollary}
  Since the lifting $\bar\lambda$ preserves equalities, the logic $(L,\iota\delta)$ defined above is adequate for conditional bisimilarity. Moreover, since $\delta$ is injective in each component \cite{JacobsSokolova_expressivity_modal_Logic}, the function $\forget{\iota\delta_{\mathcal A}}$ is injective; thus $(L,\iota\delta)$ is expressive.
\end{corollary}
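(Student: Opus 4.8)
The plan is to obtain this corollary as a direct instantiation of the general adequacy/expressivity result Theorem~\ref{thm:fibredbehequiv->logequiv}, so that the work reduces to verifying its hypotheses in the coKleisli category $\coklcat{G}$ for $G=\condset\times\_$. First I would record that the two assumptions of Theorem~\ref{thm:behequiv->fibredcoalg} are in place: \ref{assum:product} holds since binary products $\otimes$ in $\coklcat{G}$ exist (they are Cartesian products of the underlying sets), and \ref{assum:presEq} is precisely the statement that $\bar\lambda$ preserves abstract equality, which was established in Theorem~\ref{prop:coKlRelLift}. I would also recall from Corollary~\ref{cor:ctsbisim} that the behavioural conformance functor $\1^{\bar\lambda}$ exists and that the fibred behavioural equivalence it captures is exactly conditional bisimilarity $\bisim_X$; together with Theorems~\ref{thm:behequiv->fibredcoalg} and \ref{thm:fibredcoalg->behequiv} this guarantees that the adequacy and expressivity statements below genuinely concern conditional bisimilarity.

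The remaining hypothesis of Theorem~\ref{thm:fibredbehequiv->logequiv} is the existence of a left adjoint $\bar\sfunctor$ to $\bar\tfunctor=\Eq\circ\tfunctor$. Since $\coklcat{G}$ is a coKleisli category, Assumption~\ref{assum:quotient} holds by Lemma~\ref{prop:coklWPbk}, i.e.\ $\Eq$ has a left adjoint $\Qu$. Following the right-hand instance of \eqref{eq:KupkeRot_behequivSetup} I would then take $\bar\sfunctor=\sfunctor\circ\Qu$, which is left adjoint to $\bar\tfunctor$ by composing the adjunctions $\Qu\dashv\Eq$ and $\sfunctor\dashv\tfunctor$; here $\sfunctor\dashv\tfunctor$ denotes the composite dual adjunction $\coklcat{G}\rightleftarrows\set\rightleftarrows\op\ba$ built from $\forget{\_}\dashv\iota$ and the Boolean-algebra duality. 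With all hypotheses discharged, adequacy follows immediately from the first part of Theorem~\ref{thm:fibredbehequiv->logequiv}.

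For expressivity the theorem demands that $\forget{(\iota\delta)_{\mathcal A}}$ be injective, where $\iota\delta$ is the one-step interpretation of the CTS logic and $\mathcal A$ is the initial $L$-algebra. Its component at $\mathcal A$ is the coKleisli arrow $\iota(\delta_{\mathcal A})$, so the crux is to unfold how $\forget{\_}$ acts on an arrow of the form $\iota f$. Because $\iota$ is right adjoint to $\forget{\_}$ and the counit of $G$ is the second projection, $\iota f$ is the coKleisli arrow $(k,x)\mapsto f(x)$, whence $\forget{\iota f}(k,x)=(k,f(x))$, i.e.\ $\forget{\iota f}=\id{\condset}\times f$. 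This map is injective exactly when $f$ is. Since $\delta$ is injective in each component, in particular $\delta_{\mathcal A}$ is injective, so $\forget{(\iota\delta)_{\mathcal A}}=\id{\condset}\times\delta_{\mathcal A}$ is injective and the second part of Theorem~\ref{thm:fibredbehequiv->logequiv} yields expressivity.

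The substantive inputs, namely equality-preservation of $\bar\lambda$ and the existence of $\Qu$, are already supplied by Theorem~\ref{prop:coKlRelLift} and Lemma~\ref{prop:coklWPbk}, so the bulk of the argument is bookkeeping around the general framework. I expect the only point genuinely requiring care to be the transfer of the injectivity condition through the forgetful functor of $\coklcat{G}$: one must check that $\forget{\iota f}$ is the fibrewise map $\id{\condset}\times f$, so that injectivity of $\delta$ carries over to $\forget{(\iota\delta)_{\mathcal A}}$ rather than being weakened or strengthened by the side-effect structure.
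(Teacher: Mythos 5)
Your proposal is correct and follows essentially the same route as the paper, which justifies this corollary exactly by instantiating Theorem~\ref{thm:fibredbehequiv->logequiv} with \ref{assum:product}, \ref{assum:presEq} (from Theorem~\ref{prop:coKlRelLift}), $\bar\sfunctor=\sfunctor\circ\Qu$ via \ref{assum:quotient} (Lemma~\ref{prop:coklWPbk}), and injectivity of $\forget{\iota\delta_{\mathcal A}}$. Your explicit unfolding of $\forget{\iota f}$ as $\id{\condset}\times f$ merely spells out the step the paper leaves implicit, and it is accurate.
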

\section{Conclusions}\label{sec:conclusion}
%To recapitulate we showed that behavioural equivalence (a coalgebraic notion) can be defined as a coalgebra of certain lifted endofunctor in the fibre of relations, although in some cases (like CTSs) such fibres can be subtle to define. Moreover, we gave
To recapitulate, we gave a systematic way to construct both predicate
and relation liftings in (co)Kleisli categories and Eilenberg-Moore
categories. %Predicate liftings form the basis of defining moves in our 2-player coalgebraic game; while
Relation liftings form the basis to
define behavioural equivalence as a coalgebra of certain lifted
endofunctor in the fibre of relations, although in some cases (such as CTSs) such fibres can be subtle to define.

Once behavioural equivalence is captured as a fibred notion, the Kupke
and Rot setup becomes applicable to obtain its corresponding logical
characterisation. In particular, we gave a recipe to find the left
adjoint $\bar \sfunctor $ of $\bar \tfunctor$ which is a sufficient
condition for both adequacy and expressivity. For coKleisli and
Eilenberg-Moore categories, the construction
\eqref{eq:KupkeRot_behequivSetup} of $\bar \sfunctor$ is based on the
existence of coequalisers in the underlying categories, while in the
Kleisli case one has to resort to a reflective subcategory having
coequalisers (cf.\thinspace
Theorem~\ref{thm:KleisliQu}).%\todo{H:Develop games for coKleislicategories.}\todo{B: Mention future work.}

In the future, we plan to develop extend the 2-player game \cite{km:bisim-games-logics-metric} to coalgebras with side effects. %One possibility is to apply the codensity games developed in \cite{CodensityGames} and also establish a precise relationship between the proposed game for Kleisli and Eilenberg-Moore categories.
Lastly, we also plan to investigate whether the given recipe of constructing predicate/relation liftings can be extended to more general monads (like the ones on pseudometric spaces). This should help in developing quantitative modal logics for coalgebras with side effects; thus, providing a pertinent litmus test for the categorical unification of quantitiative expressivity as claimed in the recent work \cite{Katsumata2021_QuantitativeModLogic}.

\bibliographystyle{splncs04}
\bibliography{reference}
\newpage
\appendix
\section{Omitted proofs}
\EqFuncExistence*
\begin{proof}
Let $X \rTo^f Y \in \cat C$. We need to show that $\equiv_X \subseteq f^* \equiv_Y$. First consider the the following diagram, where $p_i^\Box$ and $\pi_i^\Box$ (for $i\in\{1,2\}$ and $\Box\in\{X,Y\}$) are the obvious projection arrows in $\set$ and $\cat C$, respectively.
\begin{diagram}[width=3em,height=2em]
  \forget {X \otimes X} & & \rExistsMap{\langle\forget{\pi_1^X},\forget{\pi_2^X}\rangle} & & & \forget X \times \forget X\\
  &\rdTo(3,2)_{\forget{\pi_i^X}}& && \ldTo_{p_i^X} &\\
  \dExistsMap^{\forget{f\otimes f}} && & \forget X&& \dExistsMap_{(\forget f \times \forget f)}\\
  \forget {Y \otimes Y} & & \rExistsMap^{\langle\forget{\pi_1^Y},\forget{\pi_2^Y}\rangle} && & \forget Y \times \forget Y\\
  &\rdTo(3,2)_{\forget{\pi_i^Y}} & & \dTo_{\forget f}& \ldTo_{p_i^Y} &\\
  && &\forget Y&&
\end{diagram}
Note that all the dashed arrows exists due to the universal property of products. Moreover, it is well known that projection functions $p_i^Y$ are jointly monic. So it suffices to show that the corresponding outermost paths for each $i$ from $\forget{X\otimes X}$ to $\forget{Y}$ commutes, which follows from diagram chasing. Now the result follows from the following implications:
\begin{align*}
  =_{\forget X} \subseteq \inverse{(\forget f \times \forget f)} =_{\forget Y} & \Rightarrow \inverse{\langle\forget{\pi_1^X},\forget{\pi_2^X}\rangle}\! =_{\forget X} \subseteq \inverse{\langle\forget{\pi_1^X},\forget{\pi_2^X}\rangle} \inverse{(\forget f \times \forget f)} =_{\forget Y}\\
  \Rightarrow &\ \inverse{\langle\forget{\pi_1^X},\forget{\pi_2^X} \rangle} =_{\forget X}\subseteq \inverse{\forget{f\otimes f}} \inverse{\langle\forget{\pi_1^Y},\forget{\pi_2^Y} \rangle}=_{\forget Y}.
\end{align*}
The second implication holds due to the commutative of rectangle drawn with dashed edges. \qed
\end{proof}

\BehEquivIsCoalg*
\begin{proof}
  Let $(X,\alpha) \rTo^f (Y,\beta) \in \coalg{F}{\cat C}$ be a coalgebra homomorphism and %We need to show that
  %\[
  %\forall_{x,x'\in\forget{X}}\ \forget f x = \forget f x' \implies \forget\alpha x \ (\mathrel{\bar\lambda_X \equiv_f})\ \forget\alpha x'.
  %\]
  %Note that syntactical equality $0_X$ is an initial object in the fibre $\Phi X$. Moreover
  consider the following commutative diagram.
  \begin{diagram}[width=4em,height=3em]
      \Phi X & \rTo^{\lambda_X} & \Phi FX & \rTo^{\alpha^*}& \Phi X\\
      \uTo^{f^*} && \uTo_{(Ff)^*} && \uTo_{f^*}\\
      \Phi Y & \rTo_{\lambda_Y} & \Phi FY & \rTo_{\beta^*}& \Phi Y
    \end{diagram}
  Then using Assumptions \ref{assum:product}, \ref{assum:presEq} and Proposition~\ref{prop:eqfunexists} we find that $\equiv_Y \subseteq \beta^* (\equiv_{FY} ) = \beta^* \lambda_{Y} (\equiv_Y)$. Furthermore, we derive
  \[
  f^*\equiv_Y \ \subseteq f^* \beta^* \lambda_Y \equiv_Y \implies f^*\equiv_Y \ \subseteq \alpha^* \lambda_X f^*\equiv_Y. \qquad \qed
  \]
%  {\allowdisplaybreaks
%  \begin{align*}
%    0_Y \ \preceq \ \beta^* \lambda_Y 0_Y \implies&\ \Psi f (=_{\forget Y}) \ \subseteq \ \Psi f \circ \Psi \beta \circ \bar\Lambda_Y (=_{\forget{Y}}) \\
%    \implies&\ \Psi f (=_{\forget Y}) \ \subseteq \  \Psi \alpha \circ \bar\Lambda \circ \Psi f (=_{\forget Y}). \qedhere
%  \end{align*}
%  }
\end{proof}

\FibCoalgIndBehEquiv*
\begin{proof}
  Let $R\in\Psi X$ such that $R \preceq \alpha^* \lambda_X R$ and let $\kappa$ be the unit of adjunction $\Qu \dashv \Eq$. For a given coalgebra $X \rTo^\alpha FX$, we are going to first construct a coalgebra of the form $\Qu(X,R) \rTo^{\alpha_R} F\Qu(X,R)$ such that $p\kappa_X$ is the witnessing coalgebra homomorphism. To define $\alpha_R$, consider the following diagram in $\elements{\Psi}$ without the dashed arrow.
        \begin{equation}\label{eq:QuotientCoalg}
        \begin{diagram}[height=3em,width=5em]
          (X,R) & \rTo^{\kappa_X} & \Eq\circ\Qu(X,R)\\
          \dTo^{\alpha} && \dExistsMap_{\Eq (\alpha_R)}\\
           (FX,\lambda_X R) & \rTo_{F\kappa_X} & \Eq \circ F \circ \Qu(X,R)
        \end{diagram}
        \end{equation}
  Note that by the assumption we have $\lambda_X (\equiv_X) =\ \equiv_{FX}$ and, since the unit $\kappa_X\in\elements{\Psi}$, we get that $R \preceq \kappa_X^* \equiv_{\Qu(X,R)}$. Thus,
  \begin{align*}
    \lambda_X R \preceq \lambda_X \kappa_X^* (\equiv_{\Qu(X,R)}) \implies&\ \lambda_X R \preceq (F\kappa_X)^* \lambda_{\Qu(X,R)} \equiv_{\Qu(X,R)}\\
    \implies &\ \lambda_X R \preceq (F\kappa_X)^* \equiv_{F\Qu(X,R)} \enspace.
  \end{align*}
  Thus $F\kappa_X$ drawn in \eqref{eq:QuotientCoalg} is a valid arrow in $\elements{\Psi}$. And the universal property of the unit $\kappa_X$ gives a unique arrow $\Qu(X,R) \rTo^{\alpha_R} F\Qu(X,R)$, i.e., a coalgebraic structure on $\Qu(X,R)$, such that Square \eqref{eq:QuotientCoalg} commutes. Clearly, appyling the forgetful functor $p$ ensures that $p\kappa_X$ is a coalgebra homomorphism. \qed
\end{proof}
\IndexFinal*
\begin{proof}
  Define $\1X=(X,1_X)$ where $1_X$ is the final object in $\Phi X$. Note for any $X \rTo^f Y$, we have $1_X \rTo^{\id{1_X}} 1_X = f^*1_Y $ since the reindexing functor preserves the indexed final objects. Thus, we let $\1f = (f,\id{1_X})$. Next we show that $p \dashv \1$.
  \[
  \infer={X \rTo^f Y \in\cat C}
    {
    \infer={
    p(X,U) \rTo^f Y \in \cat C
    }
    {
    \infer={
    (X,U) \rTo^{(g,\bar g)} \1Y \in \elements{\Phi}
    }
    {
    (X,U) \rTo^{(g,\bar g)} (Y,\id{1_Y}) \in \elements{\Phi}
    }
    }
    }
  \]
  \fbox{$\Uparrow$} Let $X \rTo^f Y \in\cat C$. Then take $g=f$ and let $\bar g$ be the unique arrow $U \rTo^{\bar g} 1_X \in \Phi X$ to the final object $1_X=f^*1_Y$.

\noindent
  \fbox{$\Downarrow$} Take $f=g$ and the implication follows from the construction of $\elements{\Phi}$. \qed
\end{proof}
\FibBehEquivImpLogEquiv*
\begin{proof}
  For adequacy, we show the conditions of \cite[Theorem~18]{KupkeRot_ExprLog_2020} are satisfied. It remains to show that there is a natural transformation $(F\tfunctor A,\equiv_{F\tfunctor A}) = F_{\lambda} \bar \tfunctor A \rTo^{\bar\delta_A} \bar \tfunctor L A = (\tfunctor LA, \equiv_{\tfunctor LA})$. We can simply let $\bar\delta = \delta$ because $\Eq$ is a functor. For expressivity, we show that the conditions of \cite[Theorem~19]{KupkeRot_ExprLog_2020} are satisfied, i.e., the arrow $(FTA,=_{\forget{T\mathcal A}}) \rTo^{\delta_{\mathcal A}} (TL\mathcal A,=_{\forget{TL\mathcal A}}) \in \elements{\Psi}$ is Cartesian. But this follows directly from the injectivity of $\forget{\delta_A}$. To see this let $X \rTo^f Y$ be an arrow in $\cat C$ such that $=_{\forget{X}} \ =\ \inverse{(\forget{f} \times \forget{f})} =_{\forget{Y}}$. Then we derive
  \begin{align*}
    \inverse{\forget{f\otimes f}} (\equiv_Y) =&\ \inverse{\forget{f\otimes f}} \inverse{\langle\forget{\pi_1^Y},\forget{\pi_2^Y}\rangle} (=_{\forget{Y}})\\
    \stackrel{\eqref{eq:squarepbk}}{=} &\ \inverse{\langle\forget{\pi_1^X},\forget{\pi_2^X}\rangle} \inverse{(\forget{f} \times\forget f)} =_{\forget{Y}} \\
    =&\ \inverse{\langle\forget{\pi_1^X},\forget{\pi_2^X}\rangle}  (=_{\forget{X}})\\
    =&\ \equiv_X.
  \end{align*}
\end{proof}
\KleisliQuotients*
\begin{proof}
  Due to \ref{assum:product} product exists in $\klcat{T}$ and the forgetful functor preserves these products. So we can work with concrete equality $=_{\forget{X}}$ instead of $\equiv_X$ (for $X\in\klcat{T}$) by showing that $\bar\sfunctor\dashv\bar\tfunctor$, i.e., the following correspondence holds
  \[
  \infer={
   \infer={Y \rTo^g \bar \sfunctor(X,R) \in \cat A}
   {\bar\sfunctor(X,R) \rTo^{g} Y \in\op{\cat A}}
  }
  {
  \infer={(X,R) \rTo^{f} (\jmath Y,=_{\forget \jmath Y}) \in \elements{\Psi}}
  {
  X \rTo^f \jmath Y \in\klcat T \land R\subseteq \inverse{(\forget f \times \forget f)}=_{\forget{\jmath Y}}
  }
  }
  \]
\fbox{$\Downarrow$} Let $X \rTo^f \jmath Y\in\klcat{T}$ such that $R\subseteq \inverse{(\forget f \times \forget f)}=_{\forget{\jmath Y}}$. Then we find a unique $Y \rTo^g \reflector X \in \cat A$ as the transpose of $f$ under $\reflector\dashv\jmath$. Moreover, using the counit of $\iota\dashv\forget{\_}$ one can show that if $\forget{f} \circ  p_1 =\forget{f} \circ p_2 \implies f \circ p_1' = f \circ p_2'$. Thus, $\reflector p_1' \circ g = \reflector p_2' \circ g$. And by the universal property of equalisers we find a unique $Y \rTo^{u_g} \bar \sfunctor(X,R)$.

\fbox{$\Uparrow$} Let $Y \rTo^{g} \bar \sfunctor (X,R) \in\cat A$. Then take $f$ as the transpose of $e\circ g$ under $\reflector\dashv \jmath$. To show that $f\circ p_1'=f\circ p_2'$ consider the following commutative diagram in $\klcat{T}$, where $\eta$ is the unit of $\reflector\dashv\jmath$.
\begin{diagram}[width=4em,height=3em]
  \jmath\reflector \iota R & \parrows{\jmath\reflector p_1'}{\jmath\reflector p_1'} & \jmath\reflector X & \rTo^{\jmath(e\circ g)} & \jmath Y\\
  \uTo^{\eta_{\iota R}} && \uTo_{\eta_X}\\
  \iota R & \parrows{p_1'}{p_2'}& X
\end{diagram}
\end{proof}
\thetamuComp*
\begin{proof}
  Let $C \rTo^f D \in \klcat{T}$. Then we need to show that
  the following square on the left commutes. But this follows immediately by the commutative diagram drawn on the right, where the top square commutes due to the naturality of $\gamma$.

  \vspace{-0.4cm}
  \begin{minipage}{0.4\textwidth}
    \begin{diagram}
    \forget{\bar FC} & \rTo^{\gamma_C} & G \forget{C}\\
    \dTo^{\forget{\bar Ff}} & & \dTo_{G\forget{f}}\\
    \forget{\bar FD} & \rTo^{\gamma_D} & G \forget{D}
  \end{diagram}
  \end{minipage}%\hspace{-4.5cm}
  \begin{minipage}{0.6\textwidth}
    \begin{diagram}[nohug,width=4em,height=1.75em]
    TFC & \rTo^{\gamma_C } & GTC\\
    \dTo^{TFf} & & \dTo_{GTf}\\
    TFTD & \rTo^{\gamma_{TD}} & GTTD\\
    \dTo^{T\theta_D } & & \\
    TTFD & & \dTo_{G\mu_D} \\
    \dTo^{\mu_{FD}} &&\\
    TFD & \rTo^{\gamma_D }& GTD
  \end{diagram}
  \end{minipage}
  \vspace{0.1cm}

  For the converse, take $f=\id{TC}$ and view it as a Kleisli arrow $TC \rTo C$.
\end{proof}
\predliftnda*
\begin{proof}
  Let $V\subseteq Y$ and $X \rTo^f Y$ be a function. Then,%\done{l252: Lemm 3.4: The proof shows naturality. Please also point out why $\sigma_X$ is a Boolean homomorphism.}
{\allowdisplaybreaks  \begin{align*}
    &\inverse{(f^\act\times 2)} \sigma_Y^a V = \inverse{(f^\act\times 2)}\{(q,b)\in Y^\act\times 2 \mid qa\in V\} \\
    =&\ \left\{(p,b')\in X^\act\times 2 \mid (f^\act\times 2) (p,b') \in \{(q,b)\mid qa\in V\} \right\}\\
    =&\ \left\{(p,b')\in X^\act\times 2 \mid (f\circ p,b') \in \{(q,b) \mid qa\in V\} \right\}\\
    =&\ \{(p,b) \mid f (pa)\in V\}\\
    =&\ \{(p,b) \mid pa \in\inverse f V\} \\
    =&\  \sigma_X^a \inverse{f} V .
  \end{align*}}
%  Clearly, $\sigma^a_X$ (for each $X\in\set$) preserves finite meet because
%  \[
%  (b,p)\in \sigma_X^a(\bigcap_{i\in I} U_i) \iff b\in 2 \land pa \in \bigcap_{i\in I} U_i \iff b\in 2 \land \forall_{i\in I}\ pa \in U_i \iff \bigcap_{i\in I} (b,p) \in \sigma_X^a U_i.
%  \]
  For the mapping associated with termination, we derive
{\allowdisplaybreaks  \begin{align*}
    &\inverse{(f^\act\times 2)} \sigma_Y^\term V = \inverse{(f^\act\times 2)}\{(q,1) \mid q\in Y^\act\} \\
    =&\  \left\{(p,b)\in X^\act\times 2 \mid (f^\act\times 2) (p,b) \in \{(q,1)\mid q\in Y^\act\} \right\}\\
    =&\  \left\{(p,1)\in X^\act\times 2 \mid f\circ p \in Y^\act \right\}\\
    =&\ \sigma_X^\term \inverse{f} V. \qquad \qed
  \end{align*}}
%  Lastly, $\sigma_X^\term$ preserves finite meet because
%  \[
%  \sigma_X^\term (\bigcap_{i\in I} U_i) = 1 \times X^\act = \bigcap_{i\in I} (1 \times X^\act) = \bigcap_{i\in I} \sigma_X^\term U_i. \qedhere
%  \]
\end{proof}

\GRelLift*
\begin{proof}
  Let $R,R'$ be two binary relations on $X$ such that $R \subseteq R'$. We first show that $\bar\sigma_X R \subseteq \bar\sigma_X R'$. Let $(p,b) \mathrel{\bar\sigma_X R} (p',b)$. Then we find $\forall_{a\in\act}\ (pa \mathrel{R} p'a \implies pa \mathrel{R'} p'a)$. Thus, $(p,b) \mathrel{\bar\sigma_X R} (p',b)$.

  Lastly, we need to show that the equation $\bar\sigma_{X} \circ \inverse{(f\times f)} = \inverse{(Gf \times Gf)} \circ \bar\sigma_Y$ holds. Below $S$ is a relation on $Y$.
  \begin{align*}
    (p,b) \mathrel{\bar\sigma_{X} \circ \inverse{(f\times f)} S} (p',b') &\iff\ b=b' \land \forall_{a}\ pa \mathrel{\inverse{(f\times f)}} p'a\\
    \iff&\ b=b' \land \forall_{a}\ fpa \mathrel R fp'a\\
    \iff&\ (f\circ p,b) \mathrel{\bar\sigma_Y S} (f\circ p',b')\\
    \iff&\ Gf (p,b) \mathrel{\bar\sigma_Y S} Gf(p',b')\\
    \iff&\ (p,b) \mathrel{\inverse{(Gf\times Gf)} \circ \bar\sigma_Y S} (p',b') \enspace. \qed
  \end{align*}
\end{proof}

\LambdaCofilter*
\begin{proof}
  Let $R_i$ (for $i\in\mathbb N$) be relations on $X$. Then
  \begin{align*}
    (p,b) \mathrel{\left(\sigma_X^G \left(\bigcap_{i\in\mathbb N} R_i\right)\right)}  (p',b') \iff&\ b=b' \land \forall_{a\in\act}\ pa \mathrel{(\bigcap_{i\in\mathbb N} R_i)} p'a\\
    \iff&\ b=b' \land \forall_{a\in\act,\mathbb N}\ pa \mathrel{R_i} p'a\\
    \iff&\ \forall_{i\in\mathbb N}\ \left( b=b' \land \forall_{a\in\act}\ pa \mathrel{R_i} p'a \right)\\
    \iff&\ \forall_{i\in\mathbb N}\ (p,b) \mathrel{(\sigma_X^G R_i)}  (p',b')\\
    \iff& (p,b) \left(\mathrel{\bigcap_{i\in\mathbb N}\ \sigma_X^G R_i }\right) (p',b') \enspace.
  \end{align*}
  ote that $\lambda$ is the composition $\inverse{(\gamma_X \times\gamma_X)} \circ \sigma^G$ and since inverse functions preserve arbitrary intersections, we can conclude that $\lambda$ also preserves intersections at each component.

  Lastly, to show that $\bar\lambda$ preserves equalities, let $R = =_{\power X}$. Let $\bar U,\bar U'\subseteq \nda X$ such that $\bar U \mathrel {\lambda_X R} \bar U'$. Clearly, $\bullet \in\bar U \iff \bullet\in\bar U'$. So consider $(a,\bar x)\in\bar U$. Then $\bar x\in\{x\mid (a,x)\in \bar U\}$ and since the two sets $\{x\mid (a,x)\in \bar U\},\{x\mid (a,x)\in \bar U'\}$ are the same, we find that $(a,x)\in\bar U'$. Likewise, we can show the other direction and we conclude that $\bar U =\bar U'$.
\end{proof}
\NDALangEquivFibred*
\begin{proof}
  Let $X \rTo^\alpha \nda X \in \setrel$ be an NDA. Since the relation lifting $\bar\lambda$ preserves intersections, so from Kleene fixed point theorem, the largest fixpoint $\sim_X \in \Psi X$ of $\alpha^* \circ \bar\lambda$ exists. That is, the following sequence stabilises (below $1_{X}\in\Psi X$ is the total relation on $\power X$):
  \[
1_{X}
    \mathrel{\supseteq}
     \inverse{(\forget\alpha\times\forget\alpha)} \circ \lambda_X (\top_X) \mathrel{\supseteq}
      (\inverse{(\forget\alpha\times\forget\alpha)} \circ \lambda_X)^2 (\top_X) \mathrel{\supseteq} \cdots
       %(\inverse{(\forget\alpha\times\forget\alpha)} \circ \lambda_X)^3 (\top_X) \mathrel{\supseteq} \cdots
\]
Let us denote the above projective sequence as a diagram $\cat{N} \rTo^D \Psi X$, i.e. $Di = (\inverse{\forget{\alpha}} \circ \lambda_X)^i \top_X$ (note that the zeroth iteration of a function is assumed to return the full relation $\top_X$ and $\cat N$ has the set $\mathbb N$ of natural numbers as objects and arrows induced by the less-than-equal-to relation). So $\sim_X = \bigcap_i Di$ and note that by definition $\sim_X$ is $\alpha^* \circ \bar\lambda$-coalgebra. %living in the fibre $\Psi X$.

Let $U_{w}$ is the unique state reached in the determinised system from the state $U$ after performing the trace $w$. Note that for any $i>1$ we have:
\[
U \mathrel{D(i+1)} U' \iff (U \term \iff U' \term) \ \land \ \forall_{a\in\act}\ U_a \mathrel{Di} U_a'.
\]
Moreover, it is not hard to show by induction on $w$ that we have
\[
U \simeq_X U' \quad \iff \quad \forall_{w\in\fseq\act}\ U_{w}\term \iff U'_{w} \term.
\]

Next we show that $\sim_X = \simeq_X$. To show that $\simeq_X\subseteq \sim_X$, it suffices to show that $\simeq_X$ is a $\alpha^*\circ\bar\lambda_X$-coalgebra, i.e., for any $U,U'\subseteq X$, if $U \simeq_X U'$ then $\forget{\alpha} U \mathrel{\lambda_X \simeq_X} \forget{\alpha} U'$. So let $U\simeq_X U'$, clearly, $U\term \iff U'\term$. Moreover, $U_a \simeq_X U_a'$ (for any $a$) because $U\simeq_X U'$. So $\simeq_X\subseteq \sim_X$.

For the other direction, let $U \sim_X U'$. Then for any $w\in\fseq\act$ we have $U \mathrel{D(\#w+1)} U'$, where $\#w$ is the length of the word $w$. Thus, $U_w \term \iff U'_w \term$. Therefore, $U\simeq_X U'$.

Lastly, let $(X,\alpha) \rTo^f (Y,\beta)\in\coalg{\bar F}{\setrel}$, i.e., $X \rTo^f Y$ is a relation satisfying the following two transfer properties:
\begin{itemize}
  \item $\forall_{x,y',a}\ \big(\exists_{y}\ (x \mathrel f y \land y \step a y') \iff \exists_{x'}\ (x' \mathrel f y' \land x \step a x') \big)$.
  \item $\forall_{x}\ \big( x\term \iff \exists_{y}\ (x \mathrel f y \land y\term)\big)$.
\end{itemize}
Note that the second transfer property ensures that $U_w \term \iff (\forget fU)_w \term$, for any $U\subseteq X,w \in\fseq\act$. This is because
\[
U_w \term \iff \exists_{x\in U_w}\ x\term \iff \exists_{x\in U_w,y}\ x\mathrel f y \land y\term \iff (\forget{f}U_w) \term.
\]
%Now the first property says that $\forget{f} (U_a) = (\forget f U)_a$; so by induction we have $\forget{f} (U_w) = (\forget{f} U)_w$. And the second property says that $(\forget{f}U)\term = U\term$.
Thus, for any $w$ we have $(U_w \term \iff U'_w \term) \iff \big( (\forget{f}U)_w \term \iff (\forget{f} U')_w \term \big)$. Hence $\simeq_X = \inverse{(\forget{f} \times\forget{f})} \simeq_Y$.
\end{proof}

\ThetaNDA*
\begin{proof}
We show that the following square commutes
for any $X \rTo^f Y \in \setrel$.
      \begin{diagram}[width=2cm]
        \act\times\power X + 1 & \rTo^{\theta_X} & \power (\act\times X +1 )\\
        \uTo^{L\reflector f} & & \uTo_{\reflector \bar Ff}\\
        \act\times\power Y + 1 & \rTo^{\theta_Y} & \power (\act\times Y +1 )
      \end{diagram}
  Let $a\in\act,V \subseteq Y$. Then we find
  {\allowdisplaybreaks
      \begin{align*}
        \reflector\bar (Ff) \theta_Y (a,V) =&\ \reflector(\bar Ff) (\{a\} \times V)\\
        =&\ \{(a',x) \mid \exists_{y\in V}\ (a',x) \mathrel{Ff} (a,y) \}\\
        =&\ \{(a,x) \mid \exists_{y\in V}\ x \mathrel f y \}\\
        =&\ \{a\} \times \{x \mid \exists_{y\in V}\ x \mathrel f y \}\\
        =&\ \{a\} \times \reflector f (V)\\
        =&\ \theta_X (a,\reflector f (V)) = \theta_X \circ L(\reflector sf) (a,V).
      \end{align*}}
  Likewise, the case for $\bullet\in 1$ can be verified; thus, $L\sfunctor \rTo^\theta \sfunctor F$ is a natural transformation.
\end{proof}
\coKleisliWPbk*
\begin{proof}[\ref{assum:projpullback}]
  Consider the square drawn in \eqref{eq:squarepbk} and
%  \begin{diagram}
%  \forget {X \otimes X} & \rTo{\langle\forget{\pi_1^X},\forget{\pi_2^X}\rangle} & \forget X \times \forget X\\
%  \dTo^{\forget{f\otimes f}} & & \dTo_{(\forget f \times \forget f)}\\
%  \forget {Y \otimes Y} & \rTo{\langle\forget{\pi_1^Y},\forget{\pi_2^Y}\rangle} & \forget Y \times \forget Y
%\end{diagram}
 let $k,k',y,y',x,x'$ be such that $\langle\forget{\pi_1^Y},\forget{\pi_2^Y}\rangle (k,y,y') = \forget{f} \times \forget{f} ((k,x),(k',x'))$. I.e., $(k,y)=(k,f(k,x))$ and $(k,y')=(k',f(k',x'))$. Thus, we find that $k=k'$. Therefore, we can conclude that $(k,x,x') \in \forget{X \otimes X}$. Moreover, we also find that $\forget{f\otimes f} (k,x,x')=(k,f\otimes f(k,x,x')) = (k,f(k,x),f(k,x')) = (k,y,y')$. Lastly, $\langle\forget{\pi_1^X},\forget{\pi_2^X}\rangle(k,x,x')=((k,x),(k,x')$. I.e., the universal property of weak pullback in $\set$ holds. \qed
\end{proof}
\begin{proof}[\ref{assum:quotient}]
  Let $G=\condset\times\_$ and let $(X,R)\in\elements{\Psi}$, i.e., $X\in\set$ and $R\in\Psi X$. This means we have the following diagram in $\set$:
  \[
  R_! \parrows{p_1}{p_2} GX \rTo^q GX/R_!,
  \]
  where $GX/R_!$ is the coequaliser of $p_1,p_2$ and $R_!=\langle\pi_1^X,\pi_2^X\rangle_! R=\{((k,x),(k,x')) \mid (k,x,x')\in R\}$. So we define $\Qu(X,R)$ to be the object $\iota (GX/R_!)$. In other words, $\Qu$ can be defined by the composition:
  \[
  \elements{\Psi} \rTo^{\forget{\_}_!} \elements{\Psi'} \rTo^{\Qu'} \set \rTo^{\iota} \coklcat{G},
  \]
  where $\Psi'$ is the standard indexed category of relations on sets and $\Qu'$ is the usual quotient functor as described in Remark~\ref{rem:quotientSets}. The functor $\forget{\_}_!$ is defined as: $(X,R) \mapsto (\forget{X},R_!)$ and $f \mapsto \forget{f}$. This is welldefined due to Beck-Chevalley property on the indexed category $\contrapower$, which holds for weak pullback squares. In particular, let $(X,R) \rTo^f (Y,S) \in \elements{\Psi}$, i.e., $X \rTo^f Y \in \cat C$  and $R \subseteq \inverse{(f\otimes f)}S$. By applying the direct image functor we get
  \[
  \langle\forget{\pi_1^X},\forget{\pi_2^X}\rangle_! R \subseteq \langle\forget{\pi_1^X},\forget{\pi_2^X}\rangle_! \inverse{(f\otimes f)}S = \inverse{(\forget{f}\times\forget{f})} \langle\forget{\pi_1^X},\forget{\pi_2^X}\rangle_! S .
  \]
  Note that the righmost equality holds due to Beck-Chevalley condition and \ref{assum:projpullback}. Therefore $\Qu$ is a functor.

  Next we show that $\Qu \dashv \Eq$, i.e., we establish the following correspondence:
  \[
  \infer={
   \infer={X \rTo^g Y \in \coklcat{G} \land R \subseteq \inverse{(\forget{g \otimes g})}\equiv_Y}
   {(X,R) \rTo^g \Eq Y=(Y,\equiv_Y) \in \elements{\Psi}}
   }
  {\iota (GX/R_!) \rTo^f Y \in \coklcat{G}}
  \]
  \fbox{$\Uparrow$} Let $X \rTo^g Y \in \coklcat{G}$ (i.e., $GX \rTo^g Y\in\set$) and $R \subseteq \inverse{(\forget{g \otimes g})}\equiv_Y$. We claim that $g\circ p_1 =g \circ p_2$. So let $(k,x) \mathrel {R_!} (k',x')$. Then we find $k=k'$ because of the construction of $R_!$ and $g(k,x)=g(k,x')$ because of the inequality $R \subseteq \inverse{(\forget{g\otimes g})}\equiv_Y$. This proves the claim. And from the universal property of coequalisers in $\set$, we find a unique function $GX/R \rTo^{g'} Y$ satisfying $g'\circ q=g$. So fix $f=\iota g'$ and clearly $f$ is uniquely determined by $g$.

  \fbox{$\Downarrow$} Let $\iota (GX/R) \rTo^f Y \in \coklcat{G}$. Recall the unit $X \rTo^{\eta_X} \iota \forget{X} \in \coklcat{G}$ of the adjunction $\forget{\_}\dashv \iota$. Then, let $g$ be the following composition:
  \[
  X \rTo^{\eta_X} \iota \forget{X} \rTo^{\iota q} \Qu(X,R) \rTo^f Y \ \in \coklcat{G}.
  \]
  Clearly, $g$ is uniquely determined by $f$. Now it remains to show that for $k,x,x'$, if $(k,x,x') \in R$ then $\forget{g\otimes g}(k,x,x') = (k,g\otimes g(k,x,x'))=(k,g(k,x),g(k,x'))\in \equiv_Y$. Thus it remains to show that $g(k,x)=g(k,x')$.

  To this end, we first evaluate $g$. Note that $\eta$ is given by the identity arrow and using the definition of coKleisli composition we find that
  \[
  (k,x) \rMapsto^{\Delta_X} (k,k,x) \rMapsto^{G\eta_X} (k,k,x)  \rMapsto^{\iota q} q(k,x).
  \]
  Thus, $\iota q \circ \eta_X = q$. So computing the composition $X \rTo^q \Qu(X,R) \rTo^f Y \in\coklcat{G}$ results in the evaluation of $g$. In particular,
  \[
  g(k,x) = f (Gq (\Delta_X (k,x))) = f (k, q(k,x)).
  \]
  So if $(k,x,x')\in R$ then $q(k,x)=q(k,x')$. Thus, $g(k,x) = f(k, q(k,x)) =f(k,q(k,x')) = g(k,x')$.\qed
\end{proof}

\coKleisliRelLift*
\begin{proof}
  Recall from \eqref{eq:liftofRlift}, the map $\bar\lambda$ is given by the composition:
  \begin{multline*}
    \Psi X \rTo^{\langle\forget{\pi_1^X},\forget{\pi_2^X}\rangle_!} \contrapower(\forget {X} \times \forget X) \rTo^{\sigma_{\forget X}} \contrapower (\power\forget{X} \times \power\forget{X}) \\ \rTo^{\inverse{\gamma_X\times\gamma_X}} \contrapower (\forget{\power X} \times\forget{\power X}) \rTo^{\inverse{\langle\forget{\pi_1^{FX}},\forget{\pi_2^FX}\rangle}} \Psi \bar \power X.
  \end{multline*}
  %where $\tilde\power X \times \tilde\power X \rTo^{q_1,q_2} \tilde\power X\in\coklcat{G}$ given by $q_1(k,U,U')=U$ and $q_2(k,U,U')=U'$.
  Let $R\subseteq G(X \times X)$, where $G=\condset\times\_$. Then we compute
  {\allowdisplaybreaks
  \begin{align*}
    \langle\forget{\pi_1^X},\forget{\pi_2^X}\rangle_! R &= \ \{((k,x),(k,x'))\mid (k,x,x')\in R \}\\
    \sigma_{\forget X} \langle\forget{\pi_1^X},\forget{\pi_2^X}\rangle_! R &= \ \Big\{(S,S') \in \power\forget X \times \power \forget X \mid \\
    & \quad \forall_{k,x} \big( k \mathrel S x \implies \exists_{x'}\ k\mathrel {S'} x' \land (k,x,x')\in R \big) \land \\
    & \quad \forall_{k,x'} \big( k\mathrel{ S'} x' \implies \exists_{x}\ k\mathrel S x \land (k,x,x')\in R \big)\Big\}\\
    \inverse{(\gamma_X \times \gamma_X)} (\sigma_{\forget X} \langle\forget{\pi_1^X},\forget{\pi_2^X}\rangle_! R) &= \{((k,U),(k,U')) \mid U,U'\subseteq X \ \land  \\
    \forall_{x\in U}\exists_{x'\in U'}& (k,x,x')\in R \land \forall_{x'\in U'}\exists_{x\in U} (k,x,x')\in R \}\\
    \bar\lambda R= \ \{(k,U, U') \mid \forall_{x\in U}\exists_{x'\in U'}& (k,x,x')\in R \land \forall_{x'\in U'}\exists_{x\in U} (k,x,x')\in R\}.
    %&=\ \{(k,\{x\mid \exists_{x'}\ x\mathrel {R'} x' \},\{x'\mid \exists_{x}\ x\mathrel {R'} x' \})\mid \forall_{x,x'}\ x \mathrel {R'} x' \implies (k,x) \mathrel R (k,x')\}\}\\
%    \bar\lambda_X R &=\{(\forget {q_1},\forget {q_2})\forget{\pi_X}(k,R')\mid \forall_{x,x'}\ x \mathrel R' x' \implies (k,x) \mathrel R (k,x')\}\\
%    &=\{(\forget {q_1},\forget {q_2})(k,\pi_X(k,R')) \mid \forall_{x,x'}\ x \mathrel R' x' \implies (k,x) \mathrel R (k,x')\}\\
%    &= \{(\forget{q_1}(k,\pi_X(k,R')),\forget{q_2}(k,\pi_X(k,R'))) \mid \forall_{x,x'}\ x \mathrel R' x' \implies (k,x) \mathrel R (k,x')\}\\
%    = \{\big((k,& \{x\mid \exists_{x'}\ x \mathrel {R'} x'\}),(k,\{x' \mid \exists_x\ x\mathrel {R'} x' \})\big)\mid \forall_{x,x'}\ x \mathrel R' x' \implies (k,x) \mathrel R (k,x')\}
  \end{align*}
  }
  For \ref{assum:presEq} we first find that
  \[(k,x,x')\in \equiv_X \iff (k,x,x') \in \inverse{\langle \pi_1^X,\pi_2^X \rangle} =_{\forget X} \iff (k,x) = (k,x').\]
  Thus, $\bar\lambda_X$ preserve abstract equalities. \qed
\end{proof}

\section{Linear Weighted Automaton (LWA)}\label{sec:KleisliEx:LWA}
We consider (linear) weighted automata (LWA) as coalgebras as studied in \cite{Jacobs_etal:trace_EM_KL}. LWA are modelled
as coalgebras of the endofunctor $\mats(1+\act\times\_)$, where
$\mathbb F$ is a field and $\mats$ is the multiset monad. The set of functions $\mathbb F^X$ having finite support is $\mats X$ and on arrows it is given by $\mats f(\tau)(y)=\sum_{x\in f^{-1}(y)}\tau(x)$ (for $X \rTo^f Y\in\set$). We write $x\term_s$ and $x \step{a,s'}x'$ whenever $\alpha(x)(\bullet)=s$ and $\alpha(x) (a,x')=s'$ for a given LWA $X \rTo^\alpha \mats (\act\times X + 1)$.

\noindent
Recall the language of a given LWA $\alpha$ starting from a state $x\in X$ is an inductively defined function $\fseq\act\rTo^{\trace (x)}\mathbb F$ described below, where $a\in\act$, $w\in\fseq\act$, and $\eseq$ is the empty word.
\[
\trace (x) (\eseq) = \alpha(x)(\bullet), \qquad\quad \trace (x)(a w) = \sum \{s\cdot \trace(x')(w)\mid x\step {a,s} x'\} .
\]
Two states $x,x'\in X$ are \emph{(weighted) language equivalent} iff $\trace (x) = \trace (x')$. This coincides with
coalgebraic behavioural equivalence  in $\klcat\mats$ (see
\cite{Jacobs_etal:trace_EM_KL,kk:coalgebra-weighted-automata-journal}). Note that probabilistic automata can be encoded by letting $\mathbb F = \mathbb R$ and restricting the weights to the interval $[0,1]$.
\paragraph*{Predicate lifting for weighted automata} We employ a similar simplification as carried out in NDA, by working with the indexed category $\contrapower\circ (\forget{\_} \times \forget{\_})$. Now to apply techniques of Section~\ref{sec:indMorKleisli}, we fix $F=\act\times\_ + 1, G=\_^\act \times \mathbb F$, and recall from \cite[Section~7.3]{Jacobs_etal:trace_EM_KL}
the distributive laws $F\mats X \rTo^{\vartheta_X}
\mats FX$ and $\mats FX \rTo^{\gamma_X}
G\mats X$:
{\allowdisplaybreaks
\begin{align*}
  \quad \vartheta_X(\bullet)(\heartsuit)&=
  \begin{cases}
    1, & \text{if } \ \heartsuit=\bullet\\
    0, & \text{otherwise}.
  \end{cases}
  \quad \vartheta_X(a,\tau)(\heartsuit)=
  \begin{cases}
    \tau(x), & \text{if}\ \heartsuit=(a,x), \text{ for some $x\in X$}\\
    0, & \text{otherwise}.
  \end{cases}
  \\
  \gamma_X(\bar p)&=(\gamma_X^\act \bar p,\bar p(\bullet)), \ \text{where}\ \gamma_X^\act \bar p(a)(x) = \bar p(a,x) \ \text{(for $x\in X,\bar p\in\mats FX$)}.
\end{align*}
}
We know (from \cite{Jacobs_etal:trace_EM_KL}) that $\gamma$ is compatible with $\theta$ and $\mu$ (the multiplication of the monad $\mats$) in the sense of Lemma~\ref{lemma:theta&mu_compatible}.
Similar to NDAs, consider the following predicate liftings $\contrapower X \rTo^{\sigma_X^a,\sigma_X^s} \contrapower(X^\act\times \mathbb F)$ (for $a\in\act$ and $s\in \mathbb F$):
\[U\mapsto \{(p,s)\in X^\act\times \mathbb F \mid p(a) \in U\} \quad\text{and} \quad U \mapsto \{(p,s) \mid p\in X^\act\},\  \text{respectively.}\]
The proof of the following lemma is similar to the proof of Lemma~\ref{lem:predlift_nda}. And thanks to Theorem~\ref{th:lambda-indexed-morphism}, we know that $\inverse{\gamma}\circ \sigma^a$ and $\inverse{\gamma}\circ \sigma^s$ are valid predicate liftings.
\begin{lemma}\label{lem:predlift_gpa}
  The above mappings $\sigma_X^a,\sigma_X^s$ (for $a\in\act,s\in\mathbb F$) are indexed morphisms.
\end{lemma}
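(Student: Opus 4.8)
The plan is to exhibit each family $(\sigma_X^a)_X$ and $(\sigma_X^s)_X$ as a natural transformation $\contrapower \Rightarrow \contrapower \circ G$ for the endofunctor $G = \_^\act \times \mathbb F$, following verbatim the template of Lemma~\ref{lem:predlift_nda}. Since the fibres of $\contrapower$ are posets, an indexed morphism here amounts to a monotone map in each fibre together with commutativity of the naturality square $\sigma_X \circ \inverse f = \inverse{(Gf)} \circ \sigma_Y$ for every function $X \rTo^f Y$. Monotonicity is immediate: $\sigma_X^a$ is order-preserving since enlarging $U$ enlarges $\{(p,s) \mid p(a) \in U\}$, while $\sigma_X^s$ is constant in its argument and hence trivially monotone. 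Thus the entire content of the lemma lies in the naturality square, which I would settle by a direct element-chase.

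For $\sigma^a$ I first record that $G$ sends $X \rTo^f Y$ to $Gf = f^\act \times \id{\mathbb F}$ with $(Gf)(p,s) = (f \circ p, s)$. Then, for any $V \subseteq Y$,
\[
\inverse{(f^\act \times \mathbb F)}\, \sigma_Y^a V = \{(p,s) \in X^\act \times \mathbb F \mid (f \circ p, s) \in \sigma_Y^a V\} = \{(p,s) \mid f(p(a)) \in V\},
\]
and rewriting the condition via $f(p(a)) \in V \iff p(a) \in \inverse f V$ yields $\{(p,s) \mid p(a) \in \inverse f V\} = \sigma_X^a \inverse f V$, which is the desired identity.

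For $\sigma^s$ the check is shorter still, as the lifting ignores its input: $\sigma_Y^s V = \{(q,s) \mid q \in Y^\act\}$ independently of $V$. Pulling back along $Gf$ gives $\{(p,s') \mid (f \circ p, s') \in \{(q,s) \mid q \in Y^\act\}\}$, and since the sole constraint imposed is $s' = s$ (the first coordinate $f \circ p$ lies in $Y^\act$ automatically), this collapses to $\{(p,s) \mid p \in X^\act\} = \sigma_X^s \inverse f V$. Hence naturality holds for $\sigma^s$ too.

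I expect no real obstacle here; the only point requiring care — and the sole place one could slip — is correctly identifying the functorial action $Gf = f^\act \times \id{\mathbb F}$ and tracking the differing roles of the $\mathbb F$-coordinate, which is a dummy variable in $\sigma^a$ but a fixed parameter in $\sigma^s$. Because $\mathbb F$ enters only passively, as the identity on the second factor, every step transports from the proof of Lemma~\ref{lem:predlift_nda} upon replacing the two-element set $2$ by $\mathbb F$ and the fixed acceptance value $1$ by the weight $s$. I therefore anticipate that the proof reduces to the two short calculations displayed above, of exactly the same shape as in the nondeterministic case.
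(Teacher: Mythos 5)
Your proof is correct and coincides with the paper's own argument: the paper simply notes that the proof is ``similar to the proof of Lemma~\ref{lem:predlift_nda}'', i.e.\ the same element-chase of the naturality squares with $2$ replaced by $\mathbb F$ and the acceptance value $1$ by the fixed weight $s$, which is exactly what you carry out. Your additional remark on fibrewise monotonicity (trivial for both liftings) is a harmless strengthening that the paper leaves implicit.
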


\begin{restatable}{lemma}{LWApredlift}\label{lem:LWA_predlift}
  For any $\mathbb U \subseteq\mats X$ we find that $\lambda_X^a(\mathbb U)=\{\bar p \in \mats (\act\times X +1) \mid \gamma_X^\act \bar p(a) \in \mathbb U\}$ and $\lambda_X^s(\mathbb U) =\{\bar p \in \mats( \act\times X +1 ) \mid \bar p(\bullet) =s \}$.
\end{restatable}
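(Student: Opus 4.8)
The plan is to unfold the defining composition of the two predicate liftings and evaluate it with the explicit formulas for $\sigma^a,\sigma^s$ and $\gamma$, exactly as was carried out for the analogous NDA liftings following Lemma~\ref{lem:predlift_nda}. Recall that in the Kleisli category $\klcat{\mats}$ the forgetful functor sends an object $X$ to $\forget X = \mats X$, and that $\lambda^a,\lambda^s$ arise from $\sigma^a,\sigma^s$ through the recipe \eqref{eq:liftOfPlift}, i.e.\ $\lambda_X^a = \inverse{\gamma_X}\circ \sigma^a_{\mats X}$ and $\lambda_X^s = \inverse{\gamma_X}\circ\sigma^s_{\mats X}$, where $\gamma_X\colon \mats (\act\times X+1) \to (\mats X)^\act\times\mathbb F$ acts by $\gamma_X(\bar p) = (\gamma_X^\act\bar p,\bar p(\bullet))$.

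First I would treat $\lambda_X^a$. Fixing $\mathbb U\subseteq \mats X$, applying $\sigma^a_{\mats X}$ yields $\sigma^a_{\mats X}(\mathbb U) = \{(p,t)\in (\mats X)^\act\times\mathbb F\mid p(a)\in\mathbb U\}$, and it remains to take the inverse image along $\gamma_X$. A multiset $\bar p\in\mats(\act\times X+1)$ lies in $\inverse{\gamma_X}\sigma^a_{\mats X}(\mathbb U)$ precisely when $\gamma_X(\bar p)=(\gamma_X^\act\bar p,\bar p(\bullet))$ meets the membership condition defining $\sigma^a_{\mats X}(\mathbb U)$, which constrains only the first component evaluated at $a$; hence the condition collapses to $\gamma_X^\act\bar p(a)\in\mathbb U$, delivering the claimed formula $\lambda_X^a(\mathbb U)=\{\bar p\in\mats(\act\times X+1)\mid \gamma_X^\act\bar p(a)\in\mathbb U\}$.

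The computation for $\lambda_X^s$ is entirely parallel. Here $\sigma^s_{\mats X}(\mathbb U) = \{(p,s)\mid p\in(\mats X)^\act\}$ pins down the $\mathbb F$-component to the fixed value $s$ while imposing no condition on $p$, so $\gamma_X(\bar p)$ lands in this set exactly when its second component $\bar p(\bullet)$ equals $s$. This gives $\lambda_X^s(\mathbb U)=\{\bar p\in\mats(\act\times X+1)\mid \bar p(\bullet)=s\}$, as asserted.

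The argument is a routine unfolding with no substantive obstacle; the only point requiring mild care is notational, namely distinguishing the parameter $s$ of the lifting $\sigma^s$ from the bound $\mathbb F$-variable appearing in the second component of $\sigma^a$, and keeping track of which of the two components of $\gamma_X(\bar p)$ each lifting constrains.
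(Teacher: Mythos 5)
Your proposal is correct and follows essentially the same route as the paper's own proof: it unfolds $\lambda_X^\heartsuit = \inverse{\gamma_X}\circ\sigma^\heartsuit_{\mats X}$, substitutes the explicit formulas for $\sigma^a_{\mats X}(\mathbb U)$ and $\sigma^s_{\mats X}(\mathbb U)$, and computes the preimage along $\gamma_X(\bar p)=(\gamma_X^\act\bar p,\bar p(\bullet))$, observing that the first lifting constrains only $\gamma_X^\act\bar p(a)$ and the second only $\bar p(\bullet)$. The computation matches the paper's chain of equalities step for step, so there is nothing to add.
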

\begin{proof}
For any $\mathbb U \subseteq\mats X$ we find that
\begin{align*}
  \lambda_X^a(\mathbb U) =&\ \inverse{\gamma_X}\sigma_{\mats X}^a(\mathbb U)\\
  =&\ \inverse{\gamma_X}\{(p,s) \in (\mats X)^\act \times \mathbb F \mid p(a) \in \mathbb U\}\\
  =&\ \left\{\bar p \in \mats (\act\times X +1) \mid \gamma_X \bar p \in \{(p,s) \in (\mats X)^\act \times \mathbb F \mid p(a)\in \mathbb U\} \right\}\\
  =&\ \left\{\bar p \in \mats (\act\times X +1) \mid \gamma_X^\act \bar p(a) \in \mathbb U \right\}.%\\
  %=&\ \left\{\bar U \subseteq \act\times X +1 \mid \{x \mid (a,x)\in\bar U\} \in \mathbb U \right\}.
\end{align*}
Similarly, in the context of termination, we find (for each $s\in\mathbb F$):
{\allowdisplaybreaks
\begin{align*}
  \lambda_X^s(\mathbb U) =&\ \inverse{\gamma_X}\sigma_{\mats X}^s(\mathbb U)\\
  =&\ \inverse{\gamma_X}\{(p,s) \mid p\in (\mats X)^\act \}\\
  =&\ \left\{\bar p \in \mats( \act\times X +1 ) \mid \gamma_X \bar p \in \{(p,s) \mid p\in (\mats X)^\act\} \right\}\\
  =&\ \left\{\bar p \in \mats( \act\times X +1 ) \mid \bar p(\bullet) =s \right\}.
\end{align*}
}
The proof that $\lambda^a_X,\lambda^s$ preserves finite meet is similar to the Boolean case (cf.\thinspace Lemma~\ref{lem:predlift_nda}).
\end{proof}

%\end{lemma}
\noindent
Just like in our running example, the determinisation of an LWA $\alpha$ is the composition:
\[
\mats X \rTo^{\mats \alpha} \mats\mats (\act\times X + 1) \rTo^{\mu_{\act\times X + 1}} \mats(\act\times X +1) \rTo^{\gamma_X} (\mats X)^\act \times \mathbb F.
\]
More concretely, it maps a $p\in \mats X$ to a pair $(\hat p,s)$,
where $\hat p(a)(x') = \sum_{x\in X} p(x)\cdot \alpha(x) (a,x')$ and
$s = \sum_{x\in X} p(x)\cdot \alpha(x)(\bullet)$. In terms of SOS
rules, determinisation is given as follows:
\[
\infer{p \step a \hat p (a)}{p\in\mats X} \qquad
\infer{p \term_s}{p\in\mats X\quad s=\sum_{x\in X} p(x)\cdot \alpha(x)(\bullet)}
\]
%\vspace{-1cm}
\begin{restatable}{lemma}{LWAsepmod}\label{lem:LWA_sep_modalities}
  For any $\mathbb U\subseteq \mats X$ and $X\rTo^\alpha \mats(\act\times X+1)$, we have $\inverse{\forget{\alpha}}\lambda_X^a \mathbb U=\{p\in\mats X \mid p \step a \hat p (a) \implies \hat p(a) \in \mathbb U\}$ and $\inverse{\forget{\alpha}}\lambda_X^s \mathbb U=\{p\in\mats X\mid s=\sum_{x\in X}\ p(x)\cdot\alpha(x)(\bullet)\}$. %Moreover, the set $\Lambda = \{\lambda^a \mid a\in \act\} \cup \{\lambda^s \mid s\in \mathbb F\}$ is separating w.r.t. $A \times X +1$.
\end{restatable}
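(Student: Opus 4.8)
The statement is a routine unfolding of Lemma~\ref{lem:LWA_predlift} together with the observation that $\inverse{\forget{\alpha}}$ is reindexing along the $\set$-map underlying the Kleisli coalgebra $\alpha$, namely $\forget{\alpha}=\mu_{\act\times X+1}\circ\mats\alpha$. The plan is therefore to first make the action of $\forget{\alpha}$ explicit: evaluating the flattening $\mu$ after $\mats\alpha$ gives, for $p\in\mats X$ and $\heartsuit\in\act\times X+1$, the formula $\forget{\alpha}(p)(\heartsuit)=\sum_{x\in X} p(x)\cdot\alpha(x)(\heartsuit)$. In particular $\forget{\alpha}(p)(a,x')=\sum_{x} p(x)\cdot\alpha(x)(a,x')=\hat p(a)(x')$ and $\forget{\alpha}(p)(\bullet)=\sum_{x} p(x)\cdot\alpha(x)(\bullet)$, so $\forget{\alpha}$ is precisely the $\mats(\act\times X+1)$-valued factor of the determinisation recalled above, taken just before $\gamma_X$ is applied.

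For the first modality I would then chain equivalences: $p\in\inverse{\forget{\alpha}}\lambda_X^a\mathbb U$ iff $\forget{\alpha}(p)\in\lambda_X^a\mathbb U$, which by Lemma~\ref{lem:LWA_predlift} holds iff $\gamma_X^\act(\forget{\alpha}(p))(a)\in\mathbb U$. Using the definition $\gamma_X^\act\bar p(a)(x)=\bar p(a,x)$ together with the formula for $\forget{\alpha}(p)$ from the previous step identifies $\gamma_X^\act(\forget{\alpha}(p))(a)$ with $\hat p(a)$, so the set equals $\{p\mid\hat p(a)\in\mathbb U\}$. The only point that needs spelling out is why this matches the claimed box-style description $\{p\mid p\step{a}\hat p(a)\Rightarrow\hat p(a)\in\mathbb U\}$: the determinisation SOS rule produces an $a$-transition $p\step{a}\hat p(a)$ for \emph{every} $p\in\mats X$ unconditionally, so the premise of the implication is always satisfied and the two descriptions coincide.

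For the second modality the computation is shorter and never touches the determinised transition, since by Lemma~\ref{lem:LWA_predlift} the set $\lambda_X^s\mathbb U=\{\bar p\mid\bar p(\bullet)=s\}$ is independent of $\mathbb U$. Hence $p\in\inverse{\forget{\alpha}}\lambda_X^s\mathbb U$ iff $\forget{\alpha}(p)(\bullet)=s$, and substituting $\forget{\alpha}(p)(\bullet)=\sum_{x} p(x)\cdot\alpha(x)(\bullet)$ yields exactly $\{p\mid s=\sum_{x\in X} p(x)\cdot\alpha(x)(\bullet)\}$, as required.

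I do not expect any genuine obstacle: the argument is pure bookkeeping, entirely parallel to the NDA computation carried out after Lemma~\ref{lem:predlift_nda}. The single place to be careful is correctly treating $\forget{\alpha}$ as the Kleisli lift $\mu\circ\mats\alpha$ rather than $\alpha$ itself, and verifying the identity $\gamma_X^\act(\forget{\alpha}(p))(a)=\hat p(a)$; once that matching is established, both equalities drop out immediately.
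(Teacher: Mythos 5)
Your proposal is correct and follows essentially the same route as the paper's proof: both unfold $\forget{\alpha}$ as $\mu_{\act\times X+1}\circ\mats\alpha$, apply the concrete description of $\lambda_X^a$ and $\lambda_X^s$ from Lemma~\ref{lem:LWA_predlift}, and identify $\gamma_X^\act(\forget{\alpha}(p))(a)$ with $\hat p(a)$. Your explicit observation that the implication form $p\step{a}\hat p(a)\implies\hat p(a)\in\mathbb U$ is equivalent to $\hat p(a)\in\mathbb U$ because every state of the determinised system has an $a$-transition is a point the paper leaves implicit in its final equality, and spelling it out is a welcome clarification rather than a deviation.
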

\begin{proof}
  Let $\mathbb U\subseteq \mats X$ and a given LWA $X\rTo^\alpha \mats(\act\times X+1)$. Then we derive
  \begin{align*}
  \inverse{\forget{\alpha}}\lambda_X^a \mathbb U =\ & \inverse{\forget\alpha} \left\{\bar p \in \mats (\act\times X +1) \mid \gamma_X^\act \bar p(a) \in \mathbb U \right\}\\
  =\ & \left\{p\in\mats X \mid \forget{\alpha } p \in \left\{\bar p \in \mathcal D (\act\times X +1) \mid \gamma_X^\act \bar p(a) \in \mathbb U \right\}\right\}\\
  =\ & \left\{ p\in\mats X \mid \gamma_X^\act (\mu_{\act\times X+1} \mats\alpha (p)) a \in \mathbb U\right\}\\
  =\ & \left\{p\in\mats X \mid p \step a \hat p (a) \implies \hat p(a) \in \mathbb U\right\}.
\end{align*}
Similarly, we have a modality to handle termination that can be derived as follows:
\begin{align*}
  \inverse{\forget{\alpha}}\lambda_X^s \mathbb U =\ & \inverse{\forget\alpha} \left\{\bar p \in \mats( \act\times X +1 ) \mid \bar p(\bullet) =s \right\}\\
  =\ & \left\{p\in\mats X \mid \forget \alpha p \in \left\{\bar p \in \mats( \act\times X +1 ) \mid \bar p(\bullet) =s \right\} \right\}\\
  =\ & \{p\in\mats X\mid s=\sum_{x\in X}\ p(x)\cdot\alpha(x)(\bullet)\}.
\end{align*}
%Now it remains to show that $\Lambda$ is separating w.r.t. $\act\times X +1$. So let $\bar p,\bar p'\in \mats(\act\times X  + 1)$ with $\bar p \neq \bar p'$. Then we identify the following two cases:
%  \begin{enumerate}
%    \item Let $\bar p (\bullet) \neq \bar p' (\bullet)$. Consider an empty predicate $\emptyset$ with $r=\bar p (\bullet)$. Then we find $\bar p \in \lambda_X^r \emptyset$; however, $\bar p' \not \in \lambda_X^r \mathbb U$ (for any $\mathbb U\subseteq \mats X$) since the definition of $\lambda^r$ is independent of $\mathbb U$ and, moreover, $\bar p'(\bullet)\neq r$.
%    \item Let $\bar p (a,x) \neq \bar p' (a,x)$, for some $a\in\act,x\in X$. Consider the function
%        \[p(x') = \begin{cases}
%                   \bar p(a,x), & \mbox{if } x'=x \\
%                   0, & \mbox{otherwise}
%                 \end{cases}\]
%       together with a predicate $\mathbb U=\{p\}$. Clearly, $\bar p\in \lambda_X^a \mathbb U$; however, $\bar p'\not\in\lambda_X^a \mathbb U$ because $p(x)=\bar p (a,x) \neq \bar p' (a,x)$.\qedhere
%  \end{enumerate}
\end{proof}

\paragraph*{Weighted language equivalence through relation lifting}
Consider the relation lifting $\bar\sigma$ of $G$ that maps a relation $R$ on $X$ to $\bar\sigma_X R$ defined as: $(p,s) \mathrel{\bar \sigma_X R} (p',s') \iff s=s' \land \forall_{a\in\act}\ pa \mathrel{R} p'a.
$
So \eqref{eq:liftofRlift} gives a relation lifting $\bar\lambda$ for $\bar F$. Concretely, $\bar\lambda$ maps a relation $R$ on $\mats X$ to a relation $\bar\lambda_X R$ on $\mats FX$ given as:
$
\bar p \mathrel{\bar \lambda_X R} \bar p' \ \iff \ \bar p (\bullet) =\bar p'(\bullet) \ \land \ \forall_{a\in\act}\ \gamma_X^\act \bar p (a) \mathrel R \gamma_X^\act \bar p' (a).
$
\begin{restatable}{theorem}{LWALangEquivFibred}\label{thm:LWALangEquivFibred}
  Let $X \rTo^\alpha FX\in\klcat{\mats}$ be an LWA. Then weighted language equivalence $\equiv_X\subseteq \mats X \times \mats X$ on the determinised system is a $\Psi(\alpha)\circ\bar\lambda_X$-coalgebra. %, i.e., $\equiv_X \subseteq \inverse{(\forget{\alpha}\times\forget{\alpha})} (\bar\lambda_X \equiv_X)$.
  Moreover,  $\equiv_X = \inverse{(\forget{f} \times \forget f)} \equiv_Y$ for any coalgebra homomorphism $(X,\alpha) \rTo^f (Y,\beta)$; thus, there is a behavioural conformance functor $\coalg{\bar F}{\klcat{\mats}} \rTo^{\1^{\bar\lambda}} \coalg{\bar F_\lambda}{\elements{\Psi}}$.
\end{restatable}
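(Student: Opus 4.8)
The plan is to mirror the proof of Theorem~\ref{thm:NDALangEquivFibred}, transporting each step to the weighted setting over the field $\mathbb F$. First I would record that $\bar\lambda$ preserves arbitrary intersections in each component: the predicate $\bar p \mathrel{\bar\lambda_X R} \bar p'$ is the conjunction of $\bar p(\bullet)=\bar p'(\bullet)$ with the family of conditions $\gamma_X^\act \bar p(a) \mathrel R \gamma_X^\act \bar p'(a)$, each monotone in $R$, exactly as for the NDA lifting in Lemma~\ref{lem:lambda_cofilter}; and since the forgetful functor of $\klcat{\mats}$ preserves products, the remaining maps composing $\bar\lambda$ are inverse images and hence meet-preserving. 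Consequently $\Psi(\alpha)\circ\bar\lambda_X = \inverse{(\forget{\alpha}\times\forget{\alpha})}\circ\bar\lambda_X$ is an intersection-preserving monotone operator on $\Psi X$, and by the Kleene fixpoint theorem its greatest fixpoint $\sim_X$ exists as $\bigcap_i Di$, where $Di = (\inverse{(\forget{\alpha}\times\forget{\alpha})}\circ\bar\lambda_X)^i(\top_X)$.

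Next I would unfold this chain. Writing $\hat p(a)(x')=\sum_x p(x)\cdot\alpha(x)(a,x')$ for the unique $a$-successor of a determinised state $p\in\mats X$ and $s_p=\sum_x p(x)\cdot\alpha(x)(\bullet)$ for its termination weight, the definition of $\bar\lambda$ together with $\forget{\alpha}=\mu\circ\mats\alpha$ gives, for every $i\geq 0$ (with $D0=\top_X$),
\[
p \mathrel{D(i+1)} p' \iff s_p = s_{p'} \ \land\ \forall_{a\in\act}\ \hat p(a) \mathrel{Di} \hat{p'}(a).
\]
Letting $p_w$ be the state reached from $p$ after reading $w$, an induction on $\#w$ then identifies $\sim_X$ with weighted language equivalence $\equiv_X$, characterised by $p\equiv_X p' \iff \forall_{w\in\fseq{\act}}\ s_{p_w}=s_{p'_w}$. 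For the inclusion $\equiv_X\subseteq\sim_X$ I would verify that $\equiv_X$ is itself a $\Psi(\alpha)\circ\bar\lambda_X$-coalgebra: $p\equiv_X p'$ gives $s_p=s_{p'}$ (the empty-word case) and $\hat p(a)\equiv_X\hat{p'}(a)$ for each $a$ (since $s_{(\hat p(a))_w}=s_{p_{aw}}$), so $\equiv_X\subseteq\Psi(\alpha)\bar\lambda_X(\equiv_X)$ and therefore $\equiv_X\subseteq\sim_X$. For the converse I would unfold $p\sim_X p'$ to $p\mathrel{D(\#w+1)}p'$ for each word $w$, which by the one-step characterisation forces $s_{p_w}=s_{p'_w}$; thus $\sim_X=\equiv_X$, proving the first assertion.

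For the second assertion I would extract from a homomorphism $(X,\alpha)\rTo^f(Y,\beta)$ the weighted transfer identity $\trace(x)(w)=\sum_y f(x)(y)\cdot\trace(y)(w)$, proved by induction on $w$ from the Kleisli equation $\bar F f\circ\alpha=\beta\circ f$; linearity of $\forget{f}=\mu_Y\circ\mats f$ then upgrades it to $s^X_{p_w}=s^Y_{(\forget{f} p)_w}$ for all $p\in\mats X$, whence $p\equiv_X p'$ iff $\forget{f} p\equiv_Y\forget{f} p'$, i.e.\ $\equiv_X=\inverse{(\forget{f}\times\forget{f})}\equiv_Y$. Since this says exactly that reindexing along homomorphisms preserves the indexed final object $\sim_X$ in the fibre over $(X,\alpha)$, Lemma~\ref{lemma:indexfinal} (applied to the indexed category of Proposition~\ref{prop:technicalresult}) yields the behavioural conformance functor $\coalg{\bar F}{\klcat{\mats}}\rTo^{\1^{\bar\lambda}}\coalg{\bar F_\lambda}{\elements{\Psi}}$. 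The main obstacle is the transfer identity: unlike the purely existential, set-theoretic transfer conditions of the NDA proof, here the homomorphism equation must be unwound into linear identities over $\mathbb F$, and the inductive step requires commuting $f$ through the sums defining $\hat p(a)$ and $s_p$, which is where the field arithmetic genuinely enters.
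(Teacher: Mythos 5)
Your proposal follows essentially the same route as the paper's proof: it identifies the greatest fixpoint of $\Psi(\alpha)\circ\bar\lambda_X$ as the intersection of the Kleene chain $Di$ (justified by componentwise intersection-preservation of the lifting, as in Lemma~\ref{lem:lambda_cofilter}), unfolds $D(i+1)$ via the SOS rules of the determinisation, and establishes the two inclusions $\equiv_X\subseteq\sim_X$ (post-fixpoint) and $\sim_X\subseteq\equiv_X$ (via $D(\#w+1)$) exactly as the paper does. The only difference is one of completeness in your favour: the paper dispatches the intersection-preservation and the homomorphism clause by appeal to the NDA case (Theorem~\ref{thm:NDALangEquivFibred}), whereas you spell out the weighted transfer identity and the appeal to Lemma~\ref{lemma:indexfinal} explicitly, which is a faithful transport of that argument.
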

\begin{proof}
  Let $X \rTo^\alpha FX\in\klcat{\mats}$ be an LWA. Just like in the proof of Theorem~\ref{thm:NDALangEquivFibred}, we will show that weighted language equivalence $\equiv_X$ coincides with the relation $\sim$ defined as the limit of the diagram $\cat N \rTo^D \Psi X$ defined in the proof of Theorem~\ref{thm:NDALangEquivFibred}. Recall the determinisation of an LWA in terms of SOS rules:
  \[
\infer{p \step a p_a}{p\in\mats X} \qquad
\infer{p \term_s}{p\in\mats X\quad s=\sum_{x\in X} p(x)\cdot \alpha(x)(\bullet)},
\]
where $p_a(x') = \sum_{x\in X} p(x)\cdot \alpha(x) (a,x')$. We can extend this notation to words as follows:
$p_\eseq= p$ and $p_{aw}=(p_a)_w$. Thus, $p\equiv_X p' \iff \forall_{w,s}\ p_w \term_s \iff p'_w \term_s$.

  Note that for any $i>1$ and $p,p'\in\mats X$ we have
  \[
  p \mathrel{D(i+1)} p' \iff \forall_{s\in\mathbb F}\  (p \term_s \iff p' \term_s) \ \ \land\ \ p_a \mathrel{Di} p_a'.
  \]
  %Task\todo{TODO task}: show $\equiv$ coincides with $\sim$ and $\equiv_X=\inverse{(\forget{f}\times\forget f)} \equiv_Y$ for a coalgebra homomorphism $X\rTo^f Y\in\klcat{\mats}$.
  Clearly, $\equiv_X \subseteq \sim$ because $\equiv_X \subseteq \alpha^*\lambda_X \equiv_X$. For the other direction, suppose $p \sim p'$. Then for any word $w\in\fseq\act$ we have $p \mathrel{D(\#w + 1)} p'$ which results $p_w \term_s \iff p_w' \term_s$, for any $s\in\mathbb F$. Since $w$ was chosen arbitrarily, we find that $p \equiv_X p'$.
\end{proof}

\paragraph*{Logical characterisation of weighted language equivalence}
Just like language equivalence, its weighted variant is a `linear' notion of behavioural equivalence; thus, we set $\cat A=\set$ and $L=F$. As for the dual adjunction, first recall that $\klcat{\mats}$ is equivalent to the Eilenberg-Moore category $\emcat{\mats}=\veccat_{\mathbb{F}}$ \cite{Jacobs_etal:trace_EM_KL}, which is due to the fact that every vector space has a basis. In particular, the comparison functor $\klcat{\mats} \rTo^{\mathcal K} \emcat{\mats}$ is fully faithful and essentially surjective. As a result, $\mathcal K$ has an inverse $\mathcal{K}'$ that maps every vector space to its basis which can be arranged as $\mathcal K \dashv \mathcal K'$. Moreover, the category of vector spaces is related with the category of sets by a dual adjunction \cite{Jacobs_etal:trace_EM_KL} as indicated below.
  \begin{diagram}[width=6em,midway]
    \klcat{\mats} & \pile{\rTo^{\mathcal K} \\ {\scriptscriptstyle \perp} \\ \lTo_{\mathcal K'}} & \emcat{\mats} & \pile{\rTo^{\hom(\_,\mathbb F)} \\ {\scriptscriptstyle \perp} \\ \lTo_{\mathbb F^{\_}}} & \op\set.
  \end{diagram}
So $\sfunctor=\hom(\mathcal K\_,\mathbb F)$ and $\tfunctor=\mathcal K' \mathbb F^{\_}$. Note that for any set $X$ we have $\sfunctor X \cong \mathbb F^X$ and $\mathcal{K}' (\mathbb F^X) \cong X$. The former is true because $\hom(\mats X,\mathbb F) \cong \set(X,\mathbb{F})$, while the latter is because the basis of a vector space generated by a set is isomorphic to the set itself. So $\op\set$ is a subcategory of $\klcat{\mats}$ and Theorem~\ref{thm:KleisliQu} ensures that $\Eq\circ\tfunctor$ has a left adjoint $\bar\sfunctor$. Finally, $\delta_X$ (for any set $X$) should be a natural transformation of type $FTX = FX\rTo^{\delta_X} FX=TLX$. Therefore, we simply take $\delta_X$ to be an identity arrow on $FX$ in $\klcat{\mats}$. Lastly, the theory map $X \rTo^{\theory{\_}_X} \tfunctor\fseq\act\in\klcat{\mats}$ is given by the trace function $\trace$.
\begin{restatable}{theorem}{LogicalWLang}\label{thm:logicalWLang}
  The above defined map $\theory{\_}$ is a theory map for a given LWA. As a result, the logic $(L,\delta)$ is both adequate and expressive for weighted language equivalence.
\end{restatable}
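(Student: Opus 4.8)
The plan is to prove the statement in two stages. First I would show that the trace map $\trace$ really is the abstract theory map $\theory{\_}$ associated with a given LWA, and then I would deduce adequacy and expressivity by instantiating Theorem~\ref{thm:fibredbehequiv->logequiv}.

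For the first stage, recall that the theory map $X \rTo^{\theory{\_}_X} \tfunctor\mathcal A$ is by definition the unique morphism making the right-hand square of \eqref{eq:semantics} commute, i.e. $\tfunctor h\circ\theory{\_}_X=\delta_{\mathcal A}\circ F\theory{\_}_X\circ\alpha$; equivalently, its transpose is the unique $L$-algebra homomorphism $\semantics{\_}_X\colon\fseq\act\to\sfunctor X\cong\mathbb F^X$ out of the initial algebra $(\fseq\act,h)$. I would pass through the comparison equivalence $\mathcal K$ and the $\set$--$\veccat_{\mathbb F}$ duality to make the identifications $\sfunctor X\cong\mathbb F^X$ and $\mathcal K\tfunctor\mathcal A\cong\mathbb F^{\fseq\act}$ explicit, under which $\theory{\_}_X$ is the linear map sending a formal combination of states to the corresponding combination of weighted languages. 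Since $\delta=\id{}$ and $h$ is given by $h(a,w)=aw$, $h(\bullet)=\eseq$, inserting the one-step behaviour of $\alpha$ unfolds the defining equation into exactly the two clauses $\semantics{\eseq}_X(x)=\alpha(x)(\bullet)$ and $\semantics{aw}_X(x)=\sum\{s\cdot\semantics{w}_X(x')\mid x\step{a,s}x'\}$. These are precisely the recursive equations defining $\trace$, so by the initiality of $(\fseq\act,h)$ the theory map must be $\trace$.

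For the second stage I would simply verify the hypotheses of Theorem~\ref{thm:fibredbehequiv->logequiv}. Its standing hypotheses are the assumptions of Theorem~\ref{thm:behequiv->fibredcoalg}: Assumption~\ref{assum:product} holds because $\klcat{\mats}$ has binary products, and Assumption~\ref{assum:presEq} holds because the relation lifting $\bar\lambda$ preserves abstract equality (the LWA analogue of the Boolean computation, already used in Theorem~\ref{thm:LWALangEquivFibred}). Adequacy requires a left adjoint $\bar\sfunctor$ of $\bar\tfunctor=\Eq\circ\tfunctor$, which is exactly what Theorem~\ref{thm:KleisliQu} provides, since $\op\set$ is a reflective subcategory of $\klcat{\mats}$ with coequalisers. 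Expressivity requires $\forget{\delta_{\mathcal A}}$ to be injective; as $\delta$ is the identity natural transformation in $\klcat{\mats}$, its component at $\mathcal A$ is an identity, hence injective. Combining these with Theorem~\ref{thm:LWALangEquivFibred}, which identifies coalgebraic behavioural equivalence in $\klcat{\mats}$ with weighted language equivalence $\equiv_X$, and with the first stage, which identifies equality of theories with equality of traces, we conclude that $(L,\delta)$ is adequate and expressive for weighted language equivalence.

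The hard part will be entirely in the first stage: threading the abstract theory-map diagram through the two nested adjunctions ($\mathcal K\dashv\mathcal K'$ and the $\set$--$\veccat_{\mathbb F}$ duality) while keeping all identifications coherent, and in particular confirming that taking $\delta=\id{}$ is genuinely the mate of the intended one-step interpretation, so that the unfolded equation reproduces the $\trace$ recursion rather than a twisted variant. A further subtlety is that a weighted language in $\mathbb F^{\fseq\act}$ may have infinite support, so the theory map is safest handled as a linear map into the full dual space $\mathbb F^{\fseq\act}$ rather than as a finite-support Kleisli arrow; once this bookkeeping is in place, the remaining work is the routine induction on words above together with the four hypothesis checks, none of which is deep.
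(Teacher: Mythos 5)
Your proposal is correct and takes essentially the same route as the paper: the paper also identifies $\trace$ with $\theory{\_}_X$ by verifying the defining square — since $\delta_{\fseq\act}$ is the identity, it checks the coalgebra-homomorphism condition $\tfunctor h\circ\theory{\_}_X=\bar F\theory{\_}_X\circ\alpha$ directly in $\klcat{\mats}$ by matrix multiplication, with uniqueness by structural recursion, which is precisely the mate of your transposed verification via initiality of $(\fseq\act,h)$ — and then obtains adequacy from Theorem~\ref{thm:fibredbehequiv->logequiv} and expressivity from injectivity of $\forget{\delta_{\fseq\act}}$, exactly the hypothesis checks you list. Your added care about infinite-support weighted languages and the coherence of the identifications through $\mathcal K\dashv\mathcal K'$ is legitimate bookkeeping that the paper's proof glosses over rather than a divergence in method.
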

\begin{proof}
  We show the square associated with theory map is commutative, i.e., $\tfunctor h \circ \theory{\_}_X = \delta_{\fseq{\act}} F\theory{\_}_X \circ \alpha$; the uniqueness follows from structural recursion. Since $\delta_{\fseq\act}$ is identity arrow, so it suffices to show that $\theory{\_}_X$ is a coalgebra homomorphism; $\tfunctor h \circ \theory{\_}_X = \bar F\theory{\_}_X \circ \alpha$. Note that we follow \cite{kk:coalgebra-weighted-automata-journal} by using matrix multiplication to denote the composition of arrows in $\klcat{\mats}$. Also, for a function $Y \rTo^f X$, $\mathcal{K}' (\mathbb{F}^f)$ corresponds to unit matrix; namely, the cell $(\delta_x,\delta_y)$ in $\mathbb {F}^f$ corresponds to $1$ if $F^f(\delta_x) = \delta_x \circ f = \delta_y$; $0$ otherwise. So concretely, $\tfunctor f$ is a function defined as follows: $\tfunctor f x y = 1$ if $fy=x$; 0, otherwise.
  {\allowdisplaybreaks
  \begin{align*}
    F\theory{\_}_X\circ \alpha (x,(a,w)) =&\ \sum_{ (a',x')\in\text{supp}(\alpha x)} \alpha(x,(a',x')) \cdot \bar F\theory{\_}_X((a',x'),(a,w))\\
    =&\ \sum_{(a,x')\in\text{supp}(\alpha x)} \alpha(x,(a,x')) \cdot \theory{x'}(w)\\
    =&\ \theory{x}_X(aw)\\
    =&\ \sum_{w'=aw} \theory{x}_X w'\cdot \tfunctor h(w',(a,w))\\
    =&\ \tfunctor h \circ \alpha (x,(a,w)).
  \end{align*}
  }
  Adequacy follows from Theorem~\ref{thm:fibredbehequiv->logequiv}. Furthermore, expressivity follows since $\forget{\delta_{\fseq\act}}$ is injective.
\end{proof}
\section{Eilenberg-Moore categories}\label{sec:EMcats}
Although, in this paper, there are no case studies (see \cite{technicalreport} for predicate liftings in generalised Moore automata) worked out in the setting of Eilenberg-Moore categories, %\todo{Is this presentation fine? I felt like adding few words on Eilenberg-Moore categories since we have some space available.},
it is still worthwhile to report that Eilenberg-Moore categories (at least when $T$ is finitary)
%  In contrast Eilenberg-Moore categories\footnote{Due to the lack of space, we left out case studies of coalgebras living in an Eilenberg-Moore category.}
are more well-behaved than Kleisli categories in satisfying the assumptions of this paper. First it is straightforward to show that every predicate lifting $F$ induces a predicate lifting of $\bar F$. %Throughout this section, let $\Phi$ and $\Psi$ be as in the Kleisli case, i.e., $\Phi=\contrapower(\forget{\_})$ and $\Psi=\contrapower(\forget{\_}\times\forget{\_})$.
\begin{proposition}\label{prop:EMCAT_indexedcat}
  Let $\bar F$ be a lifting of $F$ to $\emcat{T}$ with $\cat C \rTo^T \cat C$. Then,
  \begin{enumerate}
    \item every indexed category on $\cat C$ induces an indexed category $\emcat{T}$ by composing with $\forget{\_}$.
    \item every indexed morphism $\Phi \rTo \Phi F$ induces an indexed morphism of type $\Phi \forget{\_} \rTo \Phi \forget{\_} \bar F$.
  \end{enumerate}
\end{proposition}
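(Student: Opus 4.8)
The plan is to recognise both parts as routine consequences of functor composition and whiskering, the only genuine ingredient being the defining property of an Eilenberg--Moore lifting: $\forget{\_}\circ\bar F = F\circ\forget{\_}$ as functors $\emcat T\to\cat C$ (equivalently, $\forget{\bar F(A,a)}=F\forget{(A,a)}$ on objects and $\forget{\bar Ff}=F\forget f$ on morphisms).

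For part (1) I would simply note that an indexed category on $\cat C$ is by definition a $\ccat$-valued presheaf $\op{\cat C}\to\ccat$, and that precomposing $\Phi$ with the opposite forgetful functor $\op{\forget{\_}}\colon \op{\emcat T}\to\op{\cat C}$ yields a functor $\Phi\circ\op{\forget{\_}}\colon \op{\emcat T}\to\ccat$. Since a composite of functors is again a functor, this is an indexed category on $\emcat T$, and no further verification is needed.

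For part (2) I would define the lifted indexed morphism $\bar\lambda$ by whiskering $\lambda$ with $\forget{\_}$, that is $\bar\lambda_{(A,a)} := \lambda_{\forget{(A,a)}}$. By the lifting equation on objects, its codomain $\Phi F\forget{(A,a)}$ equals $\Phi\forget{\bar F(A,a)}$, so $\bar\lambda_{(A,a)}$ has exactly the required type $\Phi\forget{(A,a)}\to\Phi\forget{\bar F(A,a)}$. The one thing left to check is naturality: for a morphism $f\colon(A,a)\to(B,b)$ in $\emcat T$, the two vertical legs of the naturality square for $\bar\lambda$ are $\Phi\forget f$ and $\Phi\forget{\bar Ff}$; using $\forget{\bar Ff}=F\forget f$, this square becomes precisely the naturality square for $\lambda$ at the $\cat C$-morphism $\forget f$, which commutes because $\lambda$ is a natural transformation. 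Hence $(\bar F,\bar\lambda)$ is an indexed morphism of the stated type.

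I do not expect a real obstacle here: the statement is essentially a bookkeeping exercise, and the whole content reduces to the lifting identity together with the fact that whiskering a natural transformation with a functor again gives a natural transformation. The only point demanding care is the variance—since $\Phi$ is contravariant, each application of $\Phi$ reverses the direction of arrows—so I would keep careful track of the $\op{}$-decorations to ensure the reduction lands exactly on the naturality square of $\lambda$ evaluated at $\forget f$, rather than on some wrongly-oriented arrow.
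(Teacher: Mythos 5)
Your proposal is correct, and it matches the paper's treatment: the paper states this proposition without proof (calling it straightforward in Section~\ref{sec:EMcats}), and the intended argument is precisely the one you give. Part (1) is precomposition of $\Phi$ with $\op{\forget{\_}}$, and part (2) is whiskering $\lambda$ with $\forget{\_}$, where the lifting identity $\forget{\_}\circ\bar F = F\circ\forget{\_}$ gives the required codomain $\Phi\forget{\bar F(A,a)}=\Phi F\forget{(A,a)}$ and reduces the (contravariant) naturality square for $\bar\lambda$ at $f$ to the naturality square for $\lambda$ at $\forget f$, exactly as you verified.
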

  Second, %. As a result, $\emcat{T}$ is cocomplete  and %
  the adjunction $\Qu\dashv\Eq$ does exist when $T$ is finitary and $\Phi,\Psi$ are instantiated as in the Kleisli case, i.e., $\Phi=\contrapower(\forget{\_})$ and $\Psi=\contrapower(\forget{\_}\times\forget{\_})$. Intuitively, $\Qu$ creates quotients w.r.t. the smallest congruence relation generated by a relation on the underlying algebra. Formally, for an algebra $(X,h)\in\emcat T$, let $\congset{X,h}$ denote the poset of congruences on $X$ ordered by $\subseteq$. %Thus, we have the inclusion
  %$
%  \congset{X,h} \rMono^{\jmath} \subobj{\set}{X \times X}.
%  $
  Next we claim that arbitrary meets exist in $\congset{X,h}$ since $\emcat{T}$ is complete. Therefore, we can construct %the inclusion functor $\jmath$ has a left adjoint $\mathcal C$ given as follows:
  \[
  \mathcal C(R)= \bigwedge \{ (R',h') \mid (R',h')\in\congset{X,h} \land R \subseteq R' \}\enspace.
  \]
  Now the algebraic structure on $\mathcal C(R)$, i.e., a function
  $T\mathcal C(R) \rTo \mathcal C(R)$ exists due to the universal
  property of limits, which we denote simply $h_R$. So we take
  $\Qu((X,h),R)$ to be the coequaliser of
  $
  (\mathcal C(R),h_R) \rTo^{p_i} (X,h)
  $
  in $\emcat{T}$.
%Throughout this section, let $\Phi$ and $\Psi$ be as in the Kleisli case, i.e., $\Phi=\contrapower(\forget{\_})$ and $\Psi=\contrapower(\forget{\_}\times\forget{\_})$.
\begin{theorem}
  Suppose $T$ has a finite presentation, then the mapping $\Qu$ is left adjoint to $\Eq$ functor.
\end{theorem}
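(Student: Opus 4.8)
The plan is to establish the defining hom-set bijection of the adjunction $\Qu\dashv\Eq$ directly from the universal property of the coequaliser used to build $\Qu$; that is, for an object $((X,h),R)\in\elements{\Psi}$ and an algebra $(Y,k)\in\emcat{T}$ I would produce a bijection
\[
\emcat{T}\big(\Qu((X,h),R),(Y,k)\big)\ \cong\ \elements{\Psi}\big(((X,h),R),\Eq(Y,k)\big),
\]
natural in both arguments. First I would unfold the right-hand side. Since the forgetful functor $\forget{\_}\colon\emcat{T}\to\set$ preserves products, the abstract equality $\equiv_Y$ coincides with $=_{\forget Y}$, so a morphism $((X,h),R)\to\Eq(Y,k)$ is precisely an algebra homomorphism $f\colon(X,h)\to(Y,k)$ with $R\subseteq\inverse{(\forget f\times\forget f)}(=_{\forget Y})$ --- equivalently, $f$ identifies every $R$-related pair. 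On the left-hand side, $\Qu((X,h),R)$ is by definition the coequaliser of the two projections $(\mathcal C(R),h_R)\rightrightarrows(X,h)$, where $\mathcal C(R)$ is the smallest congruence on $(X,h)$ containing $R$; its universal property gives a bijection with the homomorphisms $f$ satisfying $f\circ p_1=f\circ p_2$. Since $p_1,p_2$ are the projections of $\mathcal C(R)\subseteq X\otimes X$, this last condition unwinds to ``$f$ identifies every $\mathcal C(R)$-related pair''.

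It then remains to match the two conditions on $f$. The implication from identifying $\mathcal C(R)$-pairs to identifying $R$-pairs is immediate from $R\subseteq\mathcal C(R)$. For the converse I would use that the kernel pair $\ker f=\{(x,x')\mid\forget f\,x=\forget f\,x'\}$ of an algebra homomorphism is a congruence on $(X,h)$: as a pullback it is created set-theoretically by $\forget{\_}$ and carries the induced algebra structure, and it is visibly an equivalence relation. Thus if $f$ identifies $R$-pairs then $R\subseteq\ker f$, so $\ker f$ is a congruence containing $R$; by minimality of $\mathcal C(R)$ among such congruences, $\mathcal C(R)\subseteq\ker f$, whence $f$ identifies every $\mathcal C(R)$-pair. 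Composing the two bijections yields the required correspondence, and its naturality in $(Y,k)$ is inherited from the universal property of the coequaliser while naturality in $((X,h),R)$ follows from functoriality of reindexing; so $\Qu\dashv\Eq$.

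The main obstacle is the well-definedness of $\Qu$, namely the existence of the coequaliser that defines $\Qu((X,h),R)$, and this is exactly where the hypothesis on $T$ enters: a finite presentation makes $\emcat{T}$ locally finitely presentable and in particular cocomplete, so the coequaliser of $(\mathcal C(R),h_R)\rightrightarrows(X,h)$ exists in $\emcat{T}$. Note that no further exactness is needed for the bijection itself, since ``$f\circ p_1=f\circ p_2$'' is equivalent to ``$f$ identifies $\mathcal C(R)$-pairs'' purely by the definition of the projections. I would also record that $\mathcal C(R)$ is itself well-defined: it is the intersection of the (nonempty, since $X\otimes X$ qualifies) family of all congruences on $(X,h)$ containing $R$, and this intersection is again a congruence because $\forget{\_}$ creates the limit and an intersection of equivalence relations that are subalgebras of $X\otimes X$ is again an equivalence relation and a subalgebra.
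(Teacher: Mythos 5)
Your proof is correct, but it diverges from the paper's at the decisive step. Both arguments share the same skeleton: unfold a morphism into $\Eq(Y,k)$ as an algebra map $g$ with $R\subseteq\inverse{(\forget g\times\forget g)}(=_{\forget Y})$, get the easy direction by composing with the coequaliser map $q_R$, and reduce the converse to showing $g\circ p_1=g\circ p_2$ for the projections of $\mathcal C(R)$. Where the paper proves this last point, it unfolds the finite presentation $\mathbb T=(\Sigma,E)$ explicitly: it rebuilds $\mathcal C(R)=\bigcup_i R_i$ as an iterated closure (reflexive--symmetric closure, relational composition, and closure under the $\Sigma$-operations) and then shows by induction on $i$ that $t\mathrel{R_i}t'$ implies $\mathbb T\vdash gt=gt'$, using that $g$ is a homomorphism in the closure-under-operations case. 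You replace this syntactic induction by an abstract minimality argument: the kernel pair $\ker g$ is a congruence on $(X,h)$ (it is a limit, created set-theoretically by $\forget{\_}$, hence a subalgebra of the product, and visibly an equivalence relation), it contains $R$ by hypothesis, so $\mathcal C(R)\subseteq\ker g$ by the meet description $\mathcal C(R)=\bigwedge\{(R',h')\in\congset{X,h}\mid R\subseteq R'\}$ that the paper itself sets up before the theorem. This buys you two things: the argument is shorter and syntax-free, and it localises the finite-presentation hypothesis to a single point, namely the existence of the coequaliser defining $\Qu$ (via cocompleteness of $\emcat{T}$ for finitary $T$ --- in fact Linton's theorem gives cocompleteness for any monad on $\set$, so your route shows the hypothesis can likely be weakened). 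The paper's syntactic route is heavier but yields a concrete, computable description of the generated congruence, which has independent value; your route proves the adjunction with less machinery. One small point to be careful about, which you implicitly handle: you need $\ker g$ to be a congruence in whatever sense $\congset{X,h}$ intends (equivalence relation carrying a subalgebra structure of $X\otimes X$), and your pullback argument does establish exactly that, so there is no gap.
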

\begin{proof}
  We need to show that the following correspondence holds.
  \[
  \infer={
   \infer={(X,h) \rTo^g (Y,k)\in\emcat T \land R\subseteq \inverse{(g\times g)}=_Y}
   {
   (X,h,R) \rTo^{g} \Eq (Y,k) \in\elements{\Psi}
   }
   }
   {\Qu(X,h,R) \rTo^{f} (Y,k) \in \emcat{T}}
  \]
  The direction from top to bottom is easy. Let $q_R$ be the coequaliser map of type $(X,h) \rTo^{q_R} \Qu(X,h,R)$. Then, we take $g$ to be the arrow $f\circ q_R$.

  For the converse, let $(X,h) \rTo^g (Y,k)\in\emcat T$ be an algebra map such that $R\subseteq \inverse{(g\times g)}=_Y$. Now it suffices to show that $g\circ p_1 = g\circ p_2$ because the universal property of coequaliser then gives the unique arrow $f$.

  To prove this we recall the construction of the congruence $\mathcal C(R)$ generated by $R$ when $T$ has a finite presentation $\mathbb T=(\Sigma,E)$, where $\Sigma$ is a set of function symbols with finite arity and $E$ is a set of equations.
  \begin{itemize}
    \item Define $R_0$ the reflexive and symmetric closure of $R$.
    \item Define $R_{i+1} = R_i \circ R_i \cup \{(f(x_1,\cdots,x_n),f(x_1',\cdots,x_n')) \mid \forall_{j}\  x_j R_i x_j' \}$, where $f$ is an $n$-ary operator in $\Sigma$.
    \item $\mathcal C (R) = \bigcup_{i} R_i$.
  \end{itemize}
  Now by induction we show that $\forall_i\ t \mathrel {R_i} t' \implies \mathbb T\vdash gt = gt'$. The base case is trivial. So assume the property is true for some $i\geq 0$. Let $t \mathrel {R_{i+1}} t'$. Then we identify two cases: either $t \mathrel {R_{i} \circ R_i} t'$ or there is an $n$-ary operator and variables $x_j,x_j'$ (for $1\leq j \leq n$) such that $t= f(x_1,\cdots,x_n)$, $t'=f(x_1',\cdots,x_n')$ and $x_j \mathrel {R_i} x_j'$ (for all $j$). If the former is true then the property holds due to the inductive hypothesis and the transitivity of $\vdash$. So suppose the latter is true.  Then by the inductive hypothesis we find that $\mathbb T\vdash gx_j = gx_j'$ (for each $1\leq j \leq n$). Since $g$ is a homomorphism, so applying the context rule of $\vdash$ gives us
  \[
  \mathbb T \vdash gt=f(gx_1,\cdots,gx_n) = f(gx_1',\cdots,gx_n')=gt'. \qquad \qed
  \]
\end{proof}
\subsection*{Predicate liftings for generalised Moore machines}
In \cite{bonchi_etal_2016_dectraces,Silva_etal:General_Det}, the authors captured failure/ready/trace equivalences as behavioural equivalence for coalgebras living in the Eilenberg-Moore category introduced by a finitary power set monad $\fpower$. Interestingly, these linear equivalences can be seen as an instance of behavioural equivalence induced by a common `generalised Moore' endofunctor $F_S=S \times \_^\act$ \cite[Section~3]{bonchi_etal_2016_dectraces}, where $S\in\emcat{\fpower}$ is an arbitrary semi-lattice.

Given a map $X \rTo^o S\in\emcat{\power_\omega}$, then the generalised determinisation \cite{Bonchi_et_al:UpToFibrations} of an LTS $X \rTo^\alpha (\power_\omega X)^\act$ is a coalgebra $\fpower X \rTo^{(o',\alpha')} F_S (\fpower X)\in\emcat{\fpower}$, where the two functions are given as follows (cf.\thinspace\cite{bonchi_etal_2016_dectraces}):
\[
o' (U) = \bigvee_{x\in U} o(x) \quad\text{and}\quad \alpha' (U)(a) = \bigcup_{x\in U}\ \alpha(x)(a) \quad \text{(for $U\subseteq X,a\in\act$)}\enspace.
\]
In other words, the generalised determinisation induces a transition relation on $\fpower X$ and a predicate on $\fpower X \times S$ defined by the following SOS rules:
\[
\infer {U \step a U_\alpha} {U\subseteq X \quad U_\alpha=\{x'\mid \exists_{x\in U}\ x \step a x'\}} \qquad
\infer {U \term o'(U) }{U\subseteq X \qquad o' (U) = \bigvee_{x\in U} o(x)} \enspace.
\]
The next theorem is recalled from \cite{Silva_etal:General_Det} and captures the aforementioned linear equivalences as instances of behavioural equivalence.
%\begin{restatable}[\cite{Silva_etal:General_Det}]{theorem}{EMCATfinal}\label{thm:EMCAT_final}
\begin{theorem}
  Two states $U,U'\subseteq X$ in the coalgebra $\fpower X \rTo^{(o',\alpha')} F_S (\fpower X)\in\emcat{\fpower}$ are behaviourally equivalent iff they get mapped to a common point in the final coalgebra (which exists for the Moore functor $F_S$). Moreover,
  \begin{enumerate}
    \item For $S=2$ and $o$ as constant 1, behavioural equivalence coincides with trace equivalence.
    \item For $S=\fpower\fpower \act$ and $o(x)$ (for each $x\in X$) is the set of sets of actions that are refused by the state $x$, behavioural coincides with failure equivalence.
    \item For $S=\fpower\fpower \act$ and $o(x)$ (for each $x\in X$) is the singleton containing the set of actions enabled at $x$, behavioural coincides with ready equivalence.
  \end{enumerate}
\end{theorem}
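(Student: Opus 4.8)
The plan is to separate the abstract characterisation (the ``iff'') from the three concrete instantiations. For the ``iff'', I would observe that it is the standard coalgebraic fact that, once a final coalgebra exists, behavioural equivalence in the sense of Definition~\ref{def:behequiv} is precisely the kernel of the unique morphism into it. Write $(\Omega,\omega)$ for the final $F_S$-coalgebra in $\emcat{\fpower}$ and let $\fpower X \rTo^{b} \Omega$ be the unique homomorphism out of $(\fpower X,(o',\alpha'))$. The ``if'' direction is immediate: $b$ is itself a homomorphism, so any $U,U'$ with $\forget{b}U=\forget{b}U'$ are behaviourally equivalent by definition. For ``only if'', suppose a homomorphism $f\colon (\fpower X,(o',\alpha'))\to(Y,\beta)$ has $\forget{f}U=\forget{f}U'$; letting $b_Y$ be the unique homomorphism out of $(Y,\beta)$, finality forces $b=b_Y\circ f$, so $\forget{b}U=\forget{b_Y}(\forget{f}U)=\forget{b_Y}(\forget{f}U')=\forget{b}U'$. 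Hence the two notions agree, and the only non-formal ingredient here is the (asserted) existence of $\Omega$.

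For the three cases I would compute $\forget{b}$ explicitly. Since $F_S=S\times\_^\act$ is a Moore functor, unfolding $b$ against $\omega$ shows that $\forget{b}$ sends $U\in\fpower X$ to the function $\fseq\act\to S$, $w\mapsto o'(U_w)$, where $U_\eseq=U$ and $U_{aw}=(U_a)_w$ with $U_a=\{x'\mid\exists x\in U\ x\step a x'\}$. Thus $U$ and $U'$ are behaviourally equivalent iff $o'(U_w)=o'(U'_w)$ for every $w\in\fseq\act$, and it remains to match this against the textbook definitions. For $S=2$, $o\equiv 1$, one has $o'(U_w)=1$ iff $U_w\neq\emptyset$, so the condition reads ``$w$ is a trace of $U$ iff of $U'$'', i.e.\ \emph{trace equivalence}. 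For $S=\fpower\fpower\act$ with $o(x)$ the refusal sets of $x$, $o'(U_w)=\bigcup_{x\in U_w}o(x)$ collects all sets refused after $w$, giving equality of failure pairs $(w,A)$, i.e.\ \emph{failure equivalence}. For $S=\fpower\fpower\act$ with $o(x)=\{I(x)\}$ the singleton ready set $I(x)=\{a\mid\exists x'\ x\step a x'\}$, $o'(U_w)=\{I(x)\mid x\in U_w\}$ records the ready sets after $w$, i.e.\ \emph{ready equivalence}.

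The main obstacle is not the formal ``iff'' but checking that the join $\bigvee$ taken in the semilattice $S$ reproduces exactly the classical closures: that $o'(U)=\bigvee_{x\in U}o(x)$ computes ``refused by some state of $U$'' for failures and ``ready set of some state of $U$'' for readiness, and that the $\fpower$-algebra structure carried by $\Omega$ introduces no further identifications than the $\set$-level Moore behaviour. I would discharge this by verifying that $\forget{b}$ is a $\fpower$-algebra homomorphism (using $(U\cup U')_w=U_w\cup U'_w$ and $o'(A\cup B)=o'(A)\vee o'(B)$, so it factors through the free structure correctly) and by the induction on $w$ establishing $U_w=\{x'\mid\exists x\in U\ x\step{w}x'\}$. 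Since the statement is recalled from \cite{Silva_etal:General_Det,bonchi_etal_2016_dectraces}, one may alternatively discharge it by citation, appealing to generalised determinisation to transport the $\set$-level Moore semantics along $\forget{\_}$.
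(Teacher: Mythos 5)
Your proposal is correct and matches the intended argument: the paper states this theorem without its own proof, recalling it from \cite{Silva_etal:General_Det,bonchi_etal_2016_dectraces}, and your final-coalgebra computation --- behavioural equivalence as the kernel of the unique homomorphism $U\mapsto(w\mapsto o'(U_w))$ into the final Moore coalgebra $S^{\fseq\act}$, followed by instantiating $o'(U_w)=\bigvee_{x\in U_w}o(x)$ for the three choices of $S$ and $o$ --- is precisely the standard generalised-determinisation argument those references employ. The only simplification available: your worry about the $\fpower$-algebra structure on $\Omega$ is moot, since the final $F_S$-coalgebra in $\emcat{\fpower}$ is the $\set$-level Moore final coalgebra $S^{\fseq\act}$ with pointwise algebra structure (the forgetful functor creates the limit of the final sequence), so $\forget{b}$ is the $\set$-level behaviour map by construction and needs no separate homomorphism check.
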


Next we define two families of indexed morphisms $\contrapower \rTo^{\lambda^a} \contrapower \circ F_S$ (for each $a\in\act$) and $\contrapower \rTo^{\lambda^s} \contrapower \circ F_S$ (for each $s\in S$).
\begin{itemize}
  \item $\lambda^a_X U=\{(s',p)\in S\times X^\act \mid p(a)\in U\}$
  \item $\lambda^s_X U=\{(s,p) \mid p\in X^\act \}$, for $U\subseteq X$.
\end{itemize}
\begin{restatable}{proposition}{MooreIndexedMap}\label{prop:F_S_indexedmorphism}
  The above mappings $\contrapower \rTo^{\lambda^a,\lambda^s} \contrapower \circ F_S$ are indexed morphisms. %Moreover, the family $\Lambda=\{\lambda_\heartsuit\mid \heartsuit \in \act \cup S\}$ is separating w.r.t $\bar F_S$.
\end{restatable}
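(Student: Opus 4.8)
The plan is to unfold the definition of an indexed morphism (Definition~\ref{def:indexed_cat}) and verify naturality directly, exactly as in the proof of Lemma~\ref{lem:predlift_nda}. Since both the source and target indexed categories are $\contrapower$ over $\set$ and the underlying endofunctor is $G=F_S=S\times\_^\act$, a family $\contrapower \rTo^{\lambda} \contrapower F_S$ is an indexed morphism precisely when each component is monotone and the naturality square
\[
\lambda_X \circ \inverse f \;=\; \inverse{(F_S f)} \circ \lambda_Y
\]
commutes for every function $X \rTo^f Y \in \set$. Here reindexing is by inverse image, and the action of $F_S$ on morphisms is $F_S f = \id{S}\times f^\act$, i.e.\ $(s',p)\mapsto(s',f\circ p)$. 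I would establish this separately for the two families $\lambda^a$ and $\lambda^s$.

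First I would note monotonicity, which makes each $\lambda^a_X,\lambda^s_X$ a genuine arrow in the poset-fibre $\contrapower F_S X$. For $\lambda^a_X$ this is immediate, since enlarging $U$ enlarges $\{(s',p)\mid p(a)\in U\}$; and $\lambda^s_X$ is in fact constant, equal to $\{s\}\times X^\act$ regardless of $U$, hence trivially monotone.

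Next, for naturality of $\lambda^a$, I would take an arbitrary $V\subseteq Y$ and chase both composites. On one side, $\lambda^a_X(\inverse f V)=\{(s',p)\mid p(a)\in \inverse f V\}=\{(s',p)\mid f(p(a))\in V\}$. On the other side, using $F_S f(s',p)=(s',f\circ p)$, one computes $\inverse{(F_S f)}(\lambda^a_Y V)=\{(s',p)\mid (s',f\circ p)\in\lambda^a_Y V\}=\{(s',p)\mid (f\circ p)(a)\in V\}$, which is the same set. The verification for $\lambda^s$ is even simpler: both composites collapse to $\{s\}\times X^\act$ independently of $V$, because $\lambda^s$ ignores its argument and $F_S f$ preserves the $S$-component.

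The main obstacle, such as it is, is purely bookkeeping: one must keep the contravariance of $\contrapower$ straight, so that the naturality square reads $\lambda_X\circ\inverse f=\inverse{(F_S f)}\circ\lambda_Y$ and not its transpose, and one must use the correct action $F_S f=\id{S}\times f^\act$ on arrows. No deeper property of the monad $\fpower$ or of the semilattice $S$ is needed, since $S$ enters only as a passive product factor in both liftings; the computation mirrors the Boolean case $G=\_^\act\times 2$ of Lemma~\ref{lem:predlift_nda} with $2$ replaced by the fixed $S$.
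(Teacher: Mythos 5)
Your proof is correct and follows essentially the same route as the paper's: a direct element-wise verification of the naturality square $\lambda_X\circ\inverse{f}=\inverse{(F_S f)}\circ\lambda_Y$ for each family, with the observation that $\lambda^s$ is constant (equal to $\{s\}\times X^\act$) rendering its case trivial. The only difference is cosmetic: the paper's proof additionally records that $\lambda^a$ and $\lambda^s$ preserve arbitrary meets (useful later but not demanded by the statement), whereas you instead make explicit the routine monotonicity of the fibre components.
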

\begin{proof}
  Let $V\subseteq Y$ and $X \rTo^f Y\in\set$. Then we derive
  \begin{align*}
    \lambda_X^a\inverse{f}V =&\ \{(s,p) \mid p(a) \in \inverse f V\}\\
    =&\ \{(s,p) \mid f(pa) \in V\}\\
    =&\ \{(s,p) \mid (F_Sf)(s,p) \in \{(s,q) \mid qa\in V\} \}\\
    =&\ \inverse{(F_Sf)} \lambda_Y^a V.
  \end{align*}
  Moreover, $\lambda_X^a$ preserves arbitrary meets since
  \[
  \lambda_X^a \bigcap_{i\in I} U_i = S \times \{p \mid pa \in \bigcap_{i} U_i \} = \bigcap_{i} S \times \{p \mid pa\in U_i\}
  = \bigcap_{i} \lambda_X^a U_i.
  \]
  Likewise, we derive the related facts for $\lambda_X^s$.
  \begin{align*}
    \lambda_X^s\inverse{f}V =&\ \{(s,p) \mid p\in X^\act\}\\
    =&\ \{(s,p) \mid (s,f\circ p)\in \{s\} \times Y^\act\}\\
    =&\ \{(s,p) \mid (F_Sf)(s,p) \in \{s\} \times Y^\act\}\\
    =&\ \inverse{(F_Sf)} \lambda_Y^s V.
  \end{align*}
  Moreover, $\lambda_X^s$ preserves arbitrary meets since
  \[
  \lambda_X^s \bigcap_{i\in I} U_i = \{s\} \times X^\act = \bigcap_{i} ( \{s\} \times X^\act) = \bigcap_{i} \lambda_X^s U_i.
  \]
%  To establish separating property, let $t,t'\in \forget{\bar F_S X}= S \times X^\act$ such that $t\neq t'$, for some $X\in\emcat{\fpower}$. Then we distinguish the following two cases:
%  \begin{enumerate}
%    \item Let $t=(s,p)$ and $t=(s',p')$ with $s\neq s'$. We need to find a predicate $U\in \contrapower\forget{X}$ and a predicate lifting $\lambda\in\Lambda$ that distinguishes $t$ and $t'$. Clearly, $t\in\lambda_X^s \emptyset$, but $t'\not\in\lambda_X^s \emptyset$.
%    \item Let $t=(s,p)$ and $t=(s,p')$ with $p\neq p'$. Since $p\neq p'$, we find some $a\in\act$ such that $pa \neq p'a$. Let $U=\{pa\}$. Then we find $t\in\lambda_X^a U$ and $t'\not\in\lambda_X^a U$. \qedhere
%  \end{enumerate}
\end{proof}
As a result, thanks to Proposition~\ref{prop:EMCAT_indexedcat}, the mappings $\lambda^a,\lambda^s$ are also indexed morphisms of type $\contrapower\forget{\_} \rTo \contrapower\forget{\_}\bar F_s$. Recall the determinised systems
\[
\fpower X \rTo^{(o',\alpha')} F_S(\fpower X) \in \emcat{\fpower},
\]
which is induced by an LTS $X \rTo^\alpha (\fpower X)^\act$. We are now ready to derive the action and observation modalities on the determinised system.
\begin{restatable}{theorem}{EMCatsmod}\label{thm:EMCatsMod}
Consider the compositions for $\heartsuit\in\{a,s\}$:
\[
\contrapower \forget{ \fpower X} \rTo^{\lambda_{\fpower X}^\heartsuit} \contrapower \forget{ \bar F_S \fpower X} \rTo^{\inverse{\forget{(o',\alpha')}}} \contrapower \forget{ \fpower X}.\]
Let $\mathbb U$ be a predicate on $\fpower X$. Then, $\inverse{\forget{(o',\alpha')}} \lambda^a_{\fpower X} \mathbb U=\{U \subseteq X \mid U \step a U_\alpha \implies U_\alpha \in \mathbb U\}$ and $\inverse{\forget{(o',\alpha')}} \lambda^s_{\fpower X} \mathbb U=\{U \subseteq X \mid U \term s\}$.
\end{restatable}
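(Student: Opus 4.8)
The plan is to unfold both compositions by hand, exactly in the style of the action/observation modality computations carried out for NDAs (the derivation following Lemma~\ref{lem:predlift_nda}) and for LWAs (Lemma~\ref{lem:LWA_sep_modalities}). The single ingredient beyond the definitions of $\lambda^a,\lambda^s$ fixed by Proposition~\ref{prop:F_S_indexedmorphism} is the underlying function of the determinised coalgebra. Since $\bar F_S$ is an Eilenberg-Moore lifting of $F_S$, we have $\forget{\bar F_S \fpower X} = F_S\forget{\fpower X} = S \times (\fpower X)^\act$, and the forgetful functor sends the algebra homomorphism $(o',\alpha')$ to the function $\forget{(o',\alpha')}\colon U \mapsto (o'(U),\alpha'(U))$, where $o'(U)=\bigvee_{x\in U}o(x)$ and $\alpha'(U)(a)=U_\alpha$.

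For the action modality I would instantiate the formula $\lambda^a_{\fpower X}\mathbb U = \{(s',p)\in S\times(\fpower X)^\act \mid p(a)\in\mathbb U\}$ and then pull back along $\forget{(o',\alpha')}$:
\[
\inverse{\forget{(o',\alpha')}}\lambda^a_{\fpower X}\mathbb U = \{U\subseteq X \mid (o'(U),\alpha'(U))\in\lambda^a_{\fpower X}\mathbb U\} = \{U\subseteq X \mid U_\alpha\in\mathbb U\}.
\]
It then remains to match this with the implication form in the statement. Because the determinised dynamics is total, i.e., every $U$ fires a unique $a$-transition $U\step a U_\alpha$ for each $a\in\act$, the guard $U\step a U_\alpha$ always holds, so $\{U\subseteq X\mid U_\alpha\in\mathbb U\} = \{U\subseteq X \mid U\step a U_\alpha \implies U_\alpha\in\mathbb U\}$, which is the claimed description.

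For the observation modality I would instantiate $\lambda^s_{\fpower X}\mathbb U = \{(s,p)\mid p\in(\fpower X)^\act\}$ and pull back in the same way:
\[
\inverse{\forget{(o',\alpha')}}\lambda^s_{\fpower X}\mathbb U = \{U\subseteq X \mid o'(U)=s\} = \{U\subseteq X \mid U\term s\},
\]
where the last equality is exactly the observation SOS rule $U\term o'(U)$. I do not anticipate a genuine obstacle: the statement is a bookkeeping calculation once $\forget{(o',\alpha')}$ is spelled out, and the only point needing a moment of care is the passage to the implication form for $\lambda^a$, which is justified by totality of the determinised transition structure.
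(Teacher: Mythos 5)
Your proposal is correct and takes essentially the same route as the paper's proof: both simply unfold $\lambda^a_{\fpower X}$ and $\lambda^s_{\fpower X}$ and pull back along the underlying function of $(o',\alpha')$, using the SOS rules of the determinised system to rewrite the result. Your explicit appeal to totality of the determinised transitions to justify the implication form $\{U \mid U_\alpha\in\mathbb U\} = \{U \mid U\step a U_\alpha \implies U_\alpha\in\mathbb U\}$ is a point the paper leaves implicit in its final rewriting step, but it is the same argument.
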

\begin{proof}
Let $\mathbb U\subseteq \fpower X$. Then we derive the modality for action transition:
  \begin{align*}
  \inverse{\forget{(o',\alpha')}} \lambda^a_{\fpower X} \mathbb U
   =&\ \inverse{(o',\alpha')} \{(s',p)\in S\times (\fpower X)^\act \mid p(a)\in\mathbb U\}\\
   =&\ \{U\subseteq X \mid (o',\alpha') U \in \{(s',p)\in S\times (\fpower X)^\act \mid p(a)\in\mathbb U\} \}\\
   =&\ \{U \subseteq X \mid \alpha' (U) \in \mathbb U\}\\
   =&\ \{U \subseteq X \mid U \step a U_\alpha \implies U_\alpha \in \mathbb U\}.
\end{align*}
Likewise, we derive the modalities for observations living in $S$:
\begin{align*}
  \inverse{\forget{(o',\alpha')}} \lambda^s_{\fpower X} \mathbb U
   =&\ \inverse{(o',\alpha')} \{(s,p) \mid p\in (\fpower X)^\act \}\\
   =&\ \{U\subseteq X \mid (o',\alpha') U \in \{(s,p) \mid p\in (\fpower X)^\act \} \}\\
   =&\ \{U \subseteq X \mid o' (U) =s\}\\
   =&\ \{U \subseteq X \mid U \term s\}.&& \quad\qed
\end{align*}
\end{proof}
Furthermore, a well-known application of the Linton's theorem when the monad $T$ is finitary is that the category of Eilenberg-Moore algebras is cocomplete. So, in particular, the finite powerset monad $\fpower$ is finitary, thus, $\emcat{\fpower}$ has all coequalisers. Therefore, \ref{assum:quotient} is satisfied.
\end{document}